\documentclass{lmcs}
\pdfoutput=1
\usepackage[utf8]{inputenc}

\usepackage{lastpage}
\lmcsdoi{21}{2}{23}
\lmcsheading{}{\pageref{LastPage}}{}{}%
{Feb.~15,~2024}{Jun.~16,~2025}{}

\keywords{fully dynamic algorithm, enumeration delay, complexity trade-off, dichotomy, probabilistic databases}

\usepackage{hyperref}
\theoremstyle{plain} 


\usepackage{pgfplots}
\pgfplotsset{compat=1.14}
\usepackage{tikz-3dplot}
\usepackage{xspace}
\usepackage{tikz}
\usepackage{multirow}
\usepackage{marvosym}
\usepackage{enumitem}
\usepackage{amsthm}
\usepackage{amssymb}
\usepackage{mathtools}
\usepackage{amsfonts}
\usepackage{booktabs}
\usepackage{stmaryrd}

\pgfplotsset{compat=1.16}
\usepackage{tikz-3dplot}
\usepackage{xspace}
\usepackage{tikz}
\pgfdeclarelayer{background}
\pgfsetlayers{background,main}
\usetikzlibrary{patterns}
\usetikzlibrary{decorations.pathmorphing, decorations.markings}

\usepackage{xcolor,colortbl}

\definecolor{light-gray}{gray}{0.7.2}
\definecolor{goodgreen}{rgb}{0.1, 0.5, 0.1}
\definecolor{burntorange}{rgb}{0.8, 0.33, 0.0}

\newcommand{\vars}{\mathit{vars}}
\newcommand{\free}{\mathit{free}}

\newcommand{\atoms}{\mathit{atoms}}
\newcommand{\dep}{\textit{dep}}

\newcommand{\gyo}{\text{GYO}}

\newcommand{\bigO}[1]{\mathcal{O}(#1)}

\newcommand{\VO}{\mathsf{VO}}
\newcommand{\canonicalVO}{\mathsf{canonVO}}
\newcommand{\freeTopVO}{\mathsf{freeTopVO}}

\newcommand{\fw}{\mathsf{w}}
\newcommand{\dfw}{\delta}
\newcommand{\freetop}{\mathsf{free}\text{-}\mathsf{top}}
\newcommand{\inputtop}{\mathsf{input}\text{-}\mathsf{top}}
\newcommand{\acceff}{\mathsf{acc}\text{-}\mathsf{top}}

\newcommand{\linenumber}{\makebox[2ex][r]{\rownumber\TAB}}

\newcommand{\eps}{\epsilon}

\newcommand{\Dom}{\mathsf{Dom}}
\newcommand{\inst}[1]{\mathbf{#1}}
\newcommand{\tup}[1]{\mathbf{#1}}

\newcommand{\OMv}{\textsf{OMv}\xspace}

\newcommand{\anc}{\mathsf{anc}}

\newcommand{\signs}{\textit{sig}}

\newcommand{\schema}[1]{\mathsf{Sch}(#1)}

\newcommand{\veryshortarrow}[1][3pt]{\mathrel{%
   \hbox{\rule[\dimexpr\fontdimen22\textfont2-.2pt\relax]{#1}{.4pt}}%
   \mkern-4mu\hbox{\usefont{U}{lasy}{m}{n}\symbol{41}}}}

\newcommand{\shortarrow}[1][5pt]{\mathrel{%
  \hspace{1pt}  \hbox{\rule[\dimexpr\fontdimen22\textfont2-.2pt\relax]{#1}{.4pt}}%
   \mkern-4mu\hbox{\usefont{U}{lasy}{m}{n}\symbol{41}}}}

\newcommand{\floor}[1]{\left\lfloor #1 \right\rfloor}


\newcommand{\vt}{\tau}

\newcommand{\vos}{\Omega}
\newcommand{\ftvo}{\textsc{Access-Top}}
\newcommand{\delete}{\Delta}

\newcommand{\calH}{\mathcal{H}}
\newcommand{\calT}{\mathcal{T}}
\newcommand{\calR}{\mathcal{R}}
\newcommand{\calB}{\mathcal{B}}
\newcommand{\calA}{\mathcal{A}}
\newcommand{\calF}{\mathcal{F}}

\newcommand{\calD}{\mathcal{D}}
\newcommand{\calE}{\mathcal{E}}
\newcommand{\calS}{\mathcal{S}}
\newcommand{\calX}{\mathcal{X}}
\newcommand{\calY}{\mathcal{Y}}
\newcommand{\calV}{\mathcal{V}}
\newcommand{\calI}{\mathcal{I}}

\newcommand{\calZ}{\mathcal{Z}}

\newcommand{\calO}{\mathcal{O}}

\newcommand{\TAB}{\makebox[2.5ex][r]{}}%
\newcommand{\LET}{\textbf{let}\xspace}%
\newcommand{\IF}{\textbf{if}\xspace}%
\newcommand{\ELSE}{\textbf{else}\xspace}%
\newcommand{\FOREACH}{\textbf{foreach}\xspace}%
\newcommand{\AND}{\textbf{and}\xspace}%
\newcommand{\OR}{\textbf{or}\xspace}%
\newcommand{\RETURN}{\textbf{return}\xspace}%
\newcommand{\MATCH}{\textbf{switch}\xspace}%
\newcommand{\EOF}{\textbf{EOF}\xspace}%

\newcommand{\nop}[1]{}

\setlength {\marginparwidth}{2cm}
\usepackage{todonotes}

\nop{

}

\newcounter{magicrownumbers}
\newcommand\rownumber{\footnotesize\stepcounter{magicrownumbers}\arabic{magicrownumbers}}

\nop{
\theoremstyle{plain}                  
\newtheorem{theorem}[thm]{Theorem}

\newtheorem{proposition}[theorem]{Proposition}

}

\newtheoremstyle{cited}%
{.5\baselineskip\@plus.2\baselineskip
    \@minus.2\baselineskip}
{.5\baselineskip\@plus.2\baselineskip
    \@minus.2\baselineskip}
{\itshape}
{\parindent}
{}
{.}
{.5em}
{\textsc{\thmname{#1}} \thmnote{\normalfont#3}}

\begin{document}

\title[Conjunctive Queries with Free Access Patterns under Updates]{Conjunctive Queries with Free Access Patterns \\ under Updates}


\author[A.~Kara]{Ahmet Kara\lmcsorcid{0000-0001-8155-8070}}[a]
\author[M.~Nikolic]{Milos Nikolic\lmcsorcid{0000-0002-1548-6803}}[b]
\author[D.~Olteanu]{Dan Olteanu\lmcsorcid{0000-0002-4682-7068}}[c]
\author[H.~Zhang]{Haozhe Zhang\lmcsorcid{0000-0002-0930-1980}}[c]

\address{OTH Regensburg}	
\email{ahmet.kara@oth-regensburg.de}  

\address{University of Edinburgh}	
\email{milos.nikolic@ed.ac.uk}  

\address{University of Zurich}	
\email{dan.olteanu@uzh.ch, haozhe.zhang@uzh.ch}  





\begin{abstract}
  \noindent 
  We study the problem of answering conjunctive queries with free access patterns (CQAPs) under updates. 
  A free access pattern is a partition of the free variables of the query into input and output.
  The query returns tuples over the output variables given a tuple of values over the input variables.

  We introduce a fully dynamic evaluation approach that works for all CQAPs and is optimal for two classes of CQAPs. This approach recovers prior work on the dynamic evaluation of conjunctive queries without access patterns.

  We first give a syntactic characterisation of all CQAPs that admit constant time per single-tuple update and whose output tuples can be enumerated with constant delay given a tuple of values over the input variables.

  We further chart the complexity trade-off between the preprocessing time, update time and  enumeration delay for a class of CQAPs.
  For some of these CQAPs, our approach achieves optimal, albeit non-constant, update time and delay. This optimality is predicated on the Online Matrix-Vector Multiplication conjecture.
  
  We finally adapt our approach to the dynamic evaluation of tractable CQAPs over probabilistic databases under updates.
\end{abstract}

\maketitle

\section{Introduction}
\label{sec:introduction}

We consider the problem of dynamic evaluation for conjunctive queries with access restrictions. Restricted access to data is commonplace~\cite{Nash:FOAccess:04, Nash:UCQAccess:04, Li:ICDT:2001}: For instance, the flight information behind a user-interface query can only be accessed by providing values for specific input fields such as the departure and destination airports in a flight booking database.

We formalise such queries as \textbf{C}onjunctive \textbf{Q}ueries with free \textbf{A}ccess \textbf{P}atterns (CQAPs for short): The free variables of a CQAP are partitioned into {\em input} and {\em output}. The query yields tuples of values over the output variables {\em given} a tuple of values over the input variables.

\begin{exa}
\label{ex:flight}
Assume that a flight booking company has a (simplified) database consisting of the two relations 
$\texttt{Flight}$ and $\texttt{Airport}$.
The relation $\texttt{Flight}$ contains information about flights, including flight numbers, the departure and arrival airports, and the date of the flights. The relation $\texttt{Airport}$ contains the names of airports and the cities in which they are located.
Assume that the company 
provides a user-interface where users can search for flights by 
specifying 
a departure city \texttt{depCity}, an arrival city \texttt{arrCity}, and a departure date \texttt{date}. 
Given such a triple of inputs, the user-interface queries the database 
and lists all flight numbers \texttt{flightNo} together with the corresponding departure and arrival airports, \texttt{depAirport} and \texttt{arrAirport}, that match the 
user request. 
We formalise this data access using the following CQAP: 
\begin{align*}
\texttt{FlightSearch}(\texttt{flightNo}, &\ \texttt{depAirport}, \texttt{arrAirport} \mid 
\texttt{depCity}, \texttt{arrCity}, \texttt{date}) = \\
&  \texttt{Flight}(\texttt{flightNo}, \texttt{depAirport}, \texttt{arrAirport}, \texttt{date}), \\
&  \texttt{Airport}(\texttt{depAirport}, \texttt{depCity}), \
\texttt{Airport}(\texttt{arrAirport},  \texttt{arrCity})
\end{align*}
The variables in the query head that appear after the symbol $|$, i.e.,   
\texttt{depCity}, \texttt{arrCity}, and \texttt{date}, are input variables.
The other variables in the head, i.e.,  
\texttt{flightNo}, \texttt{depAirport}, and \texttt{arrAirport},
are output variables. 
If a user is interested in all flights from Edinburgh to Zurich on the 1st of January 2024, 
the user interface runs the query $\texttt{FlightSearch}$ after setting the input variables to the 
values \texttt{"Edinburgh"}, \texttt{"Zurich"}, and \texttt{"2024-01-01"}.
\qed
\end{exa}

In database systems, CQAPs formalise the notion of parameterized queries (or prepared statements)~\cite{AbiteboulHV95}. In probabilistic graphical models, they correspond to conditional queries ~\cite{PGM:Book:2009}: 
Such inference queries ask for (the probability of) each possible value of a tuple of random variables (corresponding to CQAP output variables) given specific values for a tuple of random variables (corresponding to CQAP input variables). 

Prior work on queries with access patterns considered a more general setting than CQAP: There, each relation in the query body may have input and output variables such that values for the latter can only be obtained if values for the former are supplied~\cite{Florescu:SIGMOD:99, Yerneni:ICDT:99, Deutsch:TCS:2007, Benedikt:VLDB:15, Benedikt:PODS:14}. In this more general setting, and in sharp contrast to our simpler setting, a fundamental question is whether the query can even be answered for a given access pattern to each relation~\cite{Nash:FOAccess:04, Nash:UCQAccess:04, Li:ICDT:2001}.

We introduce a fully dynamic evaluation approach for CQAPs. It is fully dynamic in the sense that it supports both inserts and deletes of tuples to the input database. It computes a data structure that supports the enumeration of the distinct output tuples for any values of the input variables and maintains this data structure under  updates to the input database.

Our analysis of the overall computation time is refined into three components.
The {\em preprocessing time} is the time to compute the data structure before receiving any updates.
Given a tuple over the input variables, the {\em enumeration delay} is the time between the start of the enumeration process and the output of the first tuple, the time between outputting any two consecutive tuples, and the time between outputting the last tuple and the end of the enumeration process~\cite{DurandFO07}. 
The {\em update time} is the time used to update the data structure\footnote{We do not allow updates during the enumeration; this functionality is orthogonal to our contributions and can be supported using a versioned data structure.} for one single-tuple update.

There are simple, albeit more expensive alternatives to our approach. For instance, on an update request we may only update the input database, and on an enumeration request we may use an existing enumeration algorithm for the residual query obtained by setting the input variables to constants in the original query. However, such an approach needs time-consuming and independent preparation for each enumeration request, e.g., to remove dangling tuples and possibly create a data structure to support enumeration. In contrast, the data structure constructed by our approach shares this preparation across the enumeration requests and can readily serve enumeration requests for any values of the input variables.

The contributions of this paper are as follows.

Section~\ref{sec:cqap} introduces the language of CQAPs. Two new notions account for the nature of free access patterns:  {\em access-top variable orders} and {\em query fractures}. 

An access-top variable order is a  decomposition of the query into a forest of (rooted) trees with one node per variable, where: the input variables are above all other variables; and the free (input and output) variables are above the bound variables. This variable order is compiled into a forest of view trees, which is a data structure that represents compactly the query output.

Since access to the query output requires fixing values for the input variables, the query can be fractured by breaking its joins on the input variables and replacing each of their occurrences with fresh variables within each connected component of the query hypergraph. This does not violate the access pattern, since each fresh input variable is set to the corresponding given input value. Yet this may lead to structurally simpler queries whose fully dynamic evaluation admits lower complexity.

Section~\ref{sec:var_orders_width_measures} introduces the {\em static} and {\em dynamic} widths that capture the complexities of the preprocessing and respectively update steps. For a given CQAP, these widths are defined over the possible access-top variable orders of the fracture of the query.

Section~\ref{sec:results} overviews our main results on the complexity of dynamic CQAP evaluation.

Sections~\ref{sec:preprocessing}-\ref{sec:updates} introduce our approach for dynamic CQAP evaluation. Computing and maintaining each view in the view tree accounts for preprocessing and respectively updates, while the view tree as a whole allows for the enumeration of the output tuples with constant delay.
Section~\ref{sec:discussion} discusses key decisions behind our approach.

Section~\ref{sec:dichotomy} gives a syntactic characterisation of those CQAPs that admit linear-time preprocessing and constant-time update and enumeration delay. We call this class of well-behaved queries $\text{CQAP}_0$. All queries outside $\text{CQAP}_0$ 
and without repeating relation symbols 
 do not admit constant-time update and delay regardless of the preprocessing time, unless the widely held Online Matrix-Vector Multiplication conjecture~\cite{Henzinger:OMv:2015} fails. This dichotomy generalises a prior dichotomy for $q$-hierarchical queries {\em without access patterns}~\cite{BerkholzKS17}. The $q$-hierarchical queries are in $\text{CQAP}_0$ and have no input variables. The class $\text{CQAP}_0$ further contains  cyclic CQAPs with input variables. 
\begin{exa}
\label{ex:triangle_detection} 
The following triangle detection problem is in $\text{CQAP}_0$:
 Given three nodes in a graph, we ask whether the nodes form a triangle. 
 This problem can be expressed by the following query in 
$\text{CQAP}_0$: 
 \begin{align*}
 Q(\cdot | A,B,C) = E(A,B), E(B,C), E(C,A), 
 \end{align*}
 where $E$ is the edge relation of the graph.
 All variables of the query are free and input. The dot ($\cdot$) in the query head signalises that the query does not have output variables.   
\qed
\end{exa}
The  smallest query patterns not in $\text{CQAP}_0$ strictly include the non-$q$-hierarchical ones and also others that are sensitive to the interplay of the output and input variables. 

Section~\ref{sec:trade-off} charts the preprocessing time - update time - enumeration delay trade-off for the dynamic evaluation of CQAPs whose fractures are hierarchical. It shows that by increasing the preprocessing and update times, we can decrease the enumeration delay. Our trade-off reveals the optimality for a particular class of CQAPs with hierarchical fractures, called $\text{CQAP}_1$, which lies outside $\text{CQAP}_0$. 
\begin{exa}
\label{ex:edge_triangle_listing}  
The following edge triangle listing problem is in $\text{CQAP}_1$:
Given an edge in a graph, the task is to list all triangles containing this edge. 
This problem can be expressed by the following query in $\text{CQAP}_1$: 
\begin{align*}
Q(C | A,B) = E(A,B), E(B,C), E(C,A). \tag*{\qed}
\end{align*}
\end{exa}
The complexity of $\text{CQAP}_1$ for both the update time and the enumeration delay matches the (conditional) lower bound $\Omega(N^{\frac{1}{2}})$ for queries outside $\text{CQAP}_0$, where $N$ is the size of the input database. 
This is weakly Pareto optimal, as there can be no tighter upper bounds for {\em both} the update time and the enumeration delay (though it does not rule out the possibility that one of the two times can be lowered).
Our approach for $\text{CQAP}_1$ exhibits a continuum of trade-offs: $\bigO{N^{1+\eps}}$ preprocessing time, $\bigO{N^{\eps}}$ amortized update time and $\bigO{N^{1-\eps}}$ enumeration delay, for every $\eps\in[0,1]$. By tweaking the parameter $\eps$, one can optimise the overall time for a sequence of enumeration and update tasks and achieve an asymptotically lower compute time than prior work (Section~\ref{sec:comparison}).
Our approach recovers the complexity of the well-studied dynamic set intersection problem~\cite{DBLP:conf/wads/KopelowitzPP15}:
\begin{exa}
\label{ex:set_intersection}  
The dynamic set intersection problem is defined as follows. We are given sets $S_1, \ldots, S_m$ that are subject to element insertions and deletions. For each access request $(i, j)$ with $i,j \in [m]$, we need to decide whether the intersection of the sets $S_i$ and $S_j$ is empty. 
Consider a relation $S$ that consists of the tuples $\{(i,x) \mid x \in S_i \text{ and } i \in \{1, \ldots , m\}\}$. 
The dynamic set intersection problem is expressed by the following query in 
$\text{CQAP}_1$:
\begin{align*}
Q(\cdot | B,C) = S(B,A), S(C,A).
\end{align*}
The variables $B$ and $C$ are free and input. The query has no output variables.  
Prior work designs a randomised algorithm for this problem that uses  expected $\bigO{N^{\frac{1}{2}}}$ update time and enumeration delay, where $N$ is the size of the sets~\cite{DBLP:conf/wads/KopelowitzPP15}. 
Our approach recovers these complexities using our deterministic algorithm and $\eps=\frac{1}{2}$.
\qed
\end{exa}

Our dynamic evaluation approach for CQAPs can be applied to further domains.
Section~\ref{sec:probabilistic-semantics} discusses
three possible semantics for updates in probabilistic databases: set, bag, and expectation-variance.
In these probabilistic settings, an update can be an insertion or a deletion of a tuple with an arbitrary probability.
Section~\ref{sec:probabilistic} shows how to maintain hierarchical conjunctive queries without repeating relation symbols 
with constant update time and enumeration delay under the set semantics and the expectation-variance semantics for updates to the underlying probabilistic database.

\begin{exa}
    The CQAP language naturally expresses conditional queries over probabilistic databases, asking for the probability of a certain outcome {\em given} specific values for the input variables. Consider the flight search query in Example~\ref{ex:flight} and a probabilistic version of the relation \texttt{Flight}, which encodes the probability of each flight taking place based on historical evidence. The query returns tuples of flight number, departure and arrival airports, given date and departure and arrival cities, {\em together} with the probability for the flight to happen.

    Consider now a probabilistic graph, where each edge has a probability for being in the graph. The edge relation of the graph has one tuple per probabilistic edge. Then, the CQAP in Example~\ref{ex:triangle_detection} returns the probability that three given vertices $(a,b,c)$ form a triangle in the graph. The CQAP in Example~\ref{ex:edge_triangle_listing} gives, for each input edge $(a,b)$, each $C$-node $c$ with which it forms a triangle $(a,b,c)$ and the probability of that triangle. \qed
\end{exa}

A prior version of this work appeared in ICDT 2023~\cite{ICDT-23-version}. This article extends the prior version as follows.
We illustrate CQAPs using Examples \ref{ex:flight}-\ref{ex:set_intersection} and variable orders using Examples \ref{ex:widths-four-cycle} and \ref{ex:triangle_three_tails}. 
We include the proof of Theorem~\ref{thm:general}, which states the complexity of  our approach for the dynamic evaluation of arbitrary CQAPs.
Theorem~\ref{thm:general} is now implied by the new Propositions~\ref{prop:preprocessing}, 
\ref{prop:enumeration}, and \ref{prop:updates}.
We also include the proof of Theorem~\ref{thm:dichotomy}, which characterises 
the class of CQAPs that admit linear preprocessing time, constant update time, and constant enumeration delay.
Furthermore, we explain in more detail the adaptive evaluation strategy that achieves the preprocessing - update - enumeration trade-offs for CQAPs with hierarchical fractures (Sections~\ref{sec:trade-off-preprocessing} -- \ref{sec:trade-off-updates}). 
Finally, we introduce three update semantics for probabilistic databases (Section~\ref{sec:probabilistic-semantics}) and show that our approach maintains 
queries in CQAP$_0$ over probabilistic databases with constant update time and enumeration delay under two of these update semantics (Section~\ref{sec:probabilistic}). 
Due to lack of space, some proofs and technical details are deferred to the technical report~\cite{access_pattern_arxiv}. 
\section{Preliminaries}
\label{sec:preliminaries}

\paragraph{Data Model}
A schema $\calX = (X_1, \ldots, X_n)$ is a tuple of distinct variables.
Each variable $X_i$ has a discrete domain $\Dom(X_i)$.
We treat schemas and sets of variables interchangeably, assuming a fixed ordering of variables.
A tuple $\tup{x}$ of values has schema $\calX=\schema{\tup{x}}$ and is an element from 
$\Dom(\calX) = \Dom(X_1) \times \dots \times \Dom(X_n)$.
A relation $R$  over schema $\mathcal{X}$ is a function
$R: \Dom(\mathcal{X}) \to \mathbb{Z}$  such that
the multiplicity $R(\inst{x})$ is non-zero for finitely many tuples $\inst{x}$.
A tuple $\inst{x}$ is in $R$, denoted by $\inst{x} \in R$, if $R(\inst{x}) \neq 0$.
The size $|R|$ of $R$ is the size of the set $\{ \inst{x} \mid \inst{x} \in R \}$.
A database is a set of relations and has size given by the sum of the sizes of its relations.
Given a tuple $\inst{x}$ over schema  $\mathcal{X}$ and  $\mathcal{S}\subseteq\mathcal{X}$,
$\inst{x}[\mathcal{S}]$ is the restriction of
$\inst{x}$ onto $\calS$.
For a relation $R$ over schema  $\mathcal{X}$, schema $\mathcal{S}\subseteq\mathcal{X}$,
and tuple $\inst{t} \in \Dom(\mathcal{S})$: 
$\sigma_{\mathcal{S} = \inst{t}} R =
	\{\, \inst{x} \,\mid\, \inst{x} \in R \land \inst{x}[\mathcal{S}] = \inst{t} \,\}$ is the set of tuples in $R$
that agree with $\inst{t}$ on the variables in $\calS$;
$\pi_{\mathcal{S}}R = \{\, \inst{x}[\mathcal{S}] \,\mid\, \inst{x} \in R \,\}$ 
stands for the set of tuples in $R$ projected onto $\calS$, i.e., the set of distinct $\calS$-values from the tuples in $R$ with non-zero multiplicities.
For a relation $R$ over schema $\calX$ and $\calY\subseteq\calX$, 
the {\em indicator projection} $I_{\calY}R$ is a relation
over $\calY$ such that~\cite{FAQ:PODS:2016}:
\begin{align*}
\text{ for all } \inst{y} \in \Dom(\calY)  : I_{\calY} R(\inst{y}) = 
\begin{cases}
1 & \text{if there is } \inst{t} \in R \text{ such that } \inst{y} = \inst{t}[\calY] \\
0 & \text{otherwise }
\end{cases}
\end{align*}  
That is, the indicator projection $I_{\calY} R$ is a relation mapping the tuples from $\pi_{\calY}R$ to 1.

\paragraph{Updates.}
An update is a relation,  where tuples with positive multiplicities represent inserts and tuples with negative multiplicities represent deletes. 
Consider a relation $R$ and an update $\delta R$ over the same schema $\calX$.
To apply the update $\delta R$ to $R$ means to compute their {\em union}  
$R\uplus \delta{R}$ defined as:
\begin{align*}
(R\uplus \delta{R})(\tup{x}) = R(\tup{x}) + \delta R(\tup{x}), \text{ for } \tup{x} \in \Dom(\calX).
\end{align*}
A single-tuple update to relation $R$ is a singleton relation $\delta R = \{\tup{x} \rightarrow m\}$, where the multiplicity $m=\delta R(t)$ of the tuple $t$ in $\delta R$ is non-zero.

Updates to input relations may cause changes to indicator projections. 
Applying the single-tuple update $\delta R$ to $R$ triggers a single-tuple update $\delta I_{\calY}R = \{\inst{x}[\calY] \rightarrow k\}$ to $I_{\calY}R$ in the following two cases.
If $\delta R$ is an insertion and $\inst{x}[\calY]$ is a value not already in $\pi_{\calY}R$, then the new update $\delta I_{\calY}R$ is triggered with $k=1$. If $\delta R$ is a deletion and $\pi_{\calY}R$ does not contain $\inst{x}[\calY]$ after applying the update to $R$, then the new update $\delta I_{\calY}R$ is triggered with $k=-1$.

\paragraph{Computational Model}
We consider the RAM model of computation where 
schemas and 
data values are stored in registers 
of logarithmic size and operations on them can be done in constant time\footnote{In this article, we use data complexity: The complexity factors that only depend on the query, such as the number of variables in a query atom and the number of query atoms, are considered constant.}.
We assume that each relation 
$R$  over schema $\mathcal{X}$ is implemented by a data structure that stores key-value entries $(\inst{x},R(\inst{x}))$ for each tuple $\inst{x}$ with $R(\inst{x}) \neq 0$ and needs $O(|R|)$ space. 
This data structure can:
(1) look up, insert, and delete entries in constant time,
(2) enumerate all stored entries in $R$ with constant delay, and
(3) report $|R|$ in constant time.
For a schema $\mathcal{S} \subset \mathcal{X}$, 
we use an index data structure that for any $\tup{t} \in \Dom(\mathcal{S})$ can: 
(4) enumerate all tuples in $\sigma_{\mathcal{S}=\tup{t}}R$ with constant delay,
(5) check $\tup{t} \in \pi_{\mathcal{S}}R$ in constant time; 
(6) return $|\sigma_{\mathcal{S}=\tup{t}}R|$ in constant time; and
(7) insert and delete index entries in constant time.
\section{Conjunctive Queries with Free Access Patterns}
\label{sec:cqap}
We introduce the queries investigated in this paper along with several of their properties.
A {\em conjunctive query with free access patterns} (CQAP for short) has the form 
\begin{align}
	Q(\calO | \calI) = R_1(\calX_1), \ldots, R_n(\calX_n). \label{eq:CQAP}
\end{align}
We denote by:
$(R_i)_{i\in[n]}$ the relation symbols;
$(R_i(\calX_i))_{i\in[n]}$ the atoms;
$\vars(Q) = \bigcup_{i\in[n]}\calX_i$ the set of variables;
$\atoms(X)$ the set of the atoms containing the variable $X$;
$\atoms(Q)=\{R_i(\calX_i) \mid i\in[n]\}$ the
set of all atoms;
and $\free(Q)=\calO\cup\calI\subseteq\vars(Q)$ the set of {\em free} variables, which are partitioned into {\em input} variables $\calI$ and {\em output} variables $\calO$. An empty set of input or output variables is denoted by a dot ($\cdot$).
All variables in $\vars(Q) \setminus \free(Q)$ are called {\em bound}.
We call $R_1(\calX_1), \ldots, R_n(\calX_n)$ the {\em body} of $Q$.

The hypergraph of a query $Q$ is $\calH=(\calV=\vars(Q),\calE = \{ \calX_i \mid i\in[n]\})$, whose  vertices are the variables and hyperedges are the schemas of the atoms in $Q$.
The {\em fracture} of a CQAP $Q$ is a CQAP $Q_\dagger$ constructed as follows. We start with $Q_\dagger$ as a copy of $Q$. We replace each occurrence of an input variable by a fresh variable. Then, we compute the connected components of the hypergraph of the modified query. Finally, we replace in each connected component of the modified query all new variables originating from the same input variable by one fresh input variable.

 We next define  the notion of dominance for variables in a CQAP $Q$.
 For variables $A$ and $B$, we say that $B$ {\em dominates} $A$ if $\atoms(A) \subset \atoms(B)$.
 The query $Q$ is {\em free-dominant} ({\em input-dominant}) if for any two variables $A$ and $B$, it holds: if $A$ is free (input) and $B$ dominates $A$, then $B$ is free (input).
 The query $Q$ is {\em almost free-dominant} ({\em almost input-dominant}) if: (1) For any variable $B$ that is not  free (input) and for any atom $R(\calX)\in\atoms(B)$, there is an atom $S(\calY)\in\atoms(B)$, possibly different from 
 $R(\calX)$, such that $\calX \cup \calY$ cover all  free (input) variables dominated by $B$; (2) $Q$ is not already free-dominant (input-dominant).
 A query $Q$ is {\em hierarchical} if for any $A, B \in \vars(Q)$, either $\atoms(A) \subseteq \atoms(B)$, $\atoms(B) \subseteq \atoms(A)$, or $\atoms(B) \cap \atoms(A) = \emptyset$. 
 The class of hypergraphs of hierarchical queries is strictly contained in the class of $\gamma$-acyclic (hence, $\alpha$- and $\beta$-acyclic) hypergraphs. 
The class of Berge-acyclic hypergraphs and the class of hypergraphs of hierarchical queries are incomparable: there are Berge-acyclic queries that are not hierarchical, e.g., $Q() = R(A),S(A,B),T(B)$, and hierarchical queries that are not Berge-acyclic, e.g., $Q() = R(A,B),S(A,B)$.   
For the precise definitions of these acyclicity notions, 
we refer to a recent overview~\cite{Brault-Baron16}.
 A query is $q$-hierarchical if it is hierarchical and free-dominant.

\begin{defi}
\label{def:CQAP_zero}\label{def:CQAP_one}
A query is in $\text{CQAP}_0$ if its fracture is hierarchical, free-dominant, and input-dominant.
A query is in $\text{CQAP}_1$ if its fracture is hierarchical and is almost free-dominant, or almost input-dominant, or both.
\end{defi}

The subset of $\text{CQAP}_0$ without input variables is the class of $q$-hierarchical queries~\cite{BerkholzKS17}.

\begin{exa}\label{ex:illustrate_cqap_0}
The query $Q_0(B,C \hspace{-1pt} \mid\hspace{-2.5pt} \cdot ) = R(A,B), S(A,C)$ is hierarchical and  input-dominant.
It is not free-dominant: The bound variable $A$ dominates the free variables $B$ and $C$.

	The query $Q_1(A,C \mid B,D) = R(A,B), S(B,C), T(C,D), U(A,D)$ is input-dominant, free-dominant, but not hierarchical. Its fracture $Q_\dagger(A,C \mid B_1,B_2,D_1,D_2)$ 
$=$ $R(A,B_1),$ $S(B_2,C),$ $T(C,D_1),$ $U(A,D_2)$ is hierarchical but not input-dominant: $C$ dominates both $B_2$ and $D_1$ and $A$ dominates both $B_1$ and $D_2$, yet $A$ and $C$ are not input variables. It is however almost input-dominant: $A$ is not input and for any of its atoms $R(A,B_1)$ and $U(A,D_2)$, there is another atom $U(A,D_2)$ and respectively $R(A,B_1)$ such that both $R(A,B_1)$ and $U(A,D_2)$ cover the variables $B_1$ and $D_2$ dominated by $A$; a similar reasoning applies to $C$. This means that $Q_1$ is in $\text{CQAP}_1$.

	The query $Q_2(A\mid B) = S(A,B),T(B)$ is in $\text{CQAP}_0$, since its fracture $Q_\dagger(A\mid B_1,B_2) = S(A,B_1),T(B_2)$ is hierarchical, free-dominant, and input-dominant.
	
	The query $Q_3(B\mid A) = S(A,B),T(B)$ is in $\text{CQAP}_1$. Its fracture is the query itself. It is hierarchical, yet not input-dominant, since $B$ dominates $A$ and is not input. It is, however, almost input-dominant: for each atom of $B$, there is one other atom such that together they cover $A$. Indeed, atom $S(A,B)$ already covers $A$, and it also does so together with $T(B)$; atom $T(B)$ does not cover $A$, but it does so together with $S(A,B)$.

The following are the smallest hierarchical queries that are not in $\text{CQAP}_0$ but in $\text{CQAP}_1$: $Q(A \mid\cdot) = R(A,B), S(B)$; $Q(B \mid A) = R(A,B), S(B)$; and $Q(\cdot \mid A) = R(A,B), S(B)$.
\qed
\end{exa}

\paragraph{Query Semantics.} 
We give the semantics of CQAPs using the function $\llbracket\cdot\rrbracket$ defined on the structure of CQAPs:
\begin{align}
	\llbracket (Q (\calO|\calI) = \text{body} \rrbracket &= \{ (t\mapsto m) \ |\  J = \llbracket \text{body}\rrbracket, (t_2\mapsto m_2)\in J, t = t_2[\calO], \textsf{in} = t_2[\calI] \nonumber\\
	&\hspace*{6.25em} m = \sum_{(t_1\mapsto m_1) \in J, t = t_1[\calO], \text{in} = t_1[\calI]} m_1
	\} \label{eq:semantics:head}\\
	\llbracket Q_1(\calX_1), Q_2(\calX_2) \rrbracket &= \{ (t\mapsto m) \ |\  (t_1\mapsto m_1)  \in \llbracket Q_1(\calX_1) \rrbracket, (t_2\mapsto m_2) \in \llbracket Q_2(\calX_2) \rrbracket, \nonumber\\
	&\hspace*{4.25em} t\in \Dom(\calX_1 \cup \calX_2), t_1 = t[\calX_1], t_2 = t[\calX_2], m = m_1 \cdot m_2 \}\label{eq:semantics:join}\\
	\llbracket R(\calX) \rrbracket &= \{ (t\mapsto m) \ | \  (t\mapsto m)\in R\} \label{eq:semantics:atom}
\end{align}
Eq.~\eqref{eq:semantics:head} computes the set of mappings of the tuples over the output variables $\calO$ to their multiplicities under a specific tuple $\textsf{in}$ of constants assigned to the input variables $\calI$. It recursively invokes the semantics function applied to the body of the query. Eq.~\eqref{eq:semantics:join} computes the set of mappings $(t\mapsto m)$ defining the join of two subqueries $Q_1$ and $Q_2$. The tuple $t$ is the result of joining the tuple $t_1$ in the output of $Q_1$ and the tuple $t_2$ in the output of $Q_2$, while its multiplicity $m$ is the product of the multiplicities of $t_1$ and $t_2$. Eq.~\eqref{eq:semantics:atom} is the base case of one relation atom. Its semantics is the set of key-value mappings represented by the corresponding relation.

\paragraph{Delta Queries.}
Updates to input relations can change the query output. 
A delta query captures this change for updates to one input relation. The derivation of delta queries follows the standard delta rules~\cite{Chirkova:Views:2012:FTD}.
Consider a CQAP as in Eq.~\eqref{eq:CQAP} and an update $\delta R_i$ to a relation $R_i$. If there is a single atom using the relation symbol $R_i$ in $Q$, then the delta query expressing the change in the query output is:
\begin{align*}
\delta Q(\calO|\calI) = R_1(\calX_1), \ldots, R_{i-1}(\calX_{i-1}), \delta R_i(\calX_i), R_{i+1}(\calX_{i+1}), \ldots, R_n(\calX_n)
\end{align*}
If there are several atoms using $R_i$ in $Q$, then we issue one delta query for each such atom.

\nop{
\begin{align*}
\delta({R(\calX)}) &= \delta{R(\calX)}	\\
\delta({S(\calX)}) &= \emptyset \\
\delta{(Q_1(\calX_1), Q_2(\calX_2))} &= 
	(\delta{(Q_1(\calX_1))}, Q_2(\calX_2)) \uplus
	((Q_1(\calX_1)), \delta{Q_2(\calX_2)}) \uplus
	(\delta{(Q_1(\calX_1))}, \delta{Q_2(\calX_2)})  
\end{align*}

The last rule refers to the case of a join of two queries.
}

\section{Variable Orders and Width Measures}
\label{sec:var_orders_width_measures}
In this section, we introduce the notions of variable orders and width measures for CQAPs.

\subsection{Variable Orders.}
\label{sec:vo}
Variable orders are used as logical plans for the evaluation of conjunctive queries~\cite{OlteanuZ15}. We next adapt them to CQAPs.
Given a query, two variables  {\em depend} on each other if they occur in the same query atom.
A {\em variable order}, or VO for short, $\omega$ for a CQAP $Q$ is a pair
$(T_\omega, \dep_\omega)$, where:

\begin{itemize}
	\item $T_\omega$ is a forest of (rooted) trees with one node per variable.
	For each atom $R(\calX)$ in $Q$, $\calX$ is a subset of the set of variables on a root-to-leaf path in $T_{\omega}$.
	\item The function $\dep_\omega$ maps each variable
			$X$ to the subset of its ancestor variables in $T_{\omega}$
			on which the variables in the subtree rooted at $X$
			depend.
\end{itemize}

For convenience, we sometimes omit the index $\omega$ in $(T_{\omega},\dep_{\omega})$
 when $\omega$ is clear from the context.
A VO always exists for a query, e.g., by having all variables on a single path.
In the remainder of this paper, we consider VOs in which atoms corresponding to relations and their indicator projections
are added as new leaves. 
Each atom in the query is added as a child of its variable placed lowest in the VO. We explain next how the indicator projections are added to a VO $\omega$. 
Indicator projections can reduce the asymptotic complexity of cyclic queries~\cite{FAQ:PODS:2016}.

\begin{figure}[t]
	\centering
	\setlength{\tabcolsep}{3pt}
	\renewcommand{\linenumber}{\makebox[2ex][r]{\rownumber\TAB}}
	\setcounter{magicrownumbers}{0}
	\begin{tabular}[t]{@{}c@{}c@{}l@{}}
		\toprule
		\multicolumn{3}{l}{\textsf{indicators}(CQAP $Q$, \text{VO} $\nu$) : VO}   \\
		\midrule
		\multicolumn{3}{l}{\MATCH $\nu$:}                       \\
		\midrule
		\phantom{ab} & $R(\calY)$ \hspace*{2.5em} & 

		 \linenumber \RETURN $R(\calY)$ \\
		\cmidrule{2-3} \\[-6pt]
		             &
		\begin{minipage}[t]{1.5cm}
			\vspace{-1.2em}
			\hspace*{-0.75cm}
			\begin{tikzpicture}[xscale=0.5, yscale=1]
				\node at (0,-2)  (n4) {$X$};
				\node at (-1.2,-3)  (n1) {$\nu_1$} edge[-] (n4);
				\node at (0,-3)  (n2) {$\ldots$};
				\node at (1.2,-3)  (n3) {$\nu_k$} edge[-] (n4);
			\end{tikzpicture}
		\end{minipage}
		             &
		\begin{minipage}[t]{11.5cm}
			\vspace{-0.4cm}
			\linenumber \LET $\hat{\nu}_i = \textsf{indicators}(Q, \nu_i)$ \ $\forall i\in[k]$ \\[0.5ex]
		         \linenumber \LET $\calS = \{X\} \cup \dep_{\omega}(X)$; \LET $\calR $ be the set of atoms in $\nu$ \\[0.5ex]
		          \linenumber \LET $\calI = \{\, I_{\calY\cap \calS}R(\calY\cap \calS) \mid R(\calY) \in (\atoms(Q)\setminus \calR) \land (\calY\cap\calS) \neq \emptyset \,\}$ \\[0.5ex]
			\linenumber \LET $\{I_1, {.}{.}{.}, I_\ell\} = \gyo^*(\calI, \calR)$  \\[0.5ex] 
		 \linenumber \RETURN $
				\left\{
				\begin{array}{@{~~}c@{~~}}
					\tikz {
						\node at (3.6,-1)  (n4) {$X$};
					        \node at (2.2,-1.75)  (n1) {$\hat{\nu}_1$} edge[-] (n4);
						\node at (2.65,-1.75)  (n2) {$\ldots$};
						\node at (3.2,-1.75)  (n3) {$\hat{\nu}_k$} edge[-] (n4);
						\node at (4.0,-1.75)  (n3) {$I_1$} edge[-] (n4);
						\node at (4.4,-1.75)  (n2) {$\ldots$};
						\node at (4.9,-1.75)  (n3) {$I_\ell$} edge[-] (n4);
					}
				\end{array}  \right.$
		\end{minipage}                                              \\[2.75ex]
		\bottomrule
	\end{tabular}
	\caption{Extending a VO $\omega$ of a CQAP $Q$ with indicator projections by calling $\textsf{indicators}(Q,\nu=\omega)$. The function \textsf{indicators} is defined using pattern matching (left column under $\MATCH$) on the structure of $\omega$, which can be a leaf (relation atom) or an inner node (query variable).
	Each variable $X$ in $\omega$ gets as new children the indicator projections of relations that do not occur in the subtree rooted at $X$ but form a cyclic query with those that occur. 
	$\gyo^*$ (defined in Section~\ref{sec:vo}) is based on the GYO reduction~\cite{BeeriFMY83}.}
	\label{fig:extended_variable_order}
\end{figure}

Given a CQAP $Q$ and a VO $\omega$, where the atoms of $Q$ have been already added, the function $\textsf{indicators}$ in Figure~\ref{fig:extended_variable_order} extends $\omega$ with indicator projections. 
It processes $\omega$ recursively in a bottom-up manner (Lines 1-2).
At each variable $X$ in $\omega$,
we compute the set $\calI$ of indicator projections (Line 4). 
Such indicator projections $I_{\calY\cap\calS}R$ are for relations $R$ whose atoms $R(\calY)$ are not included in the subtree rooted at $X$ but have schema $\calY$ that shares a non-empty set of variables with $\calS = \{X\} \cup \dep_{\omega}(X)$.
We choose from this set those indicators that 
together with the atoms in the subtree rooted at $X$ form a cyclic query  (Line 5). 
We achieve this using a variant of the GYO reduction~\cite{BeeriFMY83}.
Given the hypergraph formed by the hyperedges representing these indicators $\calI$ and the atoms  $\calR$, GYO repeatedly applies two rules until it reaches a fixpoint: (1) Remove a node that only appears in one hyperedge; (2) Remove a hyperedge that is included in another hyperedge. If the result of GYO is a hypergraph with no nodes and one empty hyperedge, then the input hypergraph is ($\alpha$-)acyclic. Otherwise, the input hypergraph is cyclic and the GYO's 
output is a cyclic hypergraph. Our GYO variant, dubbed $\gyo^*$ in Figure~\ref{fig:extended_variable_order}, returns the hyperedges that originated from the indicator projections in $\calI$ and contribute to this non-empty output hypergraph.
This set of chosen indicator projections, 
which is empty if the input hypergraph is ($\alpha$-)acyclic, are added as children of $X$ (Line 6).

The next proposition states that joining a query with the indicator projections constructed by the function 
\textsf{indicators} in Figure~\ref{fig:extended_variable_order} does not change the result of the query.   
The proof of the proposition is in Appendix~B.1 of the technical report~\cite{access_pattern_arxiv}.
\begin{prop}
\label{prop:vo-equiv-ext-vo}
For any CQAP $Q(\calO | \calI)$ and VO $\omega$ for $Q$, $Q(\calO | \calI)$ is equivalent to a CQAP $Q'(\calO | \calI)$, whose body is the conjunction of the atoms of $Q$ and the indicator projections at the leaves of the VO returned by \textsf{indicators}(Q, $\omega$).
\end{prop}

 The following example illustrates the construction of indicator projections  
as described by the function \textsf{indicators} in Figure~\ref{fig:extended_variable_order}.
Examples~\ref{exa:effect_of_indicators} and \ref{ex:triangle_three_tails} show that indicator projections can reduce the preprocessing and update time for CQAPs.

	\begin{exa}\label{ex:CQAP-triangle}
	Consider the triangle CQAP
	$$Q(B,C|A) = R(A,B), S(B,C), T(C,A).$$
	The fracture $Q_\dagger$ of $Q$ is the query itself. Figure~\ref{fig:general_triangle-prelim} depicts a VO $\omega$ for $Q$. 
	The input variable $A$ is on top of the output variables $B$ and $C$.  
	The atoms $S(B,C)$ and $T(C,A)$ are included 
	in the subtree of $\omega$ rooted at $C$ but  the atom $R(A,B)$ is not.
	We apply $\gyo^*$ to the atoms $S(B,C)$ and $T(C,A)$ and the indicator projection $I_{A,B}R(A,B)$ 
	and obtain $\gyo^*( \{I_{A,B}R(A,B)\}, \{S(B,C), T(C,A)\}) = \{I_{A,B}R(A,B)\}$, which means that
	the indicator projection $I_{A,B}R(A,B)$ and the atoms $S(B,C)$ and $T(C,A)$ form a cyclic query. For this reason,
	$I_{A,B}R(A,B)$ is added as a new child of $C$ in $\omega$.	  
\qed	
	\end{exa}

\begin{figure}[t]
	\centering
  \begin{minipage}[b]{0.54\linewidth}
	\centering
	  \begin{tikzpicture}[xscale=0.7, yscale=0.7]
		  \node[anchor=west] at (-6.5, 0.0) (A1) {\small  $\dep(A)=\emptyset$};	
		  \node[anchor=west] at (-6.5, -0.8) (B1) {\small  $\dep(B)=\{A\}$};	
		  \node[anchor=west] at (-6.5, -1.6) (C1) {\small  $\dep(C)=\{A,B\}$};	
				
		  \node at (-1, -0.0) (A) {\small  $A$};
		  \node at (-1.0, -1.0) (B) {\small $B$} edge[-] (A);
		  \node at (-1.0, -3.0) (C) {\small $C$} edge[-] (B);
		  \node at (-3.5, -4.0) (S) {\small $S(B,C)$} edge[-] (C);
		  \node at (-1., -4.0) (T) {\small $T(C,A)$} edge[-] (C);
		  \node at (1.5, -2.0) (R) {\small  $R(A,B)$} edge[-] (B);
		  \node at (1.75, -4.0) (IR) {\small $I_{A,B}R(A,B)$} edge[-] (C);
	  \end{tikzpicture}
  \end{minipage}
  \begin{minipage}[b]{0.44\linewidth}
	\centering
	  \begin{tikzpicture}[xscale=0.7, yscale=0.7]
		  \node at (-1, -0.0) (A) {\small  $V_A(A)$};
		  \node at (-1.0, -1.0) (B) {\small $V_B(A,B)$} edge[-] (A);
		  \node at (-1.0, -2.0) (C') {\small $V'_C(A,B)$} edge[-] (B);
		  \node at (-1.0, -3.0) (C) {\small $V_C(A,B,C)$} edge[-] (C');
		  \node at (-3.5, -4.0) (S) {\small $S(B,C)$} edge[-] (C);
		  \node at (-1., -4.0) (T) {\small $T(C,A)$} edge[-] (C);
		  \node at (1.5, -2.0) (R) {\small  $R(A,B)$} edge[-] (B);
		  \node at (1.75, -4.0) (IR) {\small\textbf{$I_{A,B}R(A,B)$}} edge[-] (C);
	  \end{tikzpicture}
  \end{minipage}
  \caption{Left: (Access-top) VO for the query $Q(B,C|A)= R(A,B), S(B,C), T(C,A)$. Right: The view tree constructed from this VO. Note the indicator $I_{A,B}R(A,B)$ added below the variable $C$ (left) and below the view $V_C$ (right). }
  \label{fig:general_triangle-prelim}
  \end{figure}

For the following development, we need additional notation.
Given a VO $\omega$, its subtree rooted at $X$ is denoted by $\omega_X$.
The sets $\vars(\omega)$ and $\anc_\omega(X)$ consist of
all variables of $\omega$ and respectively
the variables on the path from $X$ to the root excluding $X$.
 We denote by $\atoms(\omega)$ all  atoms and indicators 
 at the leaves of $\omega$ and by $Q_X$ the query that is the join of all atoms $\atoms(\omega_X)$ and where all variables are free. 

We next introduce classes of VOs for CQAPs. 
A VO $\omega$ is  {\em canonical} if the variables of the leaf atom of each root-to-leaf path are 
{\em exactly} the inner nodes of the path. Hierarchical queries are precisely those conjunctive queries that admit canonical variable orders.
A VO $\omega$ is {\em free-top} if no bound variable is an ancestor of a free variable.
It is  {\em input-top} if no output variable is an ancestor of an input variable.
The sets of free-top and input-top VOs for $Q$ are denoted as 
$\freetop(Q)$ and $\inputtop(Q)$, respectively.
A VO is called {\em access-top} if it is free-top and 
input-top\footnote{Although our approach in this work uses variable orders, it could also be phrased in terms of hypertree decompositions~\cite{Gottlob99}, while preserving the same complexities. Every variable order $\omega$ can be translated into a hypertree decomposition by replacing each node $X$ by a bag consisting of the variables $\{X\} \cup \dep_\omega(X)$; conversely, every hypertree decomposition can be transformed into a variable order by replacing each bag $B$ by a path 
that consists of all variables in $B$ that do not appear in bags that are ancestor of $B$~\cite{OlteanuZ15}. 
\nop{A  free-top (input-top) variable order corresponds to a hypertree decomposition that contains a connected subtree consisting of the free (input) variables.}
An access-top variable order corresponds to a hypertree decomposition that contains a connected subtree consisting of all free variables and also a connected subtree consisting of the input variables. Since all input variables are free, the subtree consisting of the input variables must be subsumed by the subtree consisting 
of the free variables.  For a discussion on the usefulness of variable orders for our approach, see Section~\ref{sec:discussion}.
}:
$\acceff(Q) = \freetop(Q) \cap \inputtop(Q).$

\nop{The sets $\freeTopVO(Q)$, $\canonicalVO(Q)$, and $\VO(Q)$ consist of free-top, canonical, and all variable orders of $Q$.}

\begin{exa}\label{ex:queries-var-orders}
The query $Q(B | A) = R(A,B), S(B)$ admits the VO $B-\{A-R(A,B),S(B)\}$ 
(notation-wise, ``-"
represents the parent-child relationship), where the variable $B$ has two children: the variable $A$ and the atom $S(B)$; and the variable $A$ has one child: the atom $R(A,B)$. 
The dependency sets are $\dep(B)=\emptyset$ and $\dep(A)=\{B\}$. This VO is free-top, since both variables are free; it is  not input-top, since the output variable $B$ is on top of the input variable $A$. 
By swapping $A$ and $B$, the VO becomes $A-B-\{R(A,B),S(B)\}$ with the dependency sets  $\dep(A)=\emptyset$ and $\dep(B)=\{A\}$.

The triangle query $Q(A,B|\cdot) = R(A,B), S(B,C),T(A,C)$ admits the VO 
$C-A-\{T(A,C),$ $B-\{R(A,B), S(B,C),$ $I_{AC}T(A,C)\}\}$, where one child of $B$ is the indicator projection $I_{AC}T$ of $T$ on $\{A,C\}$.
The dependency sets are $\dep(C) = \emptyset$, $\dep(A) = \{C\}$, and $\dep(B) = \{A,C\}$.
The VO is trivially input-top, since the query has no input variables; it is not free-top, since the bound variable $C$ is on top of the free variables $A$ and $B$.

The fracture of the 4-cycle query $Q_1$ in Example~\ref{ex:illustrate_cqap_0} admits the access-top VO consisting of the following two disconnected paths: $B_1-D_2-A-\{R(A,B_1),U(A,D_2)\}$ and $B_2-D_1-C-\{S(B_2,C),T(C,D_1)\}$, where the dependency sets are: $\dep(A) = \{B_1,D_2\}$, $\dep(D_2)=\{B_1\}$, $\dep(B_1)=\dep(B_2)=\emptyset$, $\dep(C) = \{B_2,D_1\}$, and $\dep(D_1)=\{B_2\}$.
\qed
\end{exa}

\subsection{Width Measures}
\label{sec:widths}
Given a conjunctive query $Q$ and $\calF \subseteq \vars(Q)$,    
a {\em fractional edge cover}
of $\calF$ is a solution 
$\boldsymbol{\lambda} = (\lambda_{R(\calX)})_{R(\calX) \in \atoms(Q)}$ to the following 
linear program \cite{AtseriasGM13}: 
\begin{align*}
\text{minimize} & \TAB\sum_{R(\calX) \in\, \atoms(Q)} \lambda_{R(\calX)} && \\[3pt]
\text{subject to} &\TAB \sum_{R(\calX): \, X \in \calX} \lambda_{R(\calX)} \geq 1 && \text{ for all } X \in \calF \text{ and } \\[3pt]
& \TAB\lambda_{R(\calX)} \in [0,1] && \text{ for all } R(\calX) \in \atoms(Q)
\end{align*}
The optimal objective value of the above program 
is called the {\em fractional edge cover number} of $\calF$ 
in $Q$ and is denoted as $\rho_{Q}^{\ast}(\calF)$.  
An {\it integral edge cover} of $\calF$ is a feasible solution 
to the variant of the above program with 
$\lambda_{R(\calX)}\in\{0,1\}$ for each $R(\calX) \in \atoms(Q)$.
The optimal objective value of this program 
is called the {\em integral edge cover number} of $\calF$, denoted as $\rho_{Q}(\calF)$.
If $Q$ is clear from the context, we omit 
the subscript $Q$ in $\rho_{Q}^{\ast}(\calF)$
and $\rho_{Q}(\calF)$.

For hierarchical queries, the integral and fractional edge cover numbers are the same. 

\begin{lem}[Lemma D.1 in~\cite{KaraNOZ2020}]
\label{lem:rho_rhostar}
For any hierarchical query $Q$ and $\calF \subseteq \vars(Q)$, it holds
$\rho^{\ast}(\calF) = \rho(\calF)$.
\end{lem}

We next introduce two width measures for a VO $\omega$ and CQAP $Q$. They capture the complexity of computing and maintaining the output of $Q$.

\begin{defi}
	\label{def:fac_width_VO}
	The static width $\fw(\omega)$ and dynamic width $\dfw(\omega)$  of a VO $\omega$ are:
	\begin{align*} 	
		\fw(\omega) & = \max_{X \in \vars(\omega)} \rho^*_{Q_X}(\{X\} \cup \dep_{\omega}(X))\\
		\dfw(\omega) & = \max_{X \in \vars(\omega)} \
			\max_{R(\calY) \in \atoms(\omega_X)}
			\rho^*_{Q_X}((\{X\} \cup \dep_{\omega}(X)) \setminus \calY)
	\end{align*}
\end{defi}

$Q_X$ is the join of all atoms under $X$ in the VO $\omega$.
For a query $Q_X$, the set of variables $\calX=\{X\} \cup \dep_\omega(X)$,
and a database of size $N$, $N^{e}$  is an upper bound on the worst-case output size of the query $Q_X(\calX)$,
where $e = \rho_{Q_X}^*(\calX)$ is  the fractional edge cover number of $\calX$.
%
The static width $\fw$ of a VO $\omega$ is defined by the maximum over the fractional edge cover numbers of the queries $Q_X$ for the variables $X$ in $\omega$.
 The dynamic width is defined similarly, with one simplification: We consider every case of a relation (or indicator projection) $R$ being replaced by a single-tuple update, so its variables $\calY$ are all set to constants and can be ignored in the computation of the fractional edge cover number.

We consider the standard lexicographic ordering $\leq$ on pairs of dynamic and static widths: $(\delta_1,\fw_1) \leq (\delta_2,\fw_2)$ if $\delta_1 < \delta_2$ or  $\delta_1=\delta_2$ and $\fw_1 \leq \fw_2$. Given a set $\calS$ of VOs, we define $\min_{\omega\in \calS} (\delta(\omega),\fw(\omega)) = (\dfw,\fw)$ such that  $\forall \omega \in \calS: (\dfw,\fw) \leq (\dfw(\omega),\fw(\omega))$.

\begin{defi}
	\label{def:fac_width_CQAP}
	The dynamic width $\dfw(Q)$ and static width $\fw(Q)$ of a CQAP $Q$ are:  
		$$(\dfw(Q), \fw(Q))       = \min_{\omega\in\acceff(Q_{\dagger})} (\dfw(\omega), \fw(\omega))$$ 
\end{defi}

Since we are interested in dynamic evaluation, Definition~\ref{def:fac_width_CQAP} first minimises for the dynamic width and then for the static width. 
To determine the dynamic and the static width of a CQAP $Q$,  we first search for the VOs of the fracture $Q_{\dagger}$ with minimal dynamic width and choose among them one with the smallest static width. 
\nop{
In particular, it shows:
\begin{proposition}
\label{prop:CQAP_0_dynamic_width_01}
A query is $\text{CQAP}_0$ if and only if it has dynamic width $0$.
A query is $\text{CQAP}_1$ if and only if it has dynamic width $1$.
\end{proposition}
}

\begin{exa}
\label{ex:widths-four-cycle}
We show how to compute the widths for the variable order of the fractured 4-cycle query in Example~\ref{ex:queries-var-orders}: For the bag at variable $A$, we have $\rho^*(\{A\}\cup\dep(A)) = \rho^*(\{A,D_2,B_1\})=2$, which is the largest fractional edge cover number for any variable in the variable order. Further access-top variable orders are possible by swapping $B_1$ with $D_2$ and $B_2$ with $D_1$, yielding the same overall cost. The static width of the fractured 4-cycle query is thus 2. To compute the dynamic width of the same variable order, we consider for each atom, the fractional edge cover number of each bag without the variables in this atom. For the bag $\{A\}\cup\dep(A)=\{A,D_2,B_1\}$, we get $\rho^*(\{A,D_2,B_1\}\setminus\{A,B_1\})=1$ for the atom $R(A,B_1)$ and $\rho^*(\{A,D_2,B_1\}\setminus\{A,D_2\})=1$ for the atom $U(A,D_2)$. Overall, the dynamic width of this variable order is 1.
\qed
\end{exa}

\begin{exa}\label{ex:intro-access-patterns}
Consider the query $Q(\calO\mid\calI) = R(A,B,C), S(A,B,D), T(A,E).$ The static width $\fw$ and the dynamic width $\delta$ of $Q$ vary depending on the access pattern:
\begin{itemize}
	\item $\fw=1$ and $\dfw=0$ for $Q(C,D,E \mid A,B)$, $Q(A,B,C,D,E |\cdot)$, $Q(\cdot | A,B,C,D,E)$ and $Q(B,C,D,E | A)$;
	\item $\fw=1$ and $\dfw=1$ for $Q(A,C,D,E \mid B)$;
	\item $\fw=2$ and $\dfw=1$ for $Q(A,C,D \mid B,E)$;
	\item $\fw=2$ and $\dfw=2$ for $Q(A,E \mid B,C,D)$;
	\item $\fw=3$ and $\dfw=2$ for $Q(A,B \mid C,D,E)$.
\end{itemize} 
\end{exa}

  The next example illustrates that the indicator projections constructed by the function 
  \textsf{indicators} in Figure~\ref{fig:extended_variable_order} can lower the static width of the VO
  of a query. Lower static width implies lower preprocessing time as stated in Theorem~\ref{thm:general}.  

 \begin{exa}
 \label{exa:effect_of_indicators} 
Recall the triangle CQAP
	$Q(B,C|A) = R(A,B), S(B,C), T(C,A)$ from Example~\ref{ex:CQAP-triangle} and its access-top VO $\omega$ in Figure~\ref{fig:general_triangle-prelim}.
	The indicator projection $I_{A,B}R(A,B)$ is below $C$ in $\omega$, since the output of $\gyo^*(\{I_{A,B}R(A,B)\},$ $\{ R(A,B), T(C,A)\})$ is $\{I_{A,B}R(A,B)\}$.

	Assume first that $I_{A,B}R(A,B)$ is not included in $\omega$. In this case, the query $Q_C$ is defined as the join of $S(B,C)$ and $T(C,A)$, which means
	$\rho_{Q_C}^*(\{C\} \cup dep(C)) = \rho_{Q_C}^*(\{A,B,C\}) = 2$.
	
	Assume now that  $I_{A,B}R(A,B)$ is included in $\omega$. In this case, 
	the query $Q_C$ is defined as the join of $S(B,C)$, $T(C,A)$, and $I_{A,B}R(A,B)$, which means 
	$\rho_{Q_C}^*(\{C\} \cup dep(C)) = \rho_{Q_C}^*(\{A,B,C\}) = \frac{3}{2}$.
	Hence, the fractional edge cover number reduces from $2$ to $\frac{3}{2}$. 
\nop{	By adding the indicator $I_{A,B}R$ below $C$, the fractional edge cover number $\rho^*(\{C\} \cup dep(C)) = \rho^*(\{A,B,C\})$ of the query $Q_C(A,B,C) = S(B,C), T(C,A), I_{A,B}R(A,B)$ reduces from $2$ to $\frac{3}{2}$.} 
	This fractional edge cover number dominates the static width of $\omega$, so the static width of $\omega$ is $\frac{3}{2}$.

	The dynamic width of $\omega$ (including $I_{A,B}R$) is dominated by the fractional edge cover number $\rho^*_{Q_C}(\{C\} \cup dep(C)) - \calS) = \rho^*_{Q_C}(\{A,B,C\} - \calS)$, where $\calS$ is the schema of $S$, $T$, or $I_{A,B}R$. In each of these three cases, $\{A,B,C\} - \calS$ consists of a single variable. Hence, the fractional edge cover number is 1 and, therefore, the dynamic width of $\omega$ is 1.
\qed		
	\end{exa}

\begin{figure}[t]
\small
\centering

\begin{minipage}[b]{0.22\linewidth}
	\begin{tikzpicture}[xscale=0.96, yscale=0.8]
		\node at (-0.25, 0.0) (A) {\small  $A$};
		\node at (-1.0, -1.0) (B) {\small $B$};
		\node at (0.5, -1.0) (C) {\small $C$};
		\node at (-0.25, 1.0) (D) {\small $D$};
		\node at (-0.25, 2.0) (E) {\small $E$};
		\node at (-1., -2.0) (F) {\small $F$};
		\node at (-1., -3.0) (G) {\small $G$};
		\node at (.5, -2.0) (H) {\small $H$};
		\node at (.5, -3.0) (J) {\small $J$};
		
		\node at (-1.1, -0.25) (R1) {\small $R_1$};
		\node at (0.65, -0.25) (R2) {\small $R_2$};
		\node at (-.25, -1.5) (R3) {\small $R_3$};
		\node at (0.25, 0.5) (R4) {\small $R_4$};
		\node at (0.25, 1.5) (R5) {\small $R_5$};
		\node at (-1.5, -1.5) (R6) {\small $R_6$};
		\node at (-1.5, -2.5) (R7) {\small $R_7$};
		\node at (1., -1.5) (R8) {\small $R_8$};
		\node at (1., -2.5) (R9) {\small $R_9$};

		\node at (0,  -3) (invisible) {};

		\begin{pgfonlayer}{background}
			\draw[opacity=.3,fill opacity=.3,line cap=round, line join=round, line width=14pt,color=teal] (-0.25,0.0) -- (-1,-1); 
			\draw[opacity=.3,fill opacity=.3,line cap=round, line join=round, line width=14pt,color=teal] (0.5,-1.0) -- (-1,-1); 
			\draw[opacity=.3,fill opacity=.3,line cap=round, line join=round, line width=14pt,color=teal] (-0.25,0.0) -- (0.5, -1.0); 

			\draw[opacity=.5,fill opacity=.5,line cap=round, line join=round, line width=13pt,color=orange] (-0.25,0.0) -- (-0.25, 1.0);
			\draw[opacity=.5,fill opacity=.5,line cap=round, line join=round, line width=15pt,color=orange] (-0.25,1.0) -- (-0.25, 2.0);

			\draw[opacity=.5,fill opacity=.5,line cap=round, line join=round, line width=13pt,color=orange] (-1.0,-1.0) -- (-1.0, -2.0);
			\draw[opacity=.5,fill opacity=.5,line cap=round, line join=round, line width=15pt,color=orange] (-1.0, -3.0) -- (-1.0, -2.0);

			\draw[opacity=.5,fill opacity=.5,line cap=round, line join=round, line width=13pt,color=orange] (0.5,-1.0) -- (0.5, -2.0);
			\draw[opacity=.5,fill opacity=.5,line cap=round, line join=round, line width=15pt,color=orange] (0.5, -3.0) -- (0.5, -2.0);

		\end{pgfonlayer}
	\end{tikzpicture}
\end{minipage}
\begin{minipage}[b]{0.18\linewidth}
	\scalebox{0.9}{
	\centering
	\begin{tikzpicture}[xscale=0.925, yscale=0.7]
\begin{scope}[yshift=2.4cm, xshift=0.2cm]
\node at (0.2, -1.2) (A) {\small  $\dep(A)=\emptyset$};	
\node at (0.2, -1.8) (B) {\small  $\dep(B)=\{A\}$};	
\node at (0.2, -2.4) (C) {\small  $\dep(C)=\{A,B\}$};	
\node at (0.2, -3.0) (D) {\small  $\dep(D)=\{A\}$};
\node at (0.2, -3.6) (E) {\small  $\dep(E)=\{D\}$};
\node at (0.2, -4.2) (F) {\small  $\dep(F)=\{B\}$};
\node at (0.2, -4.8) (G) {\small  $\dep(G)=\{F\}$};
\node at (0.2, -5.4) (H) {\small  $\dep(H)=\{C\}$};
\node at (0.2, -6.0) (H) {\small  $\dep(J)=\{H\}$};
\end{scope}
\node at (0,  -4) (invisible) {};
	\end{tikzpicture}
	}
\end{minipage}
\begin{minipage}[b]{0.58\linewidth}
	\scalebox{0.85}{
	\centering
	\begin{tikzpicture}[xscale=1.3, yscale=0.9]
		\node at (0, 0.0) (A) {\small  $A$};
		\node at (0.0, -1.0) (B) {\small $B$} edge[-] (A);
		\node at (-0.0, -2.0) (C) {\small $C$} edge[-] (B);
		\node at (1.0, -3.0) (D) {\small $D$} edge[-] (C);
		\node at (1.0, -4.0) (E) {\small $E$} edge[-] (D);
		\node at (2.0, -2.0) (F) {\small $F$} edge[-] (B);
		\node at (2.0, -3.0) (G) {\small $G$} edge[-] (F);
		\node at (0.0, -3.0) (H) {\small $H$} edge[-] (C);
		\node at (0.0, -4.0) (J) {\small $J$} edge[-] (H);
		
		\node at (-1.5, -2.0) (R1) {\small $R_1(A,B)$} edge[-] (B);
		\node at (-1.25, -3.6) (R2) {\small $R_2(B,C)$} edge[-] (C);
		\node at (-2.5, -3.6) (R3) {\small $R_3(C,A)$} edge[-] (C);
		\node at (-3.0, -3.0) (IR1) {\small $I_{A,B}{R_1}(A,B)$} edge[-] (C);
		\node at (2.0, -4.0) (R4) {\small $R_4(A,D)$} edge[-] (D);
		\node at (1.3, -5.0) (R5) {\small $R_5(D,E)$} edge[-] (E);
		\node at (3.2, -3.0) (R6) {\small $R_6(B,F)$} edge[-] (F);
		\node at (3.2, -4.0) (R7) {\small $R_7(F,G)$} edge[-] (G);
		\node at (-1.0, -4.5) (R8) {\small $R_8(C,H)$} edge[-] (H);
		\node at (0.0, -5.0) (R9) {\small $R_9(H,J)$} edge[-] (J);
		\node at (0,  -4) (invisible) {};
	\end{tikzpicture}
	}
\end{minipage}

\begin{minipage}[b]{0.495\linewidth}
	\scalebox{0.845}{
	\centering
	\begin{tikzpicture}[xscale=1.3, yscale=0.9]

		\node at (0, 1.0) (D) {\small  $D$};
		\node at (0, 0.0) (A) {\small  $A$} edge[-] (D);
		\node at (0.0, -1.0) (B) {\small $B$} edge[-] (A);
		\node at (-0.0, -2.0) (C) {\small $C$} edge[-] (B);
		\node at (1.0, -.0) (E) {\small $E$} edge[-] (D);
		\node at (1.0, -2.0) (F) {\small $F$} edge[-] (B);
		\node at (1.0, -3.0) (G) {\small $G$} edge[-] (F);
		\node at (0.0, -3.0) (H) {\small $H$} edge[-] (C);
		\node at (0.0, -4.0) (J) {\small $J$} edge[-] (H);
		
		\node at (-1.5, -2.0) (R1) {\small $R_1(A,B)$} edge[-] (B);
		\node at (-0.75, -3.5) (R2) {\small $R_2(B,C)$} edge[-] (C);
		\node at (-2, -3.5) (R3) {\small $R_3(C,A)$} edge[-] (C);
		\node at (-3.0, -3.0) (IR1) {\small $I_{A,B}{R_1}(A,B)$} edge[-] (C);
		\node at (-1.5, -1.0) (R4) {\small $R_4(A,D)$} edge[-] (A);
		\node at (1.0, -1.0) (R5) {\small $R_5(D,E)$} edge[-] (E);
		\node at (2.15, -3.0) (R6) {\small $R_6(B,F)$} edge[-] (F);
		\node at (2.15, -4.0) (R7) {\small $R_7(F,G)$} edge[-] (G);
		\node at (1.0, -4.0) (R8) {\small $R_8(C,H)$} edge[-] (H);
		\node at (0.0, -5.0) (R9) {\small $R_9(H,J)$} edge[-] (J);
		\node at (0,  -4) (invisible) {};
	\end{tikzpicture}
	}
\end{minipage}
\begin{minipage}[b]{0.495\linewidth}
	\scalebox{0.845}{
	\centering
	\begin{tikzpicture}[xscale=1.3, yscale=0.9]

		\node at (0.0, 2.0) (E) {\small $E$};
		\node at (0, 1.0) (D) {\small  $D$} edge[-] (E);
		\node at (0, 0.0) (A) {\small  $A$} edge[-] (D);
		\node at (0.0, -1.0) (B) {\small $B$} edge[-] (A);
		\node at (-0.0, -2.0) (C) {\small $C$} edge[-] (B);
		\node at (1.0, -2.0) (F) {\small $F$} edge[-] (B);
		\node at (1.0, -3.0) (G) {\small $G$} edge[-] (F);
		\node at (0.0, -3.0) (H) {\small $H$} edge[-] (C);
		\node at (0.0, -4.0) (J) {\small $J$} edge[-] (H);
		
		\node at (-1.5, -2.0) (R1) {\small $R_1(A,B)$} edge[-] (B);
		\node at (-0.75, -3.5) (R2) {\small $R_2(B,C)$} edge[-] (C);
		\node at (-2, -3.5) (R3) {\small $R_3(C,A)$} edge[-] (C);
		\node at (-3.0, -3.0) (IR1) {\small $I_{A,B}{R_1}(A,B)$} edge[-] (C);
		\node at (-1.5, -1.0) (R4) {\small $R_4(A,D)$} edge[-] (A);
		\node at (1.0, 0.0) (R5) {\small $R_5(D,E)$} edge[-] (D);
		\node at (2.15, -3.0) (R6) {\small $R_6(B,F)$} edge[-] (F);
		\node at (2.15, -4.0) (R7) {\small $R_7(F,G)$} edge[-] (G);
		\node at (1.0, -4.0) (R8) {\small $R_8(C,H)$} edge[-] (H);
		\node at (0.0, -5.0) (R9) {\small $R_9(H,J)$} edge[-] (J);
		\node at (0,  -4) (invisible) {};
	\end{tikzpicture}
	}
\end{minipage}
\caption{Top left: The hypergraph of the query $Q$ in Example~\ref{ex:triangle_three_tails}. Remaining three: the optimal access-top VOs of the query $Q$ with the roots $A$, $D$ and $E$, respectively. All other access-top VOs are analogous to these three VOs. The dependent sets of the two VOs in the second row are omitted.}
\label{fig:general_large}
\end{figure}

The next example demonstrates that indicator projections are inevitable
when we want to construct VOs with minimal dynamic width
for cyclic CQAPs.
As stated in Theorem~\ref{thm:general}, low dynamic width implies low update time for CQAPs.

\begin{exa}
\label{ex:triangle_three_tails}
Consider the following query:
\begin{align*}
Q(A,B,C,D,E,F,G,H,J \mid \cdot) = & R_1(A,B), R_2(B,C), R_3(C,A),R_4(A,D), R_5(D,E),\\ 
&  R_6(B,F), R_7(F,G), R_8(C,H), R_9(H,J)
\end{align*}
It is a triangle query with three tails. The  fracture of the query is the same as the query.
Figure~\ref{fig:general_large} shows the hypergraph (top-left) of the query and three access-top VOs for the query. 
No other VO for the query has better static or dynamic width than these VOs.

Consider the VO in the top right of Figure~\ref{fig:general_large}.
The subtree of the variable order rooted at $C$ has the atoms 
$\calR = \{R_3(C,A), R_2(B,C), R_8(C,H), R_9(H,J), R_5(D,E), R_4(A,D)\}$.
The atoms that are not in the subtree but whose 
schemas intersect with $\{C\} \cup dep(C) = \{A,B,C\}$ are 
$R_1(A,B)$ and $R_6(B,F)$. Hence, we consider the indicator projections  
$\calI = \{I_{A,B}R_1(A,B), I_{B}R_6(B)\}$. We have  $\gyo^*(\calI,\calR) = \{I_{A,B}R_1(A,B)\}$.
Therefore, $I_{A,B}R_1(A,B)$ becomes a child of $C$.
This indicator projection 
reduces the dynamic width of the variable order from $2$ to $\frac{3}{2}$, as explained next.

It holds $\{C\} \cup dep(C) = \{A,B,C\}$. We compute the maximal 
$\rho^*(\{A,B,C\} \setminus \calS)$, where $\calS$ is the schema of any atom in the
subtree of the variable order rooted at $C$.  
If we choose $\calS$ to be the schema of $R_9(H,J)$, we obtain 
$\{A,B,C\} \setminus \calS = \{A,B,C\}$. 
In case $I_{A,B}{R_1}$ is not included in the subtree rooted at $C$,
we have $\rho^*(\{A,B,C\} \setminus \calS) =2$. Otherwise, 
we have $\rho^*(\{A,B,C\} \setminus \calS) =\frac{3}{2}$
(by assigning a weight of $\frac{1}{2}$ to the indicator projection $I_{A,B}{R_1}$
and to each of the atoms $R_3$ and $R_2$).
For any other variable $X$ and atom $R(\calS)$ below $X$, 
 the fractional edge cover number $\rho^*((\{X\} \cup \dep(X))  \setminus \calS) $
is not greater than $1$.
Hence, we conclude that the dynamic width of the VO is $\frac{3}{2}$.
\nop{ 
dynamic width of the subtree $\omega_C$ rooted at $C$ is defined as the maximal fractional edge cover number of 
$\{C\} \cup dep(C) = \{A,B,C\}$ minus the schema of an atom below $C$. 
If we choose the atom to be $R_9(H,J)$, the remaining variables are still $\{A,B,C\}$.
With the indicator projection $I_{A,B}{R_1}$, the fractional edge cover number is $\rho^*(A,B,C) = \frac{3}{2}$ (by assigning a weight of $\frac{1}{2}$ to each atom $I_{A,B}{R_1}$, $R_3$ and $R_2$).
Without $I_{A,B}{R_1}$, the fractional edge cover number is $\rho^*(A,B,C) = 2$.
Hence, the indicator projection $I_{A,B}{R_1}$ reduces the dynamic width of $\omega_C$ from $2$ to $\frac{3}{2}$.
Since $\omega_C$ is the only subtree that has a dynamic width greater than 1, the dynamic width of the query $Q$ is $\frac{3}{2}$.
}

The two VOs in the second row of Figure~\ref{fig:general_large} are similar to the aforementioned VO: Each of them has 
the variables $A$, $B$, and $C$ on one root-to-leaf path, followed by the atom $R_9$, which has no intersection with 
$\{A, B, C\}$. 
The indicator projection $I_{A,B}{R_1}$ created under variable $C$ reduces the dynamic width from $2$ to $\frac{3}{2}$ in the same way. 
\qed
\end{exa}
	
Prior work defined the static and the dynamic width of conjunctive queries 
without access patterns~\cite{KaraNOZ2020}. It was shown that for any hierarchical conjunctive 
query with static width $\fw$ and dynamic width $\dfw$, it holds 
$\dfw=\fw$ or $\dfw=\fw-1$ 
(Proposition 3.7 in~\cite{KaraNOZ2020}).
The proof can easily be adapted to the width measures of CQAPs.
The only change is that we argue over access-top variable orders for the fractures 
of CQAPs instead of free-top variable orders for conjunctive queries. 

\begin{prop}[Corollary of Proposition 3.7 in~\cite{KaraNOZ2020}]
	\label{prop:width_delta_inequal}
	For any CQAP with hierarchical fracture, static width
	$\fw$ and dynamic width $\dfw$, it holds either
	$\dfw=\fw$ or $\dfw=\fw-1$.
\end{prop}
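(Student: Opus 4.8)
The plan is to establish, for \emph{every} variable order $\omega$ of the fracture $Q_\dagger$, the two-sided bound $\fw(\omega)-1\le\dfw(\omega)\le\fw(\omega)$, then argue that both widths are integers so that $\dfw(\omega)$ can meet the interval $[\fw(\omega)-1,\fw(\omega)]$ only at its endpoints, and finally transfer the claim from one variable order to the query. The last transfer is immediate from Definition~\ref{def:fac_width_CQAP}: the pair $(\dfw(Q),\fw(Q))$ is the lexicographically smallest pair $(\dfw(\omega),\fw(\omega))$ over $\omega\in\acceff(Q_\dagger)$, hence it is realised by a single access-top variable order $\omega^\ast$, and applying the per-order claim to $\omega^\ast$ yields $\dfw(Q)\in\{\fw(Q)-1,\fw(Q)\}$. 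Note that the per-order inequalities will hold for all variable orders, so restricting the minimisation to $\acceff(Q_\dagger)$ plays no role beyond selecting $\omega^\ast$; this is exactly the adaptation from free-top orders (for conjunctive queries) to access-top orders of fractures (for CQAPs) announced in the text.

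I fix $\omega$ and let $X^\ast$ attain the maximum defining $\fw(\omega)$, so $\fw(\omega)=\rho^\ast_{Q_{X^\ast}}(\calS)$ with $\calS=\{X^\ast\}\cup\dep_\omega(X^\ast)$. The upper bound $\dfw(\omega)\le\fw(\omega)$ is by monotonicity of the fractional edge cover number under set inclusion: deleting the variables $\calY$ of an atom only shrinks the covered set, so $\rho^\ast_{Q_X}((\{X\}\cup\dep_\omega(X))\setminus\calY)\le\rho^\ast_{Q_X}(\{X\}\cup\dep_\omega(X))\le\fw(\omega)$ for every $X$ and every atom. For the lower bound I pick any atom $R(\calY)\in\atoms(\omega_{X^\ast})$, which exists since $X^\ast$ lies on a root-to-leaf path ending in some atom of $\omega_{X^\ast}$. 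Given an optimal fractional edge cover $\boldsymbol{\lambda}$ of $\calS\setminus\calY$ in $Q_{X^\ast}$, raising the weight of $R(\calY)$ to $1$ produces a feasible cover of all of $\calS$ (the variables in $\calS\cap\calY$ are now covered by $R(\calY)$, the rest already by $\boldsymbol{\lambda}$) whose value exceeds that of $\boldsymbol{\lambda}$ by at most $1$; hence $\rho^\ast_{Q_{X^\ast}}(\calS)\le\rho^\ast_{Q_{X^\ast}}(\calS\setminus\calY)+1$. Since $\dfw(\omega)$ is a maximum over variables and atoms, $\dfw(\omega)\ge\rho^\ast_{Q_{X^\ast}}(\calS\setminus\calY)\ge\fw(\omega)-1$.

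It remains to establish integrality, which is where the hierarchical hypothesis enters and is the main obstacle, since the sandwich inequality alone only places $\dfw(\omega)$ in the \emph{real} interval $[\fw(\omega)-1,\fw(\omega)]$. Because $Q_\dagger$ is hierarchical it is $\alpha$-acyclic, so the GYO-based construction of Figure~\ref{fig:extended_variable_order} adds no indicator projections: at each variable $X$ the candidate indicators are contained in an atom already present in the subtree rooted at $X$ and are therefore eliminated by GYO, leaving an acyclic core. Consequently each $Q_X$ is the join of a subset $\calR_X\subseteq\atoms(Q_\dagger)$ of the original atoms; intersecting every atom set $\atoms(\cdot)$ with $\calR_X$ preserves the three hierarchical containment cases, so $Q_X$ is itself hierarchical, and $\calS\subseteq\vars(Q_X)$ because each variable of $\dep_\omega(X)$ occurs in some atom of $\omega_X$. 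By Lemma~\ref{lem:rho_rhostar}, $\rho^\ast_{Q_X}=\rho_{Q_X}$ on all the relevant subsets, so every local cover number is an integer and hence so are the maxima $\fw(\omega)$ and $\dfw(\omega)$. Combined with $\fw(\omega)-1\le\dfw(\omega)\le\fw(\omega)$ this forces $\dfw(\omega)\in\{\fw(\omega)-1,\fw(\omega)\}$, which by the transfer of the first paragraph completes the proof. The delicate point to verify carefully is precisely the absence of indicators (equivalently, the hierarchicity of each $Q_X$), as it is what collapses the continuum $[\fw(\omega)-1,\fw(\omega)]$ to its two integer endpoints.
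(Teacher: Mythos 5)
Your proof is correct. Note that the paper itself does not supply a proof of this proposition: it states it as a corollary of Proposition~3.7 of the cited prior work and only remarks that the adaptation consists in arguing over access-top VOs of the fracture rather than free-top VOs of a plain conjunctive query. Your argument is a complete standalone reconstruction along the lines that prior proof must take: the two-sided bound $\fw(\omega)-1\le\dfw(\omega)\le\fw(\omega)$ per variable order (monotonicity of $\rho^\ast$ for the upper bound, raising one atom's weight to $1$ for the lower bound), integrality of both widths via Lemma~\ref{lem:rho_rhostar}, and the transfer to the query through the lexicographic minimum, which is attained by a single $\omega^\ast$ since the set of VOs is finite. You correctly single out the one delicate step, namely that no indicator projections are added for a hierarchical fracture so that each $Q_X$ is itself hierarchical; your assertion there is true and can be justified in one line: for any candidate indicator schema $\calZ=\calY\cap\calS$, the sets $\atoms(A)$ for $A\in\calZ$ form a chain (all contain $R(\calY)$), and the minimal element of the chain lies in some atom of $\atoms(\omega_X)$ because it belongs to $\dep_\omega(X)$, so that atom contains all of $\calZ$ and GYO's subsumption rule discards the indicator, leaving the acyclic hypergraph of a hierarchical subquery. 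With that detail filled in, the proof stands.
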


\subsection{From Canonical to Access-Top VOs}
\label{sec:canonical-free-top}
\begin{figure}[t]
	\centering
	\setlength{\tabcolsep}{3pt}
	\renewcommand{\arraystretch}{1.05}
	\renewcommand{\linenumber}{\makebox[2ex][r]{\rownumber\TAB}}
	\setcounter{magicrownumbers}{0}
	\begin{tabular}[t]{@{}c@{}c@{}l@{}}
	\toprule
  \multicolumn{3}{l}{$\ftvo$(VO $\omega$, access pattern $(\calO|\calI)$) : VO} \\
  \midrule
  \multicolumn{3}{l}{\MATCH $\omega$:}                                                                         \\
  \midrule
  \phantom{ab} & $R(\calY)$\hspace*{2.5em} & \linenumber \RETURN $R(\calY)$  \\[2pt]
  \cmidrule{2-3} \\[-6pt]
                &
  \begin{minipage}[t]{2cm}
    \vspace{-1.8em}
    \hspace*{-0.55cm}
    \begin{tikzpicture}[xscale=0.5, yscale=1]
      \node at (0,-2)  (n4) {$X$};
      \node at (-1.2,-3)  (X) {$\omega_1$} edge[-] (n4);
      \node at (0,-3)  (n2) {$\ldots$};
      \node at (1.2,-3)  (X) {$\omega_k$} edge[-] (n4);
    \end{tikzpicture}
  \end{minipage}
                &
  \begin{minipage}[t]{10cm}
    \vspace{-0.4cm}
    \linenumber \LET $\omega_i' = \ftvo(\omega_i, (\calO|\calI)), \forall i\in[k]$ \\[0.5ex]
	\linenumber \LET $\calD=
      \begin{cases}
        \emptyset & \text{if } X \in \calI \\[0.2ex]
        vars(\omega) \cap \calI, & \text{else if } X \in \calO \\[0.2ex]
        vars(\omega) \cap (\calI \cup \calO) & \text{otherwise}
      \end{cases}$\\[0.5ex]
    \linenumber\LET $\{\hat{\omega}^{i}_1, \dots, \hat{\omega}^{i}_{m_i}\} = \delete(\omega_i', \calD),\ \forall i \in [k]$ \\[0.5ex]
    \linenumber \LET $(X_1, {.}{.}{.}, X_{\ell})$  be 
    an ordering of the variables in $\calD$ \\
    \phantom{---} such that it  is compatible with the partial order of $\omega$ and\\
    \phantom{---} the input variables come before the output variables \\[0.5ex]

    \linenumber \RETURN $
      \begin{array}{@{~~}c@{~~}}
        \tikz {
        \node at (0,-1)  (first) {$X_1$};
        \node at (0,-1.3)  (point) {$\cdot$};
        \node at (0,-1.45)  (point) {$\cdot$};
        \node at (0,-1.6)  (point) {$\cdot$};
        \node at (0,-2)  (last) {$X_{\ell}$};
        \node at (0,-2.75)  (X) {$X$} edge[-] (last);
        \node at (-2.0,-3.5)  (n1) {$\hat{\omega}^{1}_1$} edge[-] (X);
        \node at (-1.5,-3.5)  (n2) {$\ldots$};
        \node at (-0.75,-3.5)  (n3) {$\hat{\omega}^1_{m_1}$} edge[-] (X);
        \node at (0,-3.5)  (n4) {$\ldots$};
        \node at (0.75,-3.5)  (n5) {$\hat{\omega}^{k}_1$} edge[-] (X);
        \node at (1.25,-3.5)  (n6) {$\ldots$};
        \node at (2.0,-3.5)  (n7) {$\hat{\omega}^k_{m_k}$} edge[-] (X);
        }
      \end{array}$ \\[0.5ex]
  \end{minipage}                                                                                   \\[2.75ex]
  \bottomrule
	\end{tabular}\vspace{-0.0em}
	\caption{Construction of an access-top VO from a canonical
		VO $\omega$ of a hierarchical CQAP with access pattern $(\calO|\calI)$.
		The function $\delete(\omega', \calD)$, defined in Figure~\ref{fig:node_deletion}, deletes the variables in $\calD$ from the VO $\omega'$.}
	\label{fig:canonical-to-free-top}
	\vspace*{-0.0em}
\end{figure}

\begin{figure}[t]
	\centering
	\setlength{\tabcolsep}{3pt}
	\renewcommand{\arraystretch}{1.05}
	\renewcommand{\linenumber}{\makebox[2ex][r]{\rownumber\TAB}}
	\setcounter{magicrownumbers}{0}
	\begin{tabular}[t]{@{}c@{}c@{}l@{}}
		\toprule
		\multicolumn{3}{l}{$\delete$(VO $\omega$, variables $\calD$) : set of VOs} \\
		\midrule
		\multicolumn{3}{l}{\MATCH $\omega$:}                                                               \\
		\midrule
		\phantom{ab} & $R(\calY)$\hspace*{2.5em} & \linenumber \RETURN $\{R(\calY)\}$                      \\[2pt]
		\cmidrule{2-3}                                                                                     \\[-6pt]
		             &
		\begin{minipage}[t]{2cm}
			\vspace{-1.8em}
			\hspace*{-0.55cm}
			\begin{tikzpicture}[xscale=0.5, yscale=1]
				\node at (0,-2)  (n4) {$Y$};
				\node at (-1.2,-3)  (n1) {$\omega_1$} edge[-] (n4);
				\node at (0,-3)  (n2) {$\ldots$};
				\node at (1.2,-3)  (n3) {$\omega_k$} edge[-] (n4);
			\end{tikzpicture}
		\end{minipage}
		             &
		\begin{minipage}[t]{10cm}
			\vspace{-0.4cm}
			\linenumber \LET $\{\omega^i_1, {.}{.}{.}, \omega^i_{m_i}\} = \delete(\omega_i, \calD), \ \forall i\in[k]$ \\[0.5ex]
			\linenumber \IF $Y \notin \calD$ \\[0.5ex]
			\linenumber \TAB \RETURN
			$\left\{
				\begin{array}{@{~~}c@{~~}}
					\tikz {
					\node at (0,-2.75)  (X) {$Y$};
					\node at (-1.8,-3.5)  (n1) {$\omega_1^{1}$} edge[-] (X);
					\node at (-1.3,-3.5)  (n2) {$\ldots$};
					\node at (-0.65,-3.5)  (n3) {$\omega^1_{m_1}$} edge[-] (X);
					\node at (0,-3.5)  (n4) {$\ldots$};
					\node at (0.65,-3.5)  (n5) {$\omega^{k}_1$} edge[-] (X);
					\node at (1.15,-3.5)  (n6) {$\ldots$};
					\node at (1.8,-3.5)  (n7) {$\omega^k_{m_k}$} edge[-] (X);
					}
				\end{array}
				\right\}$
			\\[0.5ex]
			\nop{\linenumber \ELSE\IF $Y$ has parent $Z$ \\[0.5ex]
			\linenumber \TAB \RETURN
			$\left\{
				\begin{array}{@{~~}c@{~~}}
					\tikz {
					\node at (0,-2.75)  (X) {$Z$};
					\node at (-1.8,-3.5)  (n1) {$\omega_1^{1}$} edge[-] (X);
					\node at (-1.3,-3.5)  (n2) {$\ldots$};
					\node at (-0.65,-3.5)  (n3) {$\omega^1_{m_1}$} edge[-] (X);
					\node at (0,-3.5)  (n4) {$\ldots$};
					\node at (0.65,-3.5)  (n5) {$\omega^{k}_1$} edge[-] (X);
					\node at (1.15,-3.5)  (n6) {$\ldots$};
					\node at (1.8,-3.5)  (n7) {$\omega^k_{m_k}$} edge[-] (X);
					}
				\end{array}
				\right\}$ \\[0.5ex]
			}
			\linenumber \RETURN $\left\{\omega^1_1, {.}{.}{.}, \omega^1_{m_1}, {.}{.}{.}, \omega^k_1, {.}{.}{.}, \omega^k_{m_k} \right\}$
		\end{minipage} \\[2.75ex]
		\bottomrule
	\end{tabular}
	\caption{Deletion of  a set $\calD$ of variables from a  VO $\omega$.
		In case $\omega$ has a root variable $Y$, the variables in $\calD$ are first deleted from the
		 child trees of $Y$. If $Y$ is included in $\calD$,   the child trees of $Y$ become a forest of trees without any common root.}
		\nop{If $Y \in \calD$ and $Y$ has a parent $Z$, the child trees  of $Y$ are appended to  $Z$. If $Y \in \calD$ and $Y$ has no parent, the child trees of $Y$ become independent.}
	\label{fig:node_deletion}
\end{figure}

Given a canonical VO $\omega$ of a hierarchical CQAP $Q$ with input variables $\calI$ and output variables $\calO$, the function $\ftvo(\omega, (\calO|\calI))$ in Figure~\ref{fig:canonical-to-free-top} returns an access-top VO for $Q$
whose static and dynamic widths equal the corresponding widths of $Q$.

First, we give the high-level idea of the construction.  
At each variable $X$, the function {\em pulls up} some variables from the subtree rooted at $X$, which means that 
it deletes these variables from the subtree  and puts them on a path
on top of $X$. 
\nop{If $X$ is an input variable, no variable needs to be pulled up.} 
If $X$ is an output variable, all input variables in the subtree are pulled up. 
If it is a bound variable, all free variables in the subtree are pulled up. 
In the newly constructed path on top of $X$, the input variables are placed on top of the output variables.

We explain the function $\ftvo(\omega, (\calO|\calI))$ in more detail. 
It proceeds recursively on the structure of $\omega$.
Consider a variable $X$ in $\omega$ and assume that the child trees of $X$ are already access-top.  
The function  defines a set $\calD$ of variables (Line 3) that are going to be deleted from the subtree $\omega_X$ rooted at $X$
and put on a path on top of $X$. 
If $X$ is an input variable (Case 1 in Line 3), then $\calD$ is empty, which means that we do not need to pull up 
any variable from $\omega_X$.
If $X$ is an output variable  (Case 2 in Line 3), then $\calD$ consists of all input variables in $\omega_X$.
If $X$ is a bound variable (Case 3 in Line 3), then $\calD$ consists of all free  variables in $\omega_X$.
The deletion of the variables in $\calD$ from $\omega_X$ (Line 4) is implemented 
by the function
$\delete$ in Figure \ref{fig:node_deletion}, which we explain in more detail further below.
The top-down ordering of the new path  constructed from the variables in $\calD$ 
respects the partial ordering defined by $\omega_X$ and has  the input variables on top of the output variables (Line 5). 
Observe that this is possible, since the child trees of $X$ are already access-top.

Given a variable order $\omega'$ and a set $\calD$ of variables  to be deleted from $\omega'$,
the function $\delete(\omega,'\calD)$ in Figure \ref{fig:node_deletion}
traverses recursively over each variable $Y$ in $\omega'$ with child trees 
$\omega_1, \ldots, \omega_k$.
First, the function  deletes the variables in $\calD$ from the child trees of $Y$
and obtains the trees $\calT = \{\delete(\omega_1, \calD), \ldots, \delete(\omega_k, \calD)\}$ (Line~2). 
If $Y$ is not included in $\calD$, the function returns the tree with root $Y$ and child trees 
$\calT$  (Lines 3-4). 
Otherwise, it returns the forest $\calT$ (Line 5).

\begin{exa}
\label{ex:access-top-large}
\rm
Consider the query 
$$Q(C,D \mid E) = R(A,B,C), S(A,B,D), T(A,E),$$
which is hierarchical but not free-dominant. 
Figure~\ref{fig:preprocessing_large_hypergraphs}  shows the hypergraph 
and the canonical VO $\omega$ of the query (top row).
We explain an intermediate and the final step of the function 	
	 $\ftvo(\omega(\{C,D\} | \{E\}))$ in Figure~\ref{fig:canonical-to-free-top} that transforms $\omega$
	 into an access-top VO. 

At variable $B$ in $\omega$, the function determines that $B$ is bound and its two children are free. Hence, the function 
 moves 
$C$ and $D$ on a path above $B$. 
Figure~\ref{fig:canonical-to-free-top} (bottom row, left)
shows the VO $\omega'$ that arises from this step. At variable $A$ in $\omega'$, the function determines that $A$
is bound and the children $C$, $D$, and $E$  are free. Thus, it puts the latter variables on 
a path on top of $A$ such that  the input variable $E$ sits on top of the output variables $C$ and $D$. 
Figure~\ref{fig:canonical-to-free-top} (bottom row, right) shows the resulting access-top 
VO, which is the final VO returned by the function $\ftvo(\omega(\{C,D\} | \{E\}))$.
\qed
\end{exa}

\begin{figure}[t]
\begin{minipage}[b]{0.33\linewidth}
  \centering
  \begin{tikzpicture}[xscale=1.1, yscale=1.0]
    \node at (-0.25, 0.0) (A) {\small  $A$};
    \node at (-1.0, -1.0) (B) {\small $B$};
    \node at (0.5, -1.0) (E) {\small $E$};
    \node at (-1.75, -2.0) (C) {\small $C$};
    \node at (-0.5, -2.0) (D) {\small $D$};
    \node at (0.5, -1.75) (T) {\small $T(A,E)$};
    \node at (-1.95, -2.75) (R) {\small  $R(A,B,C)$};
    \node at (-0.3, -2.75) (S) {\small  $S(A,B,D)$};
    \begin{pgfonlayer}{background}
      \draw[opacity=.5,fill opacity=.5,line cap=round, line join=round, line width=18pt,color=teal] (-0.25,0.0) -- (-1,-1) -- (-1.75, -2.0);
      \draw[opacity=.5,fill opacity=.5,line cap=round, line join=round, line width=15pt,color=yellow] (-0.25,0.0) -- (-1,-1) -- (-0.5, -2.0);
      \draw[opacity=.5,fill opacity=.5,line cap=round, line join=round, line width=15pt,color=orange] (-0.25,0.0) -- (0.5,-1.0);
    \end{pgfonlayer}
  \end{tikzpicture}
\end{minipage}
\hspace{3em}
\begin{minipage}[b]{0.33\linewidth}
  \centering
  \begin{tikzpicture}[xscale=1.1, yscale=1.0]
    \node at (0.0, 0.0) (A) {\small  $A$};
    \node at (-1.0, -1.0) (B) {\small  $B$} edge[-] (A);
    \node at (-2.0, -2.0) (C) {\small  $C$} edge[-] (B);
    \node at (-0.0, -2.0) (D) {\small  $D$} edge[-] (B);
    \node at (-2.0, -3.0) (R) {\small  $R(A,B,C)$} edge[-] (C);
    \node at (0.0, -3.0) (S) {\small  $S(A,B,D)$} edge[-] (D);

    \node at (1.0, -1.0) (E) {\small  $E$} edge[-] (A);
    \node at (1.0, -2.0) (T) {\small  $T(A,E)$} edge[-] (E);
  \end{tikzpicture}
\end{minipage}

\vspace{1em}
\begin{minipage}[b]{0.32\linewidth}
  \centering
  \begin{tikzpicture}[xscale=1.1, yscale=1.0]
    \node at (0.0, 0.0) (A) {\small  $A$};
    \node at (0.0, -0.8) (C) {\small  $C$} edge[-] (A);
    \node at (0.0, -1.6) (D) {\small  $D$} edge[-] (C);
    \node at (0.0, -2.4) (B) {\small  $B$} edge[-] (D);

    \node at (1.2, -0.7) (E) {\small  $E$} edge[-] (A);
    \node at (-1, -3.2) (R) {\small  $R(A,B,C)$} edge[-] (B);
    \node at (0.7, -3.2) (S) {\small  $S(A,B,D)$} edge[-] (B);
    \node at (1.2, -1.5) (T) {\small  $T(A,E)$} edge[-] (E);
  \end{tikzpicture}
\end{minipage}
\hspace{3em}
\begin{minipage}[b]{0.32\linewidth}
  \centering
  \begin{tikzpicture}[xscale=1.1, yscale=1.0]
    \node at (0.0, 0.0) (E) {\small  $E$};
    \node at (0.0, -0.7) (C) {\small  $C$} edge[-] (E);
    \node at (0.0, -1.4) (D) {\small  $D$} edge[-] (C);
    \node at (0.0, -2.1) (A) {\small  $A$} edge[-] (D);
    \node at (-0.5, -2.8) (B) {\small  $B$} edge[-] (A);
    \node at (-1.5, -3.5) (R) {\small  $R(A,B,C)$} edge[-] (B);
    \node at (0.5, -3.5) (S) {\small  $S(A,B,D)$} edge[-] (B);
    \node at (1.0, -2.8) (T) {\small  $T(A,E)$} edge[-] (A);
  \end{tikzpicture}
\end{minipage}
\caption{Top row: Hypergraph of the query from Example~\ref{ex:access-top-large} and its canonical 
VO $\omega$. Bottom row: An intermediate and the final VO constructed by the 
function $\ftvo(\omega,  (\{C,D\}, \{E\}))$ in Figure~\ref{fig:canonical-to-free-top}.}
\label{fig:preprocessing_large_hypergraphs}
\end{figure}

The function $\ftvo$ in Figure~\ref{fig:canonical-to-free-top} turns canonical VOs into 
optimal VOs. The proof of the following proposition is given in Appendix~B.2 of the technical report~\cite{access_pattern_arxiv}.
\begin{prop}
	\label{prop:canonical-to-free-top}
  Given a CQAP $Q$, whose fracture $Q_\dagger(\calO|\calI)$ is hierarchical, and a canonical VO $\omega$ for 
  $Q_\dagger$,	$\ftvo(\omega, (\calI| \calO))$ constructs 	an access-top VO for $Q_\dagger$ with static width $\fw(Q)$ and dynamic width $\dfw(Q)$.
\end{prop}
\section{Complexity of Dynamic CQAP Evaluation}
\label{sec:results}
In this work, we introduce a fully dynamic evaluation approach for arbitrary CQAPs whose complexity is stated in the following theorem.

\begin{thm}
  \label{thm:general}
  Given a CQAP with static width $\fw$ and dynamic width $\dfw$ and a database of size $N$,
  the query can be evaluated  with
    $\bigO{N^\fw}$ preprocessing time,  $\bigO{N^{\dfw}}$ update time under single-tuple updates, and 
    $\bigO{1}$ enumeration delay.
\end{thm}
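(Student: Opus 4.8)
The plan is to split the statement into its three independent cost guarantees---preprocessing $\bigO{N^\fw}$, update $\bigO{N^\dfw}$, and enumeration delay $\bigO{1}$---and to establish each over a single, carefully chosen data structure. Throughout, fix an access-top extended VO $\omega\in\acceff(Q_\dagger)$ of the fracture that attains the optimum of Definition~\ref{def:fac_width_CQAP}, so that $\dfw(\omega)=\dfw$ and $\fw(\omega)=\fw$. The first step is purely semantic: I would justify working with $Q_\dagger$ instead of $Q$. Since the fracture only renames an input variable into fresh copies across \emph{distinct} connected components, setting every copy of a given input variable to the same value $\inst{i}[\cdot]$ from an input tuple $\inst{i}$ makes $Q_\dagger$ agree with $\sigma_{\calI=\inst{i}}Q$ on the satisfying assignments, so the set of output tuples of $Q_\dagger$ under this instantiation equals $Q(\calO\mid\inst{i})$. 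It therefore suffices to evaluate $Q_\dagger$ along $\omega$. From $\omega$ I would build a view tree $\calT$ as in Figure~\ref{fig:general_triangle-prelim}: to each variable $X$ assign a view $V_X$ over schema $\{X\}\cup\dep_\omega(X)$, defined as the projection onto that schema of the join of the views of the children of $X$ (child-variable views, child atoms, and child indicators), with the input relations and their indicator projections sitting at the leaves.

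For preprocessing I would materialise the views bottom-up. The view $V_X$ is the projection onto $\{X\}\cup\dep_\omega(X)$ of the join $Q_X$ of all atoms below $X$; by the $\agm$ bound and a worst-case optimal join algorithm this can be computed in time $\bigO{N^{\rho^*_{Q_X}(\{X\}\cup\dep_\omega(X))}}$. The indicator projections inserted by \textsf{indicators} guarantee that on the cyclic parts of $Q_X$ the relevant fractional cover number is exactly the one used in the definition of $\fw(\omega)$. Taking the maximum over all $X\in\vars(\omega)$ yields total preprocessing time $\bigO{N^{\fw(\omega)}}=\bigO{N^\fw}$, and every materialised view is stored in the indexed structure of Section~\ref{sec:preliminaries}.

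Enumeration and updates then exploit the access-top shape of $\omega$: input variables lie above all output variables, which lie above all bound variables. For enumeration, given $\inst{i}$ I fix the topmost (input) views, which selects a sub-forest whose inner nodes are precisely the output variables, the bound variables having already been projected away inside the views; I would then enumerate distinct output tuples by a nested iteration that at each output variable $X$ lists the $X$-values in $V_X$ compatible with the fixed ancestor assignment, using the constant-delay index operations (4)--(6). For updates, on a single-tuple update $\delta R$ to a leaf relation $R$ over schema $\calY$, I would propagate a delta bottom-up along the unique root-to-leaf path through $R$: at each ancestor $V_X$ the delta is the join of the incoming delta with the sibling views, projected onto $\{X\}\cup\dep_\omega(X)$, and because $R$ is a single tuple its variables $\calY$ are constants and drop out, costing $\bigO{N^{\rho^*_{Q_X}((\{X\}\cup\dep_\omega(X))\setminus\calY)}}$; deltas to an indicator $I_\calZ R$ arise only when $R$ changes and are absorbed in the same bound. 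The maximum over all $X$ and all choices of updated atom is exactly $\dfw(\omega)$, giving update time $\bigO{N^{\dfw(\omega)}}=\bigO{N^\dfw}$ (amortised, under the hash-based model).

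I expect the main obstacle to be the constant-delay enumeration rather than the two width-matching counting arguments, which follow the established $\agm$-bound template. The delicate point is to show that the view tree admits enumeration with no dead ends once the bound variables have been eliminated below the free variables: I would prove, as a separate lemma, that the views are semi-join reduced so that every tuple listed at $V_X$ extends to a full output tuple, whence a standard union-of-product-iterators argument over the free-top sub-forest yields constant delay. Establishing this reducedness---so that neither the projected-away bound variables nor the fractured input copies ever force a non-constant gap between two consecutively reported output tuples---is the step I expect to require the most care.
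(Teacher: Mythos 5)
Your proposal is correct and follows essentially the same route as the paper: it fixes an optimal access-top VO of the fracture, materialises the view tree bottom-up with worst-case optimal joins bounded by the fractional edge cover numbers defining $\fw$, propagates single-tuple deltas along root-to-leaf paths with cost bounded by $\dfw$, and enumerates top-down exploiting the calibration of views and the access-top ordering — exactly the content of Propositions~\ref{prop:preprocessing}, \ref{prop:updates}, and \ref{prop:enumeration}. The "no dead ends" property you flag as delicate is indeed the key to constant delay, and it follows, as you anticipate, from each view being defined as the join of its children so that every stored tuple extends to a full output tuple, combined with nesting the per-component iterators to handle the Cartesian product across the fracture's connected components.
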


Our approach has three stages: preprocessing, enumeration, and updates. They are explained in Sections
\ref{sec:preprocessing}, \ref{sec:enumeration}, and \ref{sec:updates}, respectively. 
  Given a CQAP with static width $\fw$ and dynamic width $\dfw$ and a database of size $N$,
we construct in the preprocessing stage a set of view trees in $\bigO{N^{\fw}}$ time 
that represent the result of the query (Proposition~\ref{prop:preprocessing}).
Using these view trees, we can enumerate with constant delay the tuples over the output variables, given any tuple over the input variables  (Proposition~\ref{prop:enumeration}). 
 The view trees can be maintained with $\bigO{N^{\dfw}}$ update time under single-tuple updates 
 to the base relations (Proposition~\ref{prop:updates}).
The full proof of Theorem~\ref{thm:general} is given in Appendix~F of the technical report~\cite{access_pattern_arxiv}.

The following dichotomy states that the queries in CQAP$_0$ are precisely those CQAPs that can be evaluated with constant update time and enumeration delay. 

\begin{thm}
	\label{thm:dichotomy}
	Let any CQAP $Q$ and database of size $N$.
	\begin{itemize}
	\item If $Q$ is in $\text{CQAP}_0$, then it admits  
	$\bigO{N}$ preprocessing time,
	$\bigO{1}$ enumeration delay, and $\bigO{1}$  
	update time for single-tuple updates.
	\item If $Q$ is not in $\text{CQAP}_0$ and has no repeating relation symbols,
		then there is no  algorithm that computes $Q$ with arbitrary preprocessing  time, 
		$\bigO{N^{\frac{1}{2} - \gamma}}$ enumeration delay, and 
		$\bigO{N^{\frac{1}{2} - \gamma}}$ amortised update time,
		for any 
			$\gamma >0$, unless the \OMv conjecture fails. 
	\end{itemize}
\end{thm}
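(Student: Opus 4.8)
My plan is to establish the two directions separately, reducing the upper bound to the width machinery of Theorem~\ref{thm:general} and the lower bound to the \OMv conjecture.

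\emph{Positive direction.} For $Q\in\text{CQAP}_0$ I would first prove $\dfw(Q)=0$ and $\fw(Q)\le 1$, and then invoke Theorem~\ref{thm:general}. Since the fracture $Q_\dagger$ is hierarchical it is acyclic, so $\gyo^*$ adds no indicator projections and every leaf of any variable order is a base atom. A hierarchical query admits a canonical variable order in which every ancestor of a variable $X$ has an atom-set containing that of $X$; free-dominance and input-dominance then let me choose such an order that is moreover access-top (free variables above bound ones, input above output), i.e. some $\omega\in\acceff(Q_\dagger)$. In a canonical order the inner nodes on the root-to-leaf path ending at an atom $R(\calY)$ are exactly $\calY$; hence for every variable $X$ and every atom $R(\calY)$ in $\omega_X$ we have $\{X\}\cup\dep_\omega(X)\subseteq\calY$, so the corresponding dynamic-width term is $\rho^*(\emptyset)=0$. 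As $\rho^*$ of any non-empty coverable set is at least $1$, this gives $\dfw(\omega)=0$, hence $\dfw(Q)=0$; Proposition~\ref{prop:width_delta_inequal} then forces $\fw(Q)\le\dfw(Q)+1=1$. Theorem~\ref{thm:general} finally yields $\bigO{N^{\fw(Q)}}=\bigO{N}$ preprocessing, $\bigO{N^{\dfw(Q)}}=\bigO{1}$ update time, and $\bigO{1}$ delay.

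\emph{Negative direction.} Assume, for contradiction, an algorithm $\mathcal{A}$ evaluating $Q$ with arbitrary preprocessing, $\bigO{N^{\frac{1}{2}-\gamma}}$ delay and $\bigO{N^{\frac{1}{2}-\gamma}}$ amortised update time for some $\gamma>0$, and use it to solve \OMv (via its corollary \OuMv) in sub-cubic time. Since $Q\notin\text{CQAP}_0$ and has distinct relation symbols, so does $Q_\dagger$, and $Q_\dagger$ falls into exactly one of three cases: (i) it is not hierarchical; (ii) it is hierarchical but some free variable $A$ is dominated by a bound variable $B$; or (iii) it is hierarchical and free-dominant but some input variable $A$ is dominated by an output variable $B$. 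In all three cases I extract a gadget of distinct atoms: a binary-carrying atom $S$ containing two variables $A,B$, together with an atom $T$ containing $B$ but not $A$ (guaranteed by $\atoms(A)\subsetneq\atoms(B)$ in (ii),(iii), and by the open path $R(A,\ldots),S(A,B,\ldots),T(B,\ldots)$ in (i)). I encode the matrix $M$ in $S$ over its $A,B$-coordinates and the query vector $v_r$ in $T$ over its $B$-coordinate, neutralising all remaining atoms by singletons over a fixed constant and fixing all other free variables to that constant.

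The product $Mv_r$ is then read off according to the roles of the witness variables, which drive the whole bookkeeping. If the swept index $A$ is an output variable, enumerating the output recovers the support of $Mv_r$ in $\bigO{n\cdot N^{\frac{1}{2}-\gamma}}$ time per round; if $A$ is an input variable, I issue one access request per value $a\in[n]$ and test emptiness of the output for input $a$, which equals $(Mv_r)[a]$, again in $\bigO{n\cdot N^{\frac{1}{2}-\gamma}}$ per round; and in the non-hierarchical case with no input index I instead inject a second vector $u_r$ into $R$ and read the single Boolean $u_r^\top M v_r$ from one emptiness test. Taking $M$ to be $n\times n$ makes the database of size $N=\Theta(n^2)$, so each amortised update and each delay costs $\bigO{n^{1-2\gamma}}$. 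Each of the $n$ rounds performs $\bigO{n}$ single-tuple updates to reset $v_r$ (and $u_r$) plus $\bigO{n}$ reads, for $\bigO{n^{2-2\gamma}}$ per round and $\bigO{n^{3-2\gamma}}=\bigO{n^{3-\eps}}$ in total, refuting \OMv.

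The genuinely hard part is case (iii), the access-pattern-sensitive one, for two reasons. First, because $A$ is an input variable I cannot range over it by updates; the reduction must instead read the matrix--vector product through a linear sweep of access requests, and I must verify that it is the \emph{delay} bound (not an update bound) that delivers the $\bigO{n^{3-2\gamma}}$ budget and that testing emptiness at a fixed input indeed returns $(Mv_r)[a]$. Second, the witness lives in $Q_\dagger$ rather than in $Q$, so I must transfer the construction back: an input variable shared across atoms of a component survives fracturing only through non-input connections, and replicated input values, extra atoms, and the remaining free variables of $Q$ must be neutralised by singletons and fixed access values so that $Q$ on the constructed instance behaves exactly like the gadget while preserving the $\Theta(n^2)$ size bound. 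Checking that this padding preserves the semantics of each access request is the main routine-but-delicate obstacle; by contrast the non-hierarchical and free-dominance cases follow the established $q$-hierarchical reductions and are comparatively direct.
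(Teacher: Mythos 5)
Your proposal follows the paper's proof essentially step for step. The upper bound shows that a $\text{CQAP}_0$ query has dynamic width $0$ and static width at most $1$ via a canonical access-top VO of the fracture and then invokes Theorem~\ref{thm:general}; this is exactly the paper's Proposition~\ref{prop:cqap0_delta0} combined with Proposition~\ref{prop:width_delta_inequal}. The lower bound uses the same case split on the fracture (non-hierarchical, not free-dominant, not input-dominant), the same witness pattern (an atom containing $A,B$ and an atom containing $B$ but not $A$, with the observation that hierarchy forbids atoms containing $A$ but not $B$), the same matrix/vector encodings read off by enumeration or per-input emptiness tests, and the same accounting giving $\bigO{n^{3-2\gamma}}$ total time; the transfer from $Q_\dagger$ back to $Q$ is the paper's Lemma~\ref{lem:fracture_equivalence}, which you carry out directly at the instance level.

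Two execution details would fail or are unaccounted for as written. First, you cannot ``neutralise all remaining atoms by singletons over a fixed constant'': any atom containing $A,B$ (resp.\ $B$ but not $A$) other than your chosen $S$ (resp.\ $T$) that is collapsed to a single constant tuple forces $A$ and $B$ to that constant in the join and destroys the encoding. The paper instead replicates the matrix into \emph{every} atom of $\calR_{(A,B)}$ and the vector into \emph{every} atom of $\calS_{(\neg A,B)}$, padding the remaining positions with the dummy value; only atoms touching neither variable may be neutralised. Second, since the preprocessing time is arbitrary, the reduction must run preprocessing on the \emph{empty} database and load $\inst{M}$ via $n^2$ single-tuple updates at cost $\bigO{n^2\cdot n^{1-2\gamma}}=\bigO{n^{3-2\gamma}}$, which your per-round tally omits; loading $\inst{M}$ during preprocessing would place its cost outside the subcubic budget and void the contradiction with \OMv. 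Both fixes are routine and leave your stated bounds unchanged.
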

We prove 	Theorem~\ref{thm:dichotomy} in Section~\ref{sec:dichotomy}.
The hardness result in the theorem is based on the following \OMv problem:

 \begin{defi}[Online Matrix-Vector Multiplication (\OMv)~\cite{Henzinger:OMv:2015}]\label{def:OMv}
We are given an $n \times n$ Boolean matrix $\inst{M}$ and  receive $n$ Boolean column vectors $\inst{v}_1, \ldots, \inst{v}_n$ of size $n$, one by one; after seeing each vector $\inst{v}_i$, we output the product $\inst{M} 
\inst{v}_i$ before we see the next vector.
\end{defi}
 
It is strongly believed that the $\OMv$ problem cannot be solved in subcubic time.

\begin{conj}[\OMv Conjecture, Theorem 2.4~\cite{Henzinger:OMv:2015}]
\label{conj:omv}
For any $\gamma > 0$, there is no algorithm that solves the 
 \OMv problem in time $\bigO{n^{3-\gamma}}$.
\end{conj}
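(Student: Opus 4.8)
The honest plan is to recognise that Conjecture~\ref{conj:omv} is not a statement we can establish but a hardness hypothesis imported wholesale from~\cite{Henzinger:OMv:2015}. An unconditional lower bound of $\Omega(n^{3-\gamma})$ for every $\gamma>0$ would resolve one of the central open problems of fine-grained complexity, so \emph{proving} it is not on the table; what I can plan instead is to marshal the evidence that makes the hypothesis credible and to isolate exactly why it resists proof.

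First I would recall the upper-bound landscape. The naive strategy computes each product $\inst{M}\inst{v}_i$ in $\bigO{n^2}$ time, for $\bigO{n^3}$ overall, and the fastest known algorithm (Larsen and Williams) improves on this only by a subpolynomial factor of the form $2^{\Omega(\sqrt{\log n})}$. The decisive structural point is that the \emph{online} requirement --- each $\inst{M}\inst{v}_i$ must be emitted before $\inst{v}_{i+1}$ is revealed --- forbids the one manoeuvre that would buy a polynomial speedup, namely stacking all $n$ vectors into a single matrix and invoking fast matrix multiplication in time $n^{\omega}$ with $\omega<3$. This gap between the offline $n^{\omega}$ bound and the conjectured $n^{3-o(1)}$ floor is the informal heuristic underpinning the conjecture.

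Second, I would lean on the web of fine-grained reductions, which is where the genuine support lives. The standard technique is to show that \OMv is robustly hard by proving it equivalent to simpler-looking variants: the one-vector form \OuMv, in which one needs only the bit $\inst{u}_i^{\!\top}\inst{M}\inst{v}_i$, together with the Boolean matrix-vector and combinatorial Boolean-matrix-multiplication conjectures. Theorem~2.4 of~\cite{Henzinger:OMv:2015}, which the excerpt cites, supplies precisely such an equivalence, so a polynomial-time improvement for any one member of this family would cascade to all of them and, in particular, refute the long-standing combinatorial matrix-multiplication conjecture. It is this mutual reinforcement among independently studied problems, rather than any direct argument, that licenses treating Conjecture~\ref{conj:omv} as a lower-bound axiom.

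The main obstacle --- and the reason this remains a conjecture --- is that a real proof would require an unconditional super-quadratic-per-round lower bound in a general model of computation, for which no current technique suffices; everything available is conditional reductions among equally hard problems. I would therefore not attempt a proof at all, but adopt the conjecture as a hypothesis and use it solely as the engine driving the conditional lower bound in Theorem~\ref{thm:dichotomy}, exactly as the cited work intends.
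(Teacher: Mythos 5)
You are right: the paper offers no proof of this statement because it is a conjecture imported verbatim from~\cite{Henzinger:OMv:2015}, stated only so it can serve as the hardness hypothesis behind the lower bound in Theorem~\ref{thm:dichotomy}. Your treatment --- declining to prove it and adopting it as an axiom for the conditional reductions --- matches exactly what the paper does.
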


Queries in CQAP$_0$ have dynamic width $0$ and static 
width $1$ (Proposition~\ref{prop:cqap0_delta0}).
Our approach from   Theorem~\ref{thm:general} achieves 
linear preprocessing time, constant update time and enumeration delay
for such queries, so it is optimal for CQAP$_0$. 

The smallest queries not included in CQAP$_0$ are:
$Q_1(\calO |\cdot)= R(A), S(A,B), T(B)$  with  $\calO \subseteq \{A,B\}$; 
$Q_2(A |\cdot)= R(A,B),S( B)$; 
$Q_3( \cdot | A )=  R(A,B),S(B)$; and 
$Q_4(B |A)= R(A,B)$, $S(B)$.  
Each of these queries is equal to its fracture. 
Query $Q_1$ is not hierarchical. 
$Q_2$ is not free-dominant.  
$Q_3$ and $Q_4$ are not input-dominant.
Prior work showed that there is no algorithm that achieves constant
update time and enumeration delay for $Q_1$ and $Q_2$, unless the \OMv 
conjecture fails~\cite{BerkholzKS17}.
To prove the hardness statement in Theorem~\ref{thm:dichotomy}, we show in Section~\ref{sec:dichotomy}
that this negative result also holds for $Q_3$ and $Q_4$. Then, given an arbitrary CQAP $Q$ that is not in CQAP$_0$, we reduce the evaluation of one of the four queries above to the evaluation of $Q$.

For CQAPs with hierarchical fractures, the complexities in Theorem~\ref{thm:general} can be parameterised to uncover trade-offs between preprocessing, update, and enumeration.

\begin{thm}
	\label{thm:main_dynamic}
	Let any CQAP $Q$ with static width $\fw$ and dynamic width $\dfw$, a database of size $N$, and $\eps \in [0,1]$. If $Q$'s fracture is hierarchical, then $Q$ admits $\bigO{N^{1 + (\fw -1)\eps}}$ preprocessing time, $\bigO{N^{1-\eps}}$ enumeration delay, and $\bigO{N^{\dfw\eps}}$ amortised  update time for single-tuple updates.
\end{thm}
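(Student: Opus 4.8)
The plan is to lift the full-materialisation strategy of Theorem~\ref{thm:general} to a \emph{partially} materialised one driven by a heavy/light partition of the base relations with threshold $\theta = N^{\eps}$. Fix an access-top VO $\omega \in \acceff(Q_\dagger)$ realising the pair $(\dfw,\fw)$; since $Q_\dagger$ is hierarchical, $\omega$ can be taken canonical and, by Lemma~\ref{lem:rho_rhostar}, every fractional edge cover number used below coincides with its integral counterpart, so the widths are attained by integral covers along the root-to-leaf paths of $\omega$. For each relation $R$ I single out its partition variable (the variable of $R$ placed highest in $\omega$) and split $R$ into a \texttt{heavy} part, containing the tuples whose value on that variable has degree exceeding $\theta$, and a \texttt{light} part containing the rest. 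Two facts drive the whole construction: a relation has at most $N^{1-\eps}$ distinct heavy values, and every value in a light part has degree at most $N^{\eps}$.

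For each \emph{signature} $\sigma$ — an assignment of \texttt{heavy}/\texttt{light} to every relation — I would build one view tree over the corresponding relation parts, exactly as in the proof of Theorem~\ref{thm:general} but restricted to the parts named by $\sigma$; there are only constantly many signatures. Preprocessing then amounts to materialising every view in every such tree, and the bound $\bigO{N^{1+(\fw-1)\eps}}$ follows by plugging the degree constraints into the AGM bound for each bag: the relation covering the bag's top variable contributes a factor $N$, while each of the remaining relations in an optimal integral cover of a bag of cover number at most $\fw$ contributes at most $N^{\eps}$ once its partition variable is fixed to a light value, giving $N\cdot(N^{\eps})^{\fw-1}$ per bag; signatures with more heavy relations only improve the bound, since heavy values number at most $N^{1-\eps}$. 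Summing over the constantly many bags and signatures preserves the exponent.

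Enumeration proceeds by traversing the view trees along the access-top order after fixing the input variables to the given input tuple. Within a single signature the traversal behaves as in Theorem~\ref{thm:general}, except that at a partition variable marked \texttt{heavy} we must iterate over its (at most $N^{1-\eps}$) heavy values, which bounds the per-step work by $\bigO{N^{1-\eps}}$, while the light side keeps the fan-out at each node constant. The delay for the full query is then obtained by taking a duplicate-free union of the per-signature enumerations: I would interleave them and suppress a tuple already produced by an earlier signature using a constant-time membership test against the output data structure, so that the overall delay remains $\bigO{N^{1-\eps}}$.

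The update bound $\bigO{N^{\dfw\eps}}$ is where the argument is most delicate, and I expect it to be the main obstacle. For a single-tuple update landing in a light part, delta-view maintenance evaluates at each affected bag a residual join whose variables are covered, by the definition of $\dfw$, by relations of total cover number at most $\dfw$; the nontrivial point is that obtaining exactly the exponent $\dfw\eps$ (rather than a cruder $N^{\eps}$ or $N^{1-\eps}$) requires the signature-based case analysis to balance the $N^{1-\eps}$ heavy-value counts against the $N^{\eps}$ light degrees across the relations of that cover, so that every case is bounded by $(N^{\eps})^{\dfw}=N^{\dfw\eps}$. The second subtlety is rebalancing: a value may cross the threshold $\theta$ as its degree changes, and $\theta$ must track the current database size $N$. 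I would control this with an epoch/amortisation argument — local moves of a single value between the heavy and light parts are charged to the $\Theta(N^{\eps})$ updates needed to cross the threshold, while a global rebuild of all view trees, triggered whenever $N$ changes by a constant factor at cost $\bigO{N^{1+(\fw-1)\eps}}$, is amortised over the $\Theta(N)$ updates of the epoch and so contributes $\bigO{N^{(\fw-1)\eps}}$ per update. By Proposition~\ref{prop:width_delta_inequal}, $\dfw\in\{\fw-1,\fw\}$, hence $(\fw-1)\eps\le\dfw\eps$, and this rebuild cost is absorbed into the stated amortised bound. Verifying that the rebuild correctly re-establishes all signatures' view trees and that no case of the residual-join or rebalancing cost exceeds $N^{\dfw\eps}$ amortised is the crux of the proof.
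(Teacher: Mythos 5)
There is a genuine gap, and it sits at the centre of your construction. You build, for every heavy/light signature, a view tree over the \emph{same} access-top VO and assert that ``signatures with more heavy relations only improve the bound.'' This is false. Take $Q_1(D\mid A_1,C)=R(A_1,B,C),S(A_1,B,D)$ (Example~\ref{ex:CQAP_1}): its access-top VO forces materialising $V_B(A_1,B,C,D)=R(A_1,B,C),S(A_1,B,D)$. Over the parts that are heavy on $(A_1,B)$, a single heavy pair $(a,b)$ may have degree $\Theta(N)$ in both relations, so this view has size and construction cost $\Theta(N^2)$, exceeding your claimed $\bigO{N^{1+\eps}}$ for every $\eps<1$; the bound of $N^{1-\eps}$ on the \emph{number} of heavy keys says nothing about the join size per key. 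The same example breaks your update bound: a single-tuple update landing in $R^{A_1B\veryshortarrow H}$ must be joined with $\sigma_{A_1B=(a,b)}S^{A_1B\veryshortarrow H}$, which can hold $\Theta(N)$ tuples, so the delta costs $\Theta(N)$ rather than $N^{\dfw\eps}=N^{\eps}$. The paper's proof (sketched in Section~\ref{sec:trade-off}, full details in the technical report) avoids exactly this by assigning \emph{different} VOs to different parts: the heavy parts are evaluated along the \emph{canonical}, non-access-top VO, which keeps $R$ and $S$ unjoined below the violating variable and hence admits linear-time materialisation and cheap deltas, while access-top VOs are used only for the light parts, where the degree bound $N^{\eps}$ applies. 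The price is that the heavy view trees do not support constant-delay enumeration under the access pattern; this is paid at enumeration time by instantiating one view-tree iterator per heavy value of the violating variable (at most $N^{1-\eps}$ of them) and merging them with the light-tree iterator.

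Two further points. First, your de-duplication by ``constant-time membership test against the output data structure'' is not delay-safe: an interleaved stream can emit arbitrarily long runs of already-seen tuples between two fresh ones, so the delay is not bounded by the number of streams; the paper instead uses the union algorithm of Durand and Grandjean, whose delay is linear in the number of instantiated iterators, giving $\bigO{N^{1-\eps}}$. Second, partitioning each relation once, on its single highest variable, is too coarse: the paper partitions on root-path keys $\anc_\omega(X)\cup\{X\}$ determined by each variable $X$ that violates free- or input-dominance, and a relation part may be partitioned successively on several nested keys (e.g.\ $R^{A\veryshortarrow H, AB\veryshortarrow L}$), which is needed when violations occur at several depths. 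Your amortisation argument for rebalancing and the light-part delta analysis are in the right spirit, but they cannot rescue the construction unless the heavy parts are handled by a different (non-access-top) evaluation strategy.
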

We illustrate in Section~\ref{sec:trade-off} the core ideas of our algorithm achieving the trade-offs in Theorem~\ref{thm:main_dynamic}. 
The full proof of the theorem can be found in Appendix~F of the technical report~\cite{access_pattern_arxiv}.
\nop{the technical report~\cite{free-access-pat-arxiv}.}
The trade-off continuum in Theorem~\ref{thm:main_dynamic} can be obtained using one algorithm parameterised by 
$\epsilon$. In Section~\ref{sec:comparison}, we show that this algorithm either recovers or has lower complexity than prior approaches.
Using $\eps=1$, we recover the complexities in Theorem~\ref{thm:general} and therefore also the constant update time and delay for queries in $\text{CQAP}_0$ in Theorem~\ref{thm:dichotomy}.

Theorem~\ref{thm:main_dynamic} can be refined for $\text{CQAP}_1$, since $\dfw=1$ and $\fw\leq 2$ for queries in this class.

\begin{cor}[Theorem~\ref{thm:main_dynamic}]
\label{cor:cqap1}
	Let any query in $\text{CQAP}_1$, a database of size $N$, and $\eps \in [0,1]$. Then $Q$ admits $\bigO{N^{1 + \eps}}$ preprocessing time, $\bigO{N^{1-\eps}}$ enumeration delay, and $\bigO{N^{\eps}}$ amortised update time for single-tuple updates.
\end{cor}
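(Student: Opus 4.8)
The plan is to derive Corollary~\ref{cor:cqap1} directly from Theorem~\ref{thm:main_dynamic} by instantiating the width parameters with the specific values they take for queries in $\text{CQAP}_1$. First I would recall that, by Definition~\ref{def:CQAP_one}, a query $Q$ is in $\text{CQAP}_1$ precisely when its fracture is hierarchical and is almost free-dominant, or almost input-dominant, or both. The hierarchical-fracture hypothesis is exactly what Theorem~\ref{thm:main_dynamic} requires, so the theorem applies to every $\text{CQAP}_1$ query and gives, for each $\eps\in[0,1]$, preprocessing time $\bigO{N^{1+(\fw-1)\eps}}$, enumeration delay $\bigO{N^{1-\eps}}$, and amortised update time $\bigO{N^{\dfw\eps}}$.

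It then remains to pin down the widths. The central claim to establish is that every query in $\text{CQAP}_1$ has dynamic width $\dfw=1$ and static width $\fw\le 2$. For the dynamic width, the "almost free-dominant or almost input-dominant" condition of Definition~\ref{def:CQAP_one} is designed so that at each variable $X$ of an access-top variable order for the fracture, and for each atom $R(\calY)\in\atoms(\omega_X)$, the free (respectively input) variables of $\{X\}\cup\dep_\omega(X)$ not already covered by $\calY$ can be covered by one further atom; this forces $\rho^*_{Q_X}((\{X\}\cup\dep_\omega(X))\setminus\calY)\le 1$ for every such $X$ and $\calY$, so $\dfw(\omega)=1$ by Definition~\ref{def:fac_width_VO}. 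Since $\text{CQAP}_1$ excludes $\text{CQAP}_0$, the dynamic width cannot drop to $0$, giving $\dfw=1$ exactly. For the static width, I would invoke Proposition~\ref{prop:width_delta_inequal}: for any CQAP with hierarchical fracture we have $\dfw=\fw$ or $\dfw=\fw-1$; combined with $\dfw=1$ this yields $\fw\in\{1,2\}$, hence $\fw\le 2$.

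Finally I would substitute these values into the three complexity bounds of Theorem~\ref{thm:main_dynamic}. Using $\dfw=1$, the update time $\bigO{N^{\dfw\eps}}$ becomes $\bigO{N^{\eps}}$. Using $\fw\le 2$, so that $\fw-1\le 1$ and $N^{1+(\fw-1)\eps}\le N^{1+\eps}$ for all $\eps\in[0,1]$, the preprocessing time is $\bigO{N^{1+\eps}}$. The enumeration delay $\bigO{N^{1-\eps}}$ is independent of the widths and carries over unchanged. This is exactly the statement of Corollary~\ref{cor:cqap1}.

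I expect the only genuinely substantive step to be the width computation, specifically verifying $\dfw=1$ from the almost-dominance conditions; the static-width bound and the final substitution are essentially bookkeeping once Proposition~\ref{prop:width_delta_inequal} is in hand. If the paper has already isolated the fact "$Q\in\text{CQAP}_1$ iff $\dfw=1$" as a proposition (the commented-out Proposition on dynamic widths suggests this), the corollary reduces to a one-line substitution and the argument about almost-dominance can simply be cited rather than reproved.
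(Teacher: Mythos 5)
Your proposal matches the paper's own derivation: the paper obtains Corollary~\ref{cor:cqap1} exactly by noting that $\dfw=1$ and $\fw\leq 2$ for every query in $\text{CQAP}_1$ (the latter via Proposition~\ref{prop:width_delta_inequal}) and substituting these values into the bounds of Theorem~\ref{thm:main_dynamic}. Your additional sketch of why the almost-dominance conditions force $\dfw=1$ supplies detail the paper only asserts, but the overall route is the same.
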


The proof of the corollary is given in Appendix~F of the technical report~\cite{access_pattern_arxiv}.
For $\eps=0.5$, the amortised update time and the delay for queries in $\text{CQAP}_1$ match the lower bound in Theorem~\ref{thm:dichotomy} for all queries outside $\text{CQAP}_0$. This makes our approach weakly Pareto optimal for $\text{CQAP}_1$, as lowering both the amortised update time and the delay would violate the $\OMv$ conjecture. 
\section{Preprocessing}
\label{sec:preprocessing}
In this section, we describe the preprocessing stage of our approach for the dynamic evaluation of 
arbitrary CQAPs. 
Consider in the following a CQAP $Q$, its fracture $Q_{\dagger}$, and a database of size $N$.

\begin{figure}[t]
	\centering
	\setlength{\tabcolsep}{3pt}
	\renewcommand{\arraystretch}{1.05}
	\renewcommand{\linenumber}{\makebox[2ex][r]{\rownumber\TAB}}
	\setcounter{magicrownumbers}{0}
	\begin{tabular}[t]{@{}c@{}c@{}l@{}}
		\toprule
		\multicolumn{3}{l}{$\vt$(\text{VO} $\nu$) : view tree}   \\
		\midrule
		\multicolumn{3}{l}{\MATCH $\nu$:}                                    \\
		\midrule
		\phantom{ab} & $R(\calY)$\hspace*{2.5em} & \linenumber \RETURN
		$R(\calY)$ \\[2pt]
		\cmidrule{2-3} \\[-6pt]
		             &
		\begin{minipage}[t]{0.15\linewidth}
			\vspace{-1.7em}
			\hspace*{-0.35cm}
			\begin{tikzpicture}[xscale=0.5, yscale=1]
				\node at (0,-2)  (n4) {$X$};
				\node at (-1.2,-3)  (n1) {$\nu_1$} edge[-] (n4);
				\node at (0,-3)  (n2) {$\ldots$};
				\node at (1.2,-3)  (n3) {$\nu_k$} edge[-] (n4);
			\end{tikzpicture}
		\end{minipage}
		             &
		\begin{minipage}[t]{0.75\linewidth}
			\vspace{-0.4cm}
      \linenumber \LET $T_i = \vt(\nu_i)$,  \ $\forall i\in[k]$ \\[0.5ex]
			\linenumber \LET $\calS = \{X\} \cup \dep_{\omega}(X)$ \\[0.5ex]
			\linenumber \LET $V_X(\calS) =$ join of roots of $T_1, {.}{.}{.}, T_k$ with variables not in $\calS$ \\
      \TAB\TAB\TAB\TAB\TAB\TAB\TAB\TAB\TAB\TAB\TAB\TAB\TAB\TAB\TAB\TAB\TAB\TAB\TAB  marginalised out \\[0.5ex]
			\linenumber \IF $X$ has no sibling \TAB\RETURN $
				\left\{
				\begin{array}{@{~~}c@{~~}}
					\tikz {
						\node at (1.2,-1)  (n4) {$V_X(\calS)$};
						\node at (0.6,-1.75)  (n1) {$T_1$} edge[-] (n4);
						\node at (1.2,-1.75)  (n2) {$\ldots$};
						\node at (1.8,-1.75)  (n3) {$T_k$} edge[-] (n4);
					}
				\end{array}  \right.$ \\[0.5ex]
			\linenumber \LET $V'_X(\calS\setminus \{X\}) = V_X(\calS)$ with variable $X$ marginalised out \\[0.5ex]
      \linenumber \RETURN $
				\left\{
				\begin{array}{@{~~}c@{~~}}
					\tikz {
						\node at (1.2,-0.15)  (n1) {$V_X'(\calS \setminus \{X\})$};
						\node at (1.2,-1)  (n4) {$V_X(\calS)$} edge[-] (n1);
						\node at (0.6,-1.75)  (n1) {$T_1$} edge[-] (n4);
						\node at (1.2,-1.75)  (n2) {$\ldots$};
						\node at (1.8,-1.75)  (n3) {$T_k$} edge[-] (n4);
					}
				\end{array}  \right.$ \\[0.5ex]
		\end{minipage}                                              \\[2.75ex]
		\bottomrule
	\end{tabular}\vspace{-0.1em} 
	\caption{The function $\tau$ constructs a view tree for a given VO $\omega$. It is defined using pattern matching on the structure of $\omega$, which can be a leaf or an inner node (cf.\@ left column under $\MATCH$).
	At each variable $X$ in $\omega$, the function defines a new view $V_X$ whose free variables $\calS$ are $X$ and the dependency set of $X$; its body is the join of the views defined at the variables that are roots of the child VOs of $X$. If $X$ has siblings, it defines a new view on top of $V_X$ so that $X$ becomes bound in $V_X$ (so it is marginalised).
    Note that when $X$ has an atom $R(\calS)$ as its only child in $\omega$, the new view $V_X(\calS)$ is redundant; for simplicity, we retain this view.
		}
	\label{fig:general_view_tree_construction}
\end{figure}
In the preprocessing stage, we construct a set of view trees that represent the result of  
$Q_\dagger$ over both its input and output variables. A view tree~\cite{Nikolic:SIGMOD:18} is a (rooted) tree with one view per node. 
It is a logical project-join plan in the classical database systems literature, but where each intermediate result is materialised.  The view at a node is defined as the join of the views at its children, possibly followed by a projection. The view trees are modelled following an access-top VO $\omega$ of $Q_\dagger$.
In the following, we discuss the case of $\omega$ consisting of a single tree; otherwise, we apply the preprocessing stage to each tree in $\omega$.

Given an access-top VO $\omega$ for $Q_{\dagger}$, the function
$\vt(\omega)$ in
Figure~\ref{fig:general_view_tree_construction} returns 
a view tree constructed from $\omega$.
The function recursively traverses $\omega$ bottom-up (Line~2) and creates
at each variable $X$, a view $V_X$
defined over the join 
of the views created for the children of $X$.
The schema of $V_X$
consists of $X$ and the dependency set of $X$
(Line~3). This view allows to efficiently enumerate
the $X$-values given a tuple of values for the variables in the dependency set.
If $X$ has siblings, the function
creates an additional view $V'_X$ on top of $V_X$ to aggregate away (or marginalise out) $X$ from $V_X$ (Line~6).
This view allows to efficiently maintain the ancestor views of $V_X$ under updates
to the views created for the siblings of $X$.

The next example demonstrates the construction of the view trees for a query in CQAP$_0$.
The construction time is linear in the database size.
  
  \begin{exa}
  \label{ex:general_preprocessing_CQAP0}
  Figure~\ref{fig:general_hypergraphs} shows the hypergraphs of the query  
  $Q(B,C,D,E | A) =$ $R(A,B,C),$ $S(A,B,D),$ $T(A,E)$ 
  and its fracture 
  $Q_\dagger(B,C,D,E | A_1,A_2) =$ $R(A_1,B,C),$ $S(A_1,B,D),$ $T(A_2,E)$.  
  The fracture has two connected components: 
  $Q_1(B,C,D|A_1) = R(A_1,B,C)$, $S(A_1,$ $B,D)$ and  $Q_2(E|A_2)=T(A_2,E)$.
  Figure~\ref{fig:general_CPAP_0} depicts an access-top  VO (left) for 
  $Q_1$ and its corresponding view tree (middle).
  The VO has static width $1$. 
  Each variable in the VO is mapped to a view in the view tree, e.g.,  
  $B$ is mapped to $V_B(A_1,B)$, where $\{B,A_1\}=\{B\}\cup\dep(B)$.
  The views $V_C'$ and $V_{D}'$  are auxiliary views. 
  The views $V_C'$, $V_{D}'$, and $V_{A_1}$  marginalise out the 
  variables $C$, $D$ and respectively $B$ from their child views. 
  The view $V_B$ is the intersection of $V_C'$ and $V_{D}'$. 
  Hence, all views can be computed in $\bigO{N}$ time.
  Since the query fracture is acyclic, the view tree does not contain indicator projections.

  The only access-top VO for the connected component $Q_2$ of $Q_\dagger$ is the top-down path 
  $A_2- E-T(A_2,E)$. The views mapped to $A_2$ and $E$ are $V_{A_2}(A_2)$ and respectively 
  $V_{E}(A_2,E)$. They can obviously be computed in $\bigO{N}$ time.
  \qed
  \end{exa}

  \begin{figure}[t]
    \centering
    \begin{minipage}[b]{0.35\linewidth}
      \begin{tikzpicture}[xscale=0.96, yscale=0.8]
        \node at (-0.25, 0.0) (A) {\small  $A$};
        \node at (-1.0, -1.0) (B) {\small $B$};
        \node at (0.5, -1.0) (E) {\small $E$};
        \node at (-1.75, -2.0) (C) {\small $C$};
        \node at (-0.5, -2.0) (D) {\small $D$};
        \node at (0.7, -1.75) (T) {\small $T(A,E)$};
        \node at (-2.35, -2.75) (R) {\small  $R(A,B,C)$};
        \node at (-0.2, -2.75) (S) {\small  $S(A,B,D)$};
        \begin{pgfonlayer}{background}
          \draw[opacity=.5,fill opacity=.5,line cap=round, line join=round, line width=18pt,color=teal] (-0.25,0.0) -- (-1,-1) -- (-1.75, -2.0);
          \draw[opacity=.5,fill opacity=.5,line cap=round, line join=round, line width=15pt,color=yellow] (-0.25,0.0) -- (-1,-1) -- (-0.5, -2.0);
          \draw[opacity=.5,fill opacity=.5,line cap=round, line join=round, line width=15pt,color=orange] (-0.25,0.0) -- (0.5,-1.0);
        \end{pgfonlayer}
      \end{tikzpicture}
    \end{minipage}
  \hspace{0.25cm}
    \begin{minipage}[b]{0.35\linewidth}
      \begin{tikzpicture}[xscale=0.96, yscale=0.8, 
        he/.style={draw, rounded corners,inner sep=0pt},        
        ce/.style={draw,dashed, rounded corners=10pt} 
        ]
        \node at (-1, 0.0) (A) {\small  $A_1$};
        \node at (-1.0, -1.0) (B) {\small $B$};
        \node at (-1.75, -2.0) (C) {\small $C$};
        \node at (-0.5, -2.0) (D) {\small $D$};
        \node at (-2.35, -2.75) (R) {\small  $R(A_1,B,C)$};
        \node at (-0.1, -2.75) (S) {\small  $S(A_1,B,D)$};
        \begin{pgfonlayer}{background}
          \draw[opacity=.5,fill opacity=.5,line cap=round, line join=round, line width=18pt,color=teal] (-1,0.0) -- (-1,-1) -- (-1.75, -2.0);
          \draw[opacity=.5,fill opacity=.5,line cap=round, line join=round, line width=15pt,color=yellow] (-1,0.0) -- (-1,-1) -- (-0.5, -2.0);
        \end{pgfonlayer}
      \end{tikzpicture}
    \end{minipage}
      \begin{minipage}[b]{0.15\linewidth}
      \begin{tikzpicture}[xscale=0.96, yscale=0.8, 
        he/.style={draw, rounded corners,inner sep=0pt},        
        ce/.style={draw,dashed, rounded corners=10pt} 
        ]
        \node at (0.75, 0.0) (A2) {\small  $A_2$};
        \node at (0.75, -1.0) (E) {\small $E$};
        \node at (0.75, -1.75) (T) {\small $T(A_2,E)$};
        \begin{pgfonlayer}{background}
          \draw[opacity=.5,fill opacity=.5,line cap=round, line join=round, line width=15pt,color=orange] (0.75,0.0) -- (0.75,-1.0);
        \end{pgfonlayer}
      \end{tikzpicture}
    \end{minipage}
    \caption{(Left) Hypergraph of the two queries with the same body but different access patterns, as used in 
    Examples~\ref{ex:general_preprocessing_CQAP0} and \ref{ex:general_preprocessing_CQAP1}; 
    (middle and right) hypergraph of their fractures.}
    \label{fig:general_hypergraphs}
    \end{figure}

  \begin{figure}[t]
    \centering
    \hspace{-0.42cm}
  \begin{minipage}[b]{0.2\linewidth}
    \centering
    \begin{tikzpicture}[xscale=0.925, yscale=0.7]
  \begin{scope}[yshift=1.4cm, xshift=0.2cm]
  \node at (0.2, -1.2) (A) {\small  $\dep(A_1)=\emptyset$};	
  \node at (0.2, -1.8) (B) {\small  $\dep(B)=\{A_1\}$};	
  \node at (0.2, -2.4) (C) {\small  $\dep(C)=\{A_1,B\}$};	
  \node at (0.2, -3.0) (D) {\small  $\dep(D)=\{A_1,B\}$};
  \end{scope}
  \node at (0,  -4) (invisible) {};
    \end{tikzpicture}
  \end{minipage}
  \hspace{-1.4cm}
  \begin{minipage}[b]{0.28\linewidth}
    \centering
    \begin{tikzpicture}[xscale=0.925, yscale=0.7]
      \node at (-1, -1.0) (A) {\small  $A_1$};
      \node at (-1.0, -2.0) (B) {\small $B$} edge[-] (A);
      \node at (-1.95, -4.0) (C) {\small $C$} edge[-] (B);
      \node at (-0.05, -4.0) (D) {\small $D$} edge[-] (B);
      \node at (-2.15, -5.0) (R) {\small  $R(A_1,B,C)$} edge[-] (C);
      \node at (0.15, -5.0) (S) {\small  $S(A_1,B,D)$} edge[-] (D);	
      \node at (0,  -4) (invisible) {};
    \end{tikzpicture}
  \end{minipage}
  \hspace{-0.1cm}
  \begin{minipage}[b]{0.3\linewidth}
    \centering
    \begin{tikzpicture}[xscale=0.9, yscale=0.7]
      \node at (-0.9, -1.0) (B') {\small $V_{A_1}(A_1)$};
      \node at (-0.9, -2.0) (B) {\small $V_B(A_1,B)$} edge[-] (B');
      \node at (-2.15, -3.0) (C') {\small $V'_C(A_1,B)$} edge[-] (B);
      \node at (-2.15, -4.0) (C) {\small $V_C(A_1,B,C)$} edge[-] (C');
      \node at (-2.15, -5.0) (R) {\small  $R(A_1,B,C)$} edge[-] (C);
      \node at (0.4, -3.0) (D') {\small $V'_D(A_1,B)$} edge[-] (B);
      \node at (0.4, -4.0) (D) {\small $V_D(A_1,B,D)$} edge[-] (D');
      \node at (0.4, -5.0) (S) {\small  $S(A_1,B,D)$} edge[-] (D);
    \end{tikzpicture}
  \end{minipage}
  \hspace{-0.1cm}
  \begin{minipage}[b]{0.3\linewidth}
    \centering
    \begin{tikzpicture}[xscale=0.9, yscale=0.7]
      \node at (-0.9, -1.0) (B') {\small {$\delta V_{A_1}(a)$}};
      \node at (-0.9, -2.0) (B) {\small {$\delta V_B(a,b)$}} edge[-] (B');
      \node at (-2.05, -3.0) (C') {\small {$\delta V'_C(a,b)$}} edge[-] (B);
      \node at (-2.05, -4.0) (C) {\small {$\delta V_C(a,b,c)$}} edge[-] (C');
      \node at (-2.05, -5.0) (R) {\small  {$\delta R(a,b,c)$}} edge[-] (C);
      \node at (0.3, -3.0) (D') {\small $V'_D(A,B)$} edge[-] (B);
      \node at (0.3, -4.0) (D) {\small $V_D(A_1,B,D)$} edge[-] (D');
      \node at (0.3, -5.0) (S) {\small  $S(A_1,B,D)$} edge[-] (D);
    \end{tikzpicture}
  \end{minipage}
  \caption{(Left) Access-top VO for 
  $Q_1(B,C,D|A_1) = R(A_1,B,C), S(A_1,$ $B,D)$; (middle) the view tree constructed from the VO;
  (right) the delta view tree under a single-tuple update to $R$.}
  \label{fig:general_CPAP_0}
  \end{figure}

The next example considers a query in CQAP$_1$ where the view tree construction time 
is quadratic in the database size.   
  
\begin{exa}
\label{ex:general_preprocessing_CQAP1}
Consider the query 
$Q(E,D|A,C) = R(A,B,C), S(A,B,D), T(A,E)$ in CQAP$_1$
  and its fracture 
$Q_\dagger(E,D | A_1, A_2,C)$ $=$ $R(A_1,B,C), S(A_1,B,D), T(A_2,E).$
The fracture has the two connected components 
  $Q_1(B,D|A_1,C) = R(A_1,B,C), S(A_1,B,D)$ and  $Q_2(E|A_2)=T(A_2,E)$.
  The hypergraphs (Figure~\ref{fig:general_hypergraphs}) of $Q$ and its fracture are the same as for the query in 
  Example~\ref{ex:general_preprocessing_CQAP0}.
  Figure~\ref{fig:general_CPAP_1} depicts an access-top VO (left) for 
  $Q_1$ and its corresponding view tree (middle).
  The VO has static width $2$. 
The view $V_B$ joins the relations $R$ and $S$, which takes  
$\bigO{N^2}$ time.
The views $V_{D}$, $V_{C}$, and $V_{A}$ are constructed from $V_B$
by marginalising out one variable at a time. Hence, the view tree 
construction takes $\bigO{N^2}$ time.
The view tree for $Q_2$ is the same 
as in Example~\ref{ex:general_preprocessing_CQAP0} and can be constructed in linear time. 
\qed
\end{exa}

\begin{figure}[t]
  \centering  
  \hspace{-0.5cm}
\begin{minipage}[b]{0.25\linewidth}
  \centering
	\begin{tikzpicture}[xscale=0.925, yscale=0.7]
\begin{scope}[yshift=2.2cm, xshift=0.2cm]
\node at (0.2, -1.2) (A) {\small  $\dep(A_1)=\emptyset$};	
\node at (0.2, -1.8) (B) {\small  $\dep(C)=\{A_1\}$};	
\node at (0.2, -2.4) (C) {\small  $\dep(D)=\{A_1,C\}$};	
\node at (0.2, -3.0) (D) {\small  $\dep(B)=\{A_1,C,D\}$};
\end{scope}
\node at (0,  -4) (invisible) {};
	\end{tikzpicture}
\end{minipage}
\hspace{-1.6cm}
  \begin{minipage}[b]{0.29\linewidth}
    \centering
    \begin{tikzpicture}[xscale=0.96, yscale=0.7]
      \node at (-1, -1.0) (A) {\small  $A_1$};
      \node at (-1, -2.0) (C) {\small $C$} edge[-] (A);
      \node at (-1, -3.0) (D) {\small $D$} edge[-] (C);
      \node at (-1, -4.0) (B) {\small $B$} edge[-] (D);
      \node at (-2.2, -5) (R) {\small  $R(A_1,B,C)$} edge[-] (B);
      \node at (0.2, -5) (S) {\small  $S(A_1,B,D)$} edge[-] (B);
      \node at (0,  -6) (invisible) {};
    \end{tikzpicture}
  \end{minipage}
  \begin{minipage}[b]{0.29\linewidth}
    \centering
    \begin{tikzpicture}[xscale=0.96, yscale=0.7]
      \node at (-0.9, -1.0) (A) {\small $V_{A_1}(A_1)$};
      \node at (-0.9, -2.0) (D) {\small $V_C(A_1,C)$} edge[-] (A);
      \node at (-0.9, -3.0) (C) {\small $V_D(A_1,C,D)$} edge[-] (D);
      \node at (-0.9, -4.0) (B) {\small $V_B(A_1,B,C,D)$} edge[-] (C);
      \node at (-2.2, -5) (R) {\small  $R(A_1,B,C)$} edge[-] (B);
      \node at (0.2, -5) (S) {\small  $S(A_1,B,D)$} edge[-] (B);
      \node at (0,  -6) (invisible) {};
    \end{tikzpicture}
  \end{minipage}
   \begin{minipage}[b]{0.24\linewidth}
    \centering
    \begin{tikzpicture}[xscale=0.96, yscale=0.7]
      \node at (-0.9, -1.0) (A) {\small $\delta V_{A_1}(a)$};
      \node at (-0.9, -2.0) (D) {\small $\delta V_C(a,c)$} edge[-] (A);
      \node at (-0.9, -3.0) (C) {\small $\delta V_D(a,c,D)$} edge[-] (D);
      \node at (-0.9, -4.0) (B) {\small $\delta V_B(a,b,c,D)$} edge[-] (C);
      \node at (-2, -5) (R) {\small  ${\delta R (a, b,c)}$} edge[-] (B);
      \node at (0, -5) (S) {\small  $S(A,B,D)$} edge[-] (B);
      \node at (0,  -6) (invisible) {};
    \end{tikzpicture}
  \end{minipage}
\caption{(Left) Access-top VO for 
$Q_1(B,D|A_1,C) = R(A_1,B,C), S(A_1,$ $B,D)$; (middle) the view tree corresponding to the VO;
(right) the delta view tree under a single-tuple update to $R$.}
\label{fig:general_CPAP_1}
\end{figure}

  Finally, we exemplify the construction of a view tree for a cyclic query.

  \begin{exa}\label{ex:triangle-cqap-complexity}
  Figure~\ref{fig:general_triangle-prelim} depicts a VO and the view tree constructed from it for the triangle CQAP $Q(B,C|A) = R(A,B), S(B,C), T(C,A)$ from Example~\ref{ex:CQAP-triangle}.
  The view $V_C$ joins the relations $R$ and $S$ and the indicator projection $I_{A,B}R$, which can be computed in $\bigO{N^\frac{3}{2}}$ time using a worst-case optimal join algorithm. The view $V_B$ can be computed in linear time by looking up each tuple from $V'_C$ in $R$. The views $V'_C$ and $V_A$ are constructed by marginalising out one variable at a time in time $\bigO{N^\frac{3}{2}}$ and $\bigO{N}$ time, respectively. Hence, the view tree construction takes $\bigO{N^\frac{3}{2}}$ time.
  \qed
  \end{exa}

The time to construct the view tree $\vt(\omega)$
is dominated by the time to materialise the view $V_X$ for each variable $X$. 
The auxiliary view $V_X'$ above $V_X$ can be materialised by marginalising out $X$ in 
one scan over $V_X$. Each view $V_X$ can be materialised in
$\bigO{N^{\fw}}$ time, where $\fw = \rho^*_{Q_X}(\{X\} \cup \dep_{\omega}(X))$.
The definition of the static width of $\omega$ implies  
that the view tree $\vt(\omega)$ can be constructed in  
$\bigO{N^{\fw(\omega)}}$ time, as stated in the next proposition.
By choosing an access-top VO $\omega$ for $Q_{\dagger}$ 
with $\fw(\omega) = \fw(Q)$, we obtain the preprocessing time 
from Theorem~\ref{thm:general}.

\begin{prop}
  \label{prop:preprocessing}
  Given a VO $\omega$ of static width $\fw$ and a  database of size $N$,
  the view tree $\vt(\omega)$
  can be constructed in 
    $\bigO{N^\fw}$ time.
\end{prop}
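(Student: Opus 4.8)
The plan is to proceed by structural induction on the variable order $\omega$, following the recursive definition of $\vt(\omega)$ in Figure~\ref{fig:general_view_tree_construction}, and to show that every view $V_X$ materialised by the function can be computed in $\bigO{N^{\fw(\omega)}}$ time. The key accounting observation is that the total construction time is the sum over all variables $X \in \vars(\omega)$ of the time to materialise $V_X$ (and its auxiliary view $V'_X$, when present). Since there are constantly many variables, a bound of $\bigO{N^{\fw(\omega)}}$ on each individual view immediately yields the claimed overall bound. So the crux is to bound the per-view cost.

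First I would fix a variable $X$ and analyse the single view $V_X$. By Line~3 of the construction, $V_X$ has schema $\calS = \{X\} \cup \dep_\omega(X)$ and is defined as the join of the roots of the child view trees $T_1, \ldots, T_k$. The essential point is that materialising $V_X$ amounts to evaluating the query $Q_X$ that joins all atoms in $\atoms(\omega_X)$ and projects onto $\calS$: every tuple in $V_X$ is an $\calS$-projection of an answer to $Q_X$, so $|V_X| \le |Q_X(\calS)| = \bigO{N^{\rho^*_{Q_X}(\calS)}}$ by the AGM bound~\cite{AtseriasGM13}. I would then invoke a worst-case optimal join algorithm to compute $V_X$ in time $\bigO{N^{\rho^*_{Q_X}(\calS)}}$; the presence of the indicator projections in $\atoms(\omega_X)$ is precisely what makes this fractional-cover bound achievable for cyclic queries, as the indicators tighten $\rho^*_{Q_X}(\calS)$ (cf.\ the discussion around Figure~\ref{fig:extended_variable_order} and Example~\ref{ex:triangle-cqap-complexity}). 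By Definition~\ref{def:fac_width_VO} we have $\rho^*_{Q_X}(\{X\}\cup\dep_\omega(X)) \le \fw(\omega)$ for every $X$, so each $V_X$ costs at most $\bigO{N^{\fw(\omega)}}$.

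It then remains to handle the auxiliary view $V'_X$ created in Line~5 when $X$ has a sibling. Here I would simply note that $V'_X(\calS \setminus \{X\})$ is obtained by marginalising $X$ out of the already-materialised $V_X$, which takes a single linear scan over $V_X$, i.e.\ $\bigO{|V_X|} = \bigO{N^{\fw(\omega)}}$ time. Thus adding $V'_X$ does not change the asymptotic per-variable bound. Summing over the constantly many variables of $\omega$ gives the total $\bigO{N^{\fw(\omega)}}$ construction time.

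The main obstacle I anticipate is making the reduction from ``materialise $V_X$'' to ``evaluate $Q_X$ within the AGM bound'' fully rigorous. Two subtleties need care. First, $V_X$ is defined as the join of the \emph{roots} of the child view trees $T_i$, not directly as the join of base atoms; I would argue inductively that the union of the schemas covered by these roots equals $\vars(\omega_X)$ and that computing $V_X$ from the children (each already materialised) is no more expensive than a fresh worst-case optimal join over $\atoms(\omega_X)$, so the $\rho^*_{Q_X}(\calS)$ bound still applies. Second, the worst-case optimal join must be applied so that the output is restricted to the schema $\calS$ rather than the full set $\vars(\omega_X)$; this is legitimate because $\rho^*_{Q_X}(\calS) \le \rho^*_{Q_X}(\vars(\omega_X))$ and one can project during the join. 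Once these two points are pinned down, the statement of Proposition~\ref{prop:preprocessing} follows directly, and choosing $\omega$ with $\fw(\omega) = \fw(Q)$ recovers the preprocessing bound of Theorem~\ref{thm:general}.
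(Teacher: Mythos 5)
Your proof is correct and follows essentially the same route as the paper's: structural induction over the view tree, bounding each view $V_X$ by $\bigO{N^{\rho^*_{Q_X}(\{X\}\cup\dep_\omega(X))}} \leq \bigO{N^{\fw}}$ via the AGM bound and a worst-case optimal join over the atoms of $\omega_X$, and handling each auxiliary view $V'_X$ by a linear scan of $V_X$. The only wording to tighten is in your first ``subtlety'': the union of the \emph{root} schemas of the child trees is not $\vars(\omega_X)$ (each root retains only $\{X_i\}\cup\dep_\omega(X_i)$); what the paper actually uses is that every join variable shared by two child roots lies in $\calS=\{X\}\cup\dep_\omega(X)$, so after marginalising the non-$\calS$ variables out of the children their schemas union to $\calS$ and the join is bounded by the fractional edge cover of $\calS$ with respect to the base atoms, exactly as you intend.
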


\begin{proof}
Consider  a CQAP $Q$, a VO $\omega$ for $Q$ with static width $\fw(\omega) =\fw$, and a database of size $N$.
%
Without loss of generality, assume that $\omega$ consists of a single tree.  
Otherwise, we do the analysis below 
for each of the constantly many trees in $\omega$.
We show by induction on the structure of $T = \vt(\omega)$   that every node in $T$
can be materialised in $\bigO{N^{\fw}}$ time, where $\vt$ is the procedure given 
in Figure~\ref{fig:general_view_tree_construction}. 

\smallskip
\textit{Base Case}:
Each leaf atom or indicator projection in  $T$ can be materialised in 
$\bigO{N}$ time. Since $\fw \geq 1$, the complexity bound holds in the base case.  

\smallskip
\textit{Induction Step}:
Consider an auxiliary view  $V_X'(\calS')$ in $T$ with $X \in \vars(\omega)$
and $\calS' = \dep_{\omega}(X)$.   
By construction, this view results from its single child 
view $V_X(\calS \cup \{X\})$ by marginalising out variable $X$.
By induction hypothesis,  the view $V_X$ can be computed in $\bigO{N^{\fw}}$ time, 
hence its size has the same asymptotic bound. 
We can compute $V_X'$ by scanning over
the tuples in $V_X$ and maintaining during the scan the count 
$|\sigma_{\calS' = \inst{s}}V_X|$ for each tuple $\inst{s}$ in $\pi_{\calS'}V_X$.
This can be done in $\bigO{N^{\fw}}$ overall time.

Consider now a view  $V_X(\calS)$ in $T$ with 
$X \in \vars(\omega)$ and $\calS = \{X\}\cup \dep_{\omega}(X)$.   
Let $V_{X_1}(\calS_1), \ldots,  V_{X_k}(\calS_k)$ be the child nodes of $V_X(\calS)$.
Each child node can be a view, an atom, or an indicator projection.
By induction hypothesis, the child nodes of $V_X(\calS)$ can be materialised in 
$\bigO{N^{\fw}}$ time. 

Consider any variable $Y$ that occurs in the schemas of at least two child nodes 
of $V_X(\calS)$. It follows from the  construction of view trees that 
$Y \in \calS = \{X\} \cup \dep_{\omega}(X)$:
Consider two child views $V_{X_i}(\calS_i)$ and $V_{X_j}(\calS_j)$ of $V_X(\calS)$
such that $Y \in \calS_i \cap \calS_j$ and the variables $X_i$ and $X_j$ are children of $X$ in $\omega$.
The two views $V_{X_i}$ and $V_{X_j}$ are siblings, so they are constructed as in Line 6 of $\vt$, which means that 
$\calS_i = \dep_{\omega}(X_i)$ and $\calS_j = \dep_{\omega}(X_j)$.
Thus $Y \in \dep_{\omega}(X_i) \cap \dep_{\omega}(X_j)$.
Since $Y$ must be a common ancestor of $X_i$ and $X_j$ in $\omega$, 
$Y$ is either $X$ or an ancestor of $X$ that is
in the dependency set of $X$.
Hence, $Y$ is in $\calS = \{X\} \cup \dep_{\omega}(X)$.

Hence, any variable that does not occur in $\calS$ cannot be a join variable 
for the child views of $V_X$.
We first marginalise out the variables in the child views that do not occur in 
$\calS$. This can be done in $\bigO{N^{\fw}}$ time. 
Let $V_1'(\calS_1'), \ldots ,V_k'(\calS_k')$ be the resulting views. 
The view $V_X(\calS)$ can now be written as  
$V_X(\calS) = V_1'(\calS_1'), \ldots ,V_k'(\calS_k')$,
where $\bigcup_{i = 1}^k \calS_i' = \calS$.
We use a worst-case optimal join algorithm to compute the view $V_X$.
The size of $V_X$ is upper-bounded by $\bigO{N^{p}}$ where $p = \rho^*_{Q_X}(\calS)$
and
$Q_X$ is the query that joins 
all atoms and indicator projections in $\omega_X$~\cite{Ngo:SIGREC:2013}.
By definition of $\fw$, the quantity $p$ is upper-bounded by $\fw$. This means that 
the view $V_X$ can be computed in $\bigO{N^{\fw}}$ time~\cite{Ngo:JACM:18}.

Overall, we conclude that the desired complexity bound holds for the induction step. 
\end{proof}
\section{Enumeration}
\label{sec:enumeration}
We describe the enumeration procedure of our approach for the dynamic evaluation of arbitrary CQAPs.
Consider a CQAP $Q(\calO|\calI)$, its fracture $Q_\dagger(\calO|\calI_{\dagger})$, 
and an access-top VO $\omega$ for $Q_{\dagger}$. Recall 
from Section~\ref{sec:preprocessing} that in the preprocessing stage, our approach uses the procedure 
$\vt$ in Figure~\ref{fig:general_view_tree_construction} to 
construct view trees $T_j$ 
following $\omega$
for the connected components $Q_j(\calO_j|\calI_j)$ of the fracture $Q_{\dagger}$, 
as explained in Section~\ref{sec:preprocessing}.
These view trees are  maintained under updates (Section~\ref{sec:updates}). 
Consider an input tuple $\inst{i}$ over $\calI$ for $Q$.
We enumerate the output tuples of each $Q_j(\calO_j|\calI_j)$ and concatenate them to obtain the output tuples of $Q(\calO|\calI)$.

We first describe the enumeration for a single connected component $Q_j(\calO_j|\calI_j)$.
We enumerate tuples over $\calO_j$ for the input tuple $\inst{i}_j = \inst{i}[\calI_j]$ over $\calI_j$ from the view tree $T_j$.
We traverse in preorder the views in $T_j$ that are constructed for the free variables of $Q_j$.
At each view $V_X(\calS)$, we do the following:
If $X$ is an input variable, we check if $\inst{i}_j[\calS]$ is in $V_X$; 
if $\inst{i}_j[\calS]$ is not in $V_X$, this means that $Q_j(\calO_j|\inst{i}_j)$ is empty, so we stop. 
Otherwise, we continue with the traversal.
If $X$ is an output variable, we retrieve from $V_X$ an $X$-value that is paired the values in $\inst{i}_j$ and the values retrieved from 
the views above $V_X$.
Once all these views are visited, we report the tuple consisting of the retrieved values.
Reporting each tuple takes constant time, since lookup and retrieval of values are constant-time operations as discussed in Section~\ref{sec:preliminaries}.

The tuples in $Q(\calO|\inst{i})$ are the Cartesian product of  the tuples in $Q_1(\calO_1|\inst{i}_1), \ldots, Q_n(\calO_n|\inst{i}_n)$.
We enumerate the tuples in $Q(\calO|\inst{i})$ by interleaving the enumeration procedures for $Q_1(\calO_1|\inst{i}_1),$ $\ldots,$ $Q_n(\calO_n|\inst{i}_n)$.
This gives us a constant-delay enumeration procedure for $Q$.
We demonstrate the enumeration procedure in the following example.

\begin{exa}
  Consider the query 
  $Q(B,C,D,E | A)$ from 
  Example~\ref{ex:general_preprocessing_CQAP0} and the 
  two connected components $Q_1(B,C,D|A_1)$ and  $Q_2(E|A_2)$
  of its fracture. Figure~\ref{fig:general_CPAP_0} (middle) depicts the view tree for $Q_1$.
  Given an  $A_1$-value $a$, we can use this view tree to enumerate 
  the distinct tuples in $Q_1(B,C,D|a)$ with constant delay. 
 We first check if $a$ is included in 
 the view $V_{A_1}$. If not, $Q_1(B,C,D|a)$ must be empty and we stop.
 Otherwise, we retrieve the first $B$-value $b$ paired with $a$ in $V_B$,
the first $C$-value $c$ paired with $(a,b)$ in $V_C$, and 
the first $D$-value $d$ paired with $(a,b)$ in $V_D$.
Thus, we obtain in constant time the first output tuple $(b,c,d)$ in 
$Q_1(B,C,D|a)$ and report it. 
Then, we iterate over the remaining distinct $D$-values paired with
$(a,b)$ in $V_D$ and report for each
 such $D$-value $d'$, a new tuple $(b,c,d')$. 
 After  all $D$-values are exhausted,
 we retrieve the next distinct $C$-value paired with $(a,b)$ in $V_C$ 
 and restart the iteration over the distinct $D$-values paired with $(a,b)$ in $V_D$, and so on.
 Overall, we construct each distinct tuple in $Q_1(B,C,D| a)$ in constant time
 after the previous one is constructed.

  Assume now that we have constant-delay enumeration procedures 
  for the tuples in $Q_1(B,C,D|a)$ and  the tuples in $Q_2(E|a)$ 
  for any $A$-value $a$. We can enumerate with constant delay the tuples
  in  $Q(B,C,D,E | a)$ as follows. We ask for the first tuple $(b,c,d)$ in
$Q_1(B,C,D|a)$ and then iterate over the distinct $E$-values in  $Q_2(E|a)$.
For each such $E$-value $e$, we report the tuple $(b,c,d,e)$. Then, we ask for the 
next tuple in $Q_1(B,C,D|a)$ and restart the enumeration over the tuples in 
 $Q_2(E|a)$, and so on. 
 \qed
\end{exa}

The following proposition states that our approach achieves constant-delay enumeration of the 
tuples in the query output.  This matches the  enumeration delay stated in 
Theorem~\ref{thm:general}. 

\begin{prop}
\label{prop:enumeration}
Let  $Q(\calO | \calI)$
be a CQAP and 
$\omega$ an 
access-top VO $\omega$ for the fracture of $Q$ 
that consists of the trees $(\omega_j)_{j \in [n]}$.
Given any tuple $\inst{i}$ over $\calI$,  
the tuples in $Q(\calO | \inst{i})$ can be enumerated 
from the view trees $(\vt(\omega_j))_{j \in [n]}$ with constant delay. 
\end{prop}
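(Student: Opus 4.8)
The plan is to establish the result in two layers, mirroring the discussion preceding the proposition. First I would prove the single-tree claim: for a fixed $j \in [n]$, given any tuple $\inst{i}_j$ over $\calI_j$, the tuples in $Q_j(\calO_j \mid \inst{i}_j)$ can be enumerated from $\vt(\omega_j)$ with constant delay. Then I would lift this to the forest by a nested-loop argument over the trees $\omega_1, \ldots, \omega_n$, using the product decomposition $Q(\calO \mid \inst{i}) = \times_{j \in [n]} Q_j(\calO_j \mid \inst{i}_j)$ that holds because the fracture disconnects the hypergraph into the connected components underlying the trees of $\omega$.

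For the single-tree claim I would argue by induction on the preorder traversal of $T_j = \vt(\omega_j)$, formalising the informal walk described above. The key structural facts I would invoke are: (1) $\omega$ is access-top, so every input variable sits above every output variable, which guarantees that by the time the traversal reaches an output view $V_X$ all variables in the root path of $X$ (in particular all relevant input variables) have already been fixed to constants; and (2) the index data structures of Section~\ref{sec:preliminaries} support, for a view $V_X(\calS)$ with $\calS = \{X\} \cup \dep_\omega(X)$ and any fixed assignment to $\dep_\omega(X)$, both a constant-time membership check (operation 5) and constant-delay enumeration of the matching $X$-values (operation 4). The invariant I would maintain is that, upon visiting $V_X$ with the root-path variables fixed, the partial assignment extends to at least one complete output tuple; this non-emptiness is precisely what lets membership succeed at every input view and what guarantees that each $X$-value retrieved at an output view contributes to a genuine output tuple, so no time is wasted on dead ends. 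The delay between consecutive output tuples is then bounded by the length of the backtracking path, which is at most the (constant) depth of $T_j$ times the constant per-view cost.

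The main obstacle I expect is justifying the \emph{distinctness} and \emph{completeness} of the enumerated tuples, i.e.\ that the nested iteration over the output views enumerates exactly $\pi_{\calO_j} \sigma_{\calI_j = \inst{i}_j} Q_j$, each tuple once. Completeness follows because each view $V_X$ materialises $\pi_{\{X\} \cup \dep_\omega(X)} Q_X$ and the dependency sets are chosen so that fixing the root path determines the correct factor of the join; distinctness is where the auxiliary views and the projection semantics matter, since the views store sets of distinct tuples (the index supports membership and enumeration over distinct values), so iterating over distinct $X$-values at each output view avoids duplicates. I would need to check carefully that a given $X$-value is never revisited across different choices of the ancestor output values that do not appear in $\dep_\omega(X)$ — but since every such ancestor output variable on which the subtree at $X$ depends is by definition in $\dep_\omega(X)$, the enumeration at $V_X$ is correctly conditioned and no spurious repetition arises.

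Finally, for the forest-level argument I would compose the $n$ single-tree enumeration procedures as nested loops: the outer loop enumerates $Q_1(\calO_1 \mid \inst{i}_1)$, and for each of its tuples the inner loops enumerate the remaining factors, with the innermost producing a full output tuple by concatenation. Since $n$ is a constant (it is at most the number of atoms of $Q$) and each factor is enumerated with constant delay, the composed procedure also has constant delay; the correctness of the concatenation is immediate from the product decomposition, where $\inst{i}_j[X'] = \inst{i}[X]$ whenever $X$ was split into $X'$ during fracturing, so the given input tuple $\inst{i}$ induces a consistent input tuple for each component.
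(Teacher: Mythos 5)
Your proposal is correct and follows essentially the same route as the paper's proof: a preorder traversal of each view tree $T_j$ that checks input views by membership and iterates over output views using the access-top property, followed by a nested-loop composition over the trees via the product decomposition $Q(\calO \mid \inst{i}) = \times_{j \in [n]} Q_j(\calO_j \mid \inst{i}_j)$. The additional care you take with the calibration invariant and with distinctness (via the dependency sets) only makes explicit what the paper's proof leaves informal.
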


\begin{proof}
Consider a CQAP $Q(\calO | \calI)$, its fracture  $Q_{\dagger}(\calO | \calI_{\dagger})$, and an 
access-top VO $\omega$ for $Q_\dagger$.
Assume that $\omega$ consists of the trees $\omega_1, \ldots , \omega_n$ and let 
$T_1 = \vt(\omega_1), \ldots , T_n = \vt(\omega_n)$ be the view trees 
constructed by the procedure $\vt$ in Figure~\ref{fig:general_view_tree_construction}.
We show that for any input tuple $\inst{i}$ over $\calI$, the tuples in   
$Q(\calO|\inst{i})$ can be enumerated with constant delay
using $T_1, \ldots ,T_n$.    

For $j \in [n]$, let 
$Q_j(\calO_j | \calI_j)$ with 
$\calO_j = \calO\cap \vars(\omega_j)$ and
$\calI_j = \calI_{\dagger}\cap \vars(\omega_j)$ 
be the CQAP that joins the atoms appearing at the leaves 
of $T_j$.
We first explain how for any $j \in [n]$ and $\inst{i}_j$ over $\calI_j$, the tuples 
in $Q_j(\calO_j | \inst{i}_j)$ can be enumerated with constant delay using the view tree $T_j$.
Since the view tree is constructed following an access-top
variable order, 
there is no view $V_Y$ with $Y$ being bound (output)
that is above a view $V_X$ with $X$ being free (input).
To construct the first output tuple in $Q_j(\calO_j | \inst{i}_j)$, we traverse $T_j$ in preorder and do the  following  
at each view $V_X$, where $X$ is free. 
If $X \in \calI_j$, i.e., it is an input variable, we check 
if the projection of $\inst{i}_j$ onto the schema of $V_X$ is included in $V_X$.
If not,  $Q_i(\calO_j | \inst{i}_j)$ is empty and we stop the traversal. 
Otherwise, we continue with the traversal.  
When we arrive at a view $V_X$ with $X \in \calO_j$,
we have already fixed a tuple $\inst{t}$ over the variables in the root
path of $X$. We retrieve in constant time a first X-value  in 
$\sigma_{\calS = \inst{t}'} V_{X}$, where $\calS$ is the schema 
of $V_X$ without $X$ and $\inst{t}' = \inst{t}[\calS]$.
After all views $V_X$ with free $X$ are visited, we have fixed all values over the variables in $\calO_i$, hence we report 
the tuple consisting of these values. 
Then, we iterate over the remaining distinct $Y$-values in the last visited view $V_Y$ with constant delay 
(given that the values over the root path of $Y$ are fixed).
For each distinct $Y$-value, we obtain a new tuple that we report. After all $Y$-values are exhausted,
 we backtrack.  
  
Assume that we can enumerate the tuples in   
$Q_j(\calO_j | \inst{i}_j)$ with constant delay for any $j \in [n]$
and tuple $\inst{i}_j$ over $\calI_j$.  
Consider a tuple $\inst{i}$ over $\calI$.
It holds   
$Q(\calO | \inst{i}) = \times_{j \in [n]} Q_j(\calO_j| \inst{i}_j)$
where $\inst{i}_j[X'] = \inst{i}[X]$ if $X=X'$ or $X$ is replaced by $X'$ when constructing the fracture of 
$Q$. We enumerate the tuples in $Q(\calO| \inst{i})$
by interleaving the enumeration procedures 
for $Q_1(\calO_1 | \inst{i}_1), \ldots, Q_n(\calO_n | \inst{i}_n)$, as follows.

\begin{center}
\renewcommand{\arraystretch}{1.15}
\renewcommand{\linenumber}{\makebox[2ex][r]{\rownumber\TAB}}
\setcounter{magicrownumbers}{0}
\begin{tabular}{@{\hskip 0.1in}l}
\midrule
\linenumber \FOREACH $\inst{o}_1 \in Q_1(\calO_1| \inst{i}_1)$ \\[0.2ex]
\linenumber \TAB ${\cdot}{\cdot}{\cdot}$ \\[0.2ex] 
\linenumber \TAB \TAB \FOREACH $\inst{o}_n \in Q_n(\calO_n| \inst{i}_n)$ \\[0.2ex] 
\linenumber \TAB\TAB\TAB \textbf{report} $\inst{o}_1{\cdot}{\cdot}{\cdot} \inst{o}_n$ \\[0.2ex]
\bottomrule
\end{tabular}
\end{center}
 
That is, we first retrieve the first complete tuple $\inst{o}_j$ from
 $Q_j(\calO_j | \inst{i}_j)$ for each  $j \in [n]$ and report 
 $\inst{o}_1 \cdots  \inst{o}_n$. Then, we iterate  over the remaining tuples 
 in  $Q_n(\calO_n | \inst{i}_n)$. For each such tuple $\inst{o}_n'$,
 we report $\inst{o}_1 \cdots  \inst{o}_n'$. After all tuples in 
 $Q_n(\calO_n | \inst{i}_n)$ are exhausted, we move to the next tuple
 in $Q_{n-1}(\calO_{n-1} | \inst{i}_{n-1})$ and restart the enumeration 
 for $Q_n(\calO_n | \inst{i}_n)$, and so on. 
 
We conclude that the time to report the first tuple 
in  $Q(\calO | \inst{i})$, the time to report a next tuple after the previous one is reported, 
and the time to signal the end of the enumeration after the last tuple is reported is constant. 
\end{proof}
\section{Updates}
\label{sec:updates}
In this section, we explain how our approach maintains the view trees constructed in the preprocessing stage 
 under single-tuple updates to the base relations. 

Consider a CQAP $Q$, an access-top VO $\omega$ for the fracture $Q_{\dagger}$ and the view trees
$T_1, \ldots , T_n$ constructed from $\omega$ by the procedure $\vt$ in Figure~\ref{fig:general_view_tree_construction}. 
Let $\delta R = \{\inst{x} \rightarrow m\}$ be a single-tuple update 
to an input relation $R$; $m$ is positive in case of insertion and negative in case of deletion. 
We first update each view tree $T_j$ that has an atom $R(\calX)$ at a leaf: We update each view on the path from that leaf to the root of the view tree using the standard delta rules~\cite{Chirkova:Views:2012:FTD}.
The update $\delta R$ may also trigger single-tuple updates to indicator projections $I_{\calZ}R$, as discussed in Section~\ref{sec:preliminaries}.
These updates to indicators are propagated up to the root of each view tree, like for $\delta R$.

\begin{exa}
Figure \ref{fig:general_CPAP_0} (right) shows the delta view tree for the 
view tree to the left  
under a single-tuple update $\delta R(a,b,c)$ to $R$. 
\nop{The delta view tree for an update to $S$ is analogous.} 
We update the relation $R(A,B,C)$ with $\delta R(a,b,c)$ in constant time.
The ancestor views of $\delta R$ are the deltas of the corresponding views, computed 
by propagating $\delta R$ from the leaf to the root. They can also be effected in constant time.
\nop{For instance, we update 
the view $V_C'(A_1,B)$ with $\delta V_C'(a,b)= \delta V_C(a,b,c)$ and
the view $V_B(A_1,B)$ with $\delta V_B(a,b)= \delta V_C'(a,b), V_D'(a,b)$.}
Overall, maintaining the view tree under a single-tuple update to any relation takes $O(1)$ time.
 
Consider now the delta view tree in Figure \ref{fig:general_CPAP_1} (right) obtained from the view tree
to its left under the single-tuple update $\delta R(a,b,c)$.  
We update $V_B(A_1,B,C,D)$ with $\delta V_B(a,b,c,D)= \delta R(a,b,c),S(a,b,D)$ 
in $\bigO{N}$ time, since there are at most $N$ $D$-values paired with $(a,b)$ in $S$. We then
update the views $V_D$, $V_C$, and $V_{A_1}$ in $\bigO{1}$ time.
Updates to $S$ are handled analogously. 
Overall, maintaining the view tree under a single-tuple update to any relation takes $O(N)$ time.
\qed
\end{exa}

The following proposition states the time to maintain the view trees under single-tuple updates to the base relations.
This matches the update time in Theorem~\ref{thm:general}.

\begin{prop}
  \label{prop:updates}
Given a VO $\omega$ consisting of the trees $(\omega_j)_{j \in [n]}$
and a database of size $N$, the view trees $(\vt(\omega_j))_{j \in [n]}$
can be maintained under single-tuple updates 
to the base relations with 
    $\bigO{N^{\dfw(\omega)}}$ update time.
\end{prop}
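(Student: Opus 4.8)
The plan is to follow a single-tuple update $\delta R = \{\inst{x} \to m\}$ to a base relation $R(\calY)$ from the leaf at which it enters a view tree up to the root, to bound the cost of recomputing each affected view by $\bigO{N^{\dfw(\omega)}}$, and then to observe that only a constant number of views are touched. Since $\omega$ has constantly many trees, I would argue over one tree $T_j = \vt(\omega_j)$ at a time. Within $T_j$, the only views that change are those on the path from the leaf $R(\calY)$ up to the root, since every other view is defined over subtrees not containing this leaf. I would maintain these views by the classical delta rules, computing at each view $V_X$ on the path a delta $\delta V_X$ as the join of the delta arriving from the child on the path with the already materialised sibling views, followed, where $V_X'$ is present, by marginalising $X$ out of $\delta V_X$.

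The crux is the per-view bound. Because $\delta R$ is supported on the single $\calY$-tuple $\inst{x}$, the delta rules guarantee that $\delta V_X$ is supported on the slice in which every variable of $\calY$ occurring in the schema $\calS = \{X\} \cup \dep_\omega(X)$ of $V_X$ is pinned to the corresponding constant of $\inst{x}$; hence the only free variables of $\delta V_X$ are those in $\calS \setminus \calY$, and $\delta V_X$ has at most $N^{\rho^*_{Q_X}(\calS \setminus \calY)}$ tuples. Computing $\delta V_X$ is a join in which the variables of $\calY$ are set to the constants in $\inst{x}$; running a worst-case optimal join over the atoms and indicator projections in $\omega_X$ produces the projection onto $\calS$ in time $\bigO{N^{\rho^*_{Q_X}(\calS \setminus \calY)}}$, mirroring the computation of $V_X$ in the proof of Proposition~\ref{prop:preprocessing}. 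By Definition~\ref{def:fac_width_VO} this exponent is at most $\dfw(\omega)$. The auxiliary view is cheaper: $\delta V_X'$ is obtained by a single scan of $\delta V_X$ that adjusts, for each $(\calS \setminus \{X\})$-value, the multiplicity accumulated over $X$, which stays within the same bound.

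It remains to account for the updates that $\delta R$ triggers on indicator projections. Keeping a counter $|\sigma_{\calZ = \inst{z}}R|$ for each $\calZ$-value lets me test in constant time whether $\delta R$ changes the membership of $\inst{x}[\calZ]$ in $\pi_{\calZ}R$; when it does, a single-tuple update $\delta I_{\calZ}R = \{\inst{x}[\calZ] \to \pm 1\}$ is raised and propagated up every tree containing $I_{\calZ}R$ by exactly the same path argument, now with $\calZ$ playing the role of $\calY$. These are covered by the same bound because the dynamic width in Definition~\ref{def:fac_width_VO} already maximises over all atoms $R(\calY) \in \atoms(\omega_X)$, indicator atoms included. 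Summing up, each affected path has constant length, $R$ (and each triggered indicator) sits at constantly many leaves, and $\omega$ has constantly many trees, so the whole maintenance performs a constant number of steps, each costing $\bigO{N^{\dfw(\omega)}}$.

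I expect the main obstacle to be the per-view bound in the cyclic case, namely arguing that the worst-case optimal join, aided by precisely the indicator projections that the \textsf{indicators} procedure inserts below $X$, computes the projection $\delta V_X$ onto $\calS$ within the fractional-edge-cover bound $\rho^*_{Q_X}(\calS \setminus \calY)$ rather than at the possibly larger cost of the full residual join. The acyclic case, the bookkeeping for the counters, and the maintenance of the auxiliary views are routine by comparison.
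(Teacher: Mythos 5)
Your proposal is correct and follows essentially the same route as the paper's proof: propagate the delta along the leaf-to-root path, bound the cost of each delta view by $\bigO{N^{\rho^*_{Q_X}((\{X\}\cup\dep_{\omega}(X))\setminus\calY)}}$ using the materialisation argument from Proposition~\ref{prop:preprocessing} with the update's variables fixed to constants, handle triggered indicator-projection updates by the same argument with $\calZ$ in place of $\calY$, and conclude via the definition of $\dfw(\omega)$. The extra details you supply (the membership counters for indicators, the constant path length and constant number of trees) match the paper's treatment, so there is nothing to add.
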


\begin{proof}
Consider a VO $\omega$ that consists of the trees $(\omega_j)_{j \in [n]}$ and 
a database of size $N$. Let 
$(T_j = \vt(\omega_j))_{j \in [n]}$ be the view trees 
constructed by the procedure $\vt$ in Figure~\ref{fig:general_view_tree_construction}.
We show that the view trees can be maintained 
with $\bigO{N^{\dfw(\omega)}}$ update time under single-tuple updates to the base relations.

Consider a single-tuple update 
to a base relation $R$.
We first update  each view tree $T_j$ referring to an atom of the form 
$R(\calX)$. Updating $T_j$ 
amounts to computing the deltas of the views on the path from 
 $R(\calX)$ to the root of the view tree.
 We have shown in the proof of Proposition~\ref{prop:preprocessing}
  that for each variable 
 $X$ in $\omega$, the views $V_X$ and $V_X'$ can be materialised in 
 $\bigO{N^p}$ time, where $p = \rho_{Q_X}^*(\{X\} \cup \dep_{\omega}(X))$.
 Since the update fixes the values in $\calX$, 
 the time to compute the delta of these views under 
 the update becomes 
 $\bigO{N^d}$,
where 
$d = \rho_{Q_X}^*((\{X\} \cup \dep_{\omega}(X)) \setminus \calX)$.
A single-tuple update to $R$ can trigger a single-tuple update 
to each indicator view of the form $I_{\calZ}(R(\calZ))$.
Following a similar argument as above, we conclude 
that the time to compute the deltas of the views under such updates 
is $\bigO{N^{d'}}$,
where 
$d' = \rho_{Q_X}^*((\{X\} \cup \dep_{\omega}(X)) \setminus \calZ)$.
It follows from the definition of the dynamic width of VOs that
the exponents $d$ and $d'$ are upper-bounded by $\dfw(\omega)$. 
This implies that the overall update time is 
$\bigO{N^{\dfw(\omega)}}$.
\end{proof}
\section{Discussion of our Approach}
\label{sec:discussion}
Sections \ref{sec:preprocessing}-\ref{sec:updates} explain our approach to evaluating arbitrary CQAPs. We next discuss key decisions behind our approach.

\paragraph{1. Variable orders} Our approach can be rephrased to use hypertree decompositions~\cite{Gottlob99} instead of VOs, since they are different syntaxes for the same query decomposition class~\cite{OlteanuZ15}. Indeed, the set consisting of a variable and its dependency set in a VO can be interpreted as a bag of a hypertree decomposition whose edges between bags reflect those between the variables in the VO. Variable orders are more natural for our algorithms for constructing view trees and for enumeration, as well as worst-case optimal join algorithms such as the LeapFrog TrieJoin~\cite{LFTJ:ICDT:2014} and their use for constructing factorised representations of query results~\cite{OlteanuZ15}: These algorithms  proceed one variable at a time and not one bag of variables at a time. VO-based algorithms express more naturally computation by variable elimination.

\paragraph{2. Access-top VOs} Access-top VOs can have higher static and dynamic widths than arbitrary VOs. However, they are needed to attain the constant-delay enumeration in Theorem~\ref{thm:general}, as explained next. The maintenance procedure for view trees ensures that each view is calibrated\footnote{A relation $R$ is calibrated with respect to other relations in a query $Q$ if each tuple in $R$ participates to at least one tuple in the output of $Q$.} with respect to all of its descendant views and relations, since the updates are propagated bottom-up from the relations to the top view. Since the views constructed for the input variables are above all other views in a view tree constructed from an access-top VO, these views are calibrated. For a given tuple of values over the input variables, the calibration of these views guarantees that if they do not agree with this tuple, then there is no output tuple associated with the input tuple. For constant-delay enumeration, we follow a top-down traversal of the view tree and use the constant-time lookup of the hash maps implementing the views. 
Furthermore, since the output variables are above the bound variables in the VO, tuples of values over the output variables can be retrieved from views whose schemas do not contain bound variables. Hence, we can enumerate the \emph{distinct} tuples over the output variables for a given tuple over the input variables.

In case we would have used an arbitrary (and not access-top) VO, then the input variables may be anywhere in the VO; in particular, there may be views above the relations with the input variables that do not have input variables. On an enumeration request, the values given to the input variables act as selection conditions on the relations and may require the calibration of the views on top before the enumeration starts; this calibration may be as expensive as computing the query. Otherwise, we incur a non-constant cost for the enumeration of each output tuple. Either way, the enumeration delay may not be constant.

\paragraph{3. Lazy approach using residual queries} A simple CQAP evaluation approach is the lazy approach. On  updates, the lazy approach just updates the input relations. On enumeration, where each input variable is given a value, it computes the residual query obtained by setting the input variables to the given values. The enumeration of the tuples in the output of a residual query cannot guarantee constant delay, since the parts of the input relations, which satisfy the selection conditions on the input variables, are not necessarily calibrated, and the calibration may take as much time as computing the residual query.

\paragraph{4. Replacing each occurrence of an input variable by a fresh variable} Although this query rewriting removes the joins on the input variables, it does not affect the correctness of query evaluation. For enumeration, all fresh variables are fixed to given values. In access-top VOs, these variables are above the other variables and are in views that are calibrated with respect to the relations in their respective connected component of the rewritten query. We can then check whether all view trees satisfy the assignment of values to the input values. If a view tree fails, then the query output is empty for the  values given to the input variables.

\paragraph{5. Query fractures} The query rewriting in the previous discussion point is only the first step of query fracturing. The second step merges all fresh variables for an input variable into one variable in case they are in the same connected component. This does not affect correctness but may affect the complexity, as exemplified next.
Consider the triangle query in Example~\ref{ex:triangle-cqap-complexity}: $Q(B,C|A) = R(A,B), S(B,C), T(C,A)$. If we were to replace $A$ by two fresh variables $A_1$ and $A_2$, then the rewritten query would be: $Q'(B,C|A_1,A_2) = R(A_1,B), S(B,C), T(C,A_2)$. It still has one connected component. An access-top VO for $Q'$ is $A_1-A_2-B-C$ ($A_1$ and $A_2$ may be swapped, same for $B$ and $C$). The static width of $Q'$ is 2. Yet by merging back $A_1$ and $A_2$, we obtain $Q$, which admits the access-top VO $A-B-C$ and static width $3/2$ (same width can be obtained if $B$ and $C$ are swapped), as in Example~\ref{ex:triangle-cqap-complexity}.
\section{A Dichotomy for CQAPs}
\label{sec:dichotomy}
In this section, we prove our dichotomy result in Theorem~\ref{thm:dichotomy}, which states that  
the queries in the class CQAP$_0$ are precisely those queries that can be evaluated with constant update time and enumeration delay:

\medskip
\noindent
\textbf{Theorem~\ref{thm:dichotomy}.}
\textit{
	Let any CQAP $Q$ and database of size $N$.
	\begin{itemize}
	\item If $Q$ is in $\text{CQAP}_0$, then it admits  
	$\bigO{N}$ preprocessing time,
	$\bigO{1}$ enumeration delay, and $\bigO{1}$  
	update time for single-tuple updates.
	\item If $Q$ is not in $\text{CQAP}_0$ and has no repeating relation symbols,
		then there is no  algorithm that computes $Q$ with arbitrary preprocessing  time, 
		$\bigO{N^{\frac{1}{2} - \gamma}}$ enumeration delay, and 
		$\bigO{N^{\frac{1}{2} - \gamma}}$ amortised update time,
		for any 
			$\gamma >0$, unless the \OMv conjecture fails. 
	\end{itemize}
}

\medskip
The \OMv conjecture was introduced in Section~\ref{sec:results} (Conjecture~\ref{conj:omv}).
Before proving Theorem~\ref{thm:dichotomy}, we  
introduce an auxiliary lemma and a proposition. 

The next lemma states that the evaluation complexity of the fracture $Q_{\dagger}$ of 
a CQAP $Q$ is upper-bounded by the evaluation complexity of $Q$.  

\begin{lem}
\label{lem:fracture_equivalence}
If a CQAP $Q$ can be evaluated with 
$\bigO{f_p(N)}$ preprocessing time,
$\bigO{f_e(N)}$ enumeration delay, and $\bigO{f_u(N)}$ amortised update time
for databases of size $N$ and some functions $f_p$, $f_e$, and $f_u$, 
then the fracture $Q_{\dagger}$ can be evaluated with the same asymptotic complexities. 
\end{lem}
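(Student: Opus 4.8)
The plan is to give a black-box reduction: from any dynamic algorithm $\mathcal{A}$ for $Q$ meeting the stated bounds I would build an algorithm $\mathcal{A}_\dagger$ for $Q_\dagger$ with the same asymptotic complexities. The starting observation is that fracturing only renames and splits input variables, so $Q$ and $Q_\dagger$ refer to the very same relation symbols and hence to the same base data. Moreover, since the connected components $P_1,\ldots,P_m$ of $Q_\dagger$ are variable-disjoint, the output factorises as $Q_\dagger(\calO \mid \inst{i}) = \times_{j\in[m]} P_j(\calO_j \mid \inst{i}_j)$, where $\inst{i}_j$ is the restriction of the given input tuple to the input copies living in $P_j$. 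It therefore suffices to evaluate each component $P_j$ within the complexity of $Q$ and then to combine the per-component enumerations by nesting them, exactly as in the proof of Proposition~\ref{prop:enumeration}.

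For a fixed component $P_j$ I would run a dedicated copy $\mathcal{A}_j$ of the algorithm for $Q$ in which the relations of $P_j$ carry the real data and receive every single-tuple update that targets them, while the relations of the remaining components are left empty between enumeration requests and are \emph{neutralised} on demand so that the instance computes precisely $P_j$, padded by one trivial tuple per other component. Concretely, the input variables of $Q$ having a copy in $P_j$ are set to the values dictated by $\inst{i}_j$ and the remaining input variables to a fresh dummy; each atom of a component $P_{j'}$ with $j'\neq j$ then receives a single witness tuple agreeing with these fixed input values on its input columns and carrying a fresh symbol $\star$ on all other columns. Because every component other than $P_j$ now contributes exactly the all-$\star$ tuple, enumerating the output of $\mathcal{A}_j$ enumerates $P_j(\calO_j \mid \inst{i}_j)$ up to a constant padding, which I would then strip off.

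Preprocessing builds the constantly many instances in $\bigO{f_p(N)}$ time, and a single-tuple update is forwarded to the instances whose real component contains the updated relation, in $\bigO{f_u(N)}$ time. The delicate point, and the step I expect to be the main obstacle, is the neutralisation. A \emph{static} neutralisation that precomputes, for each transparent atom, all input-value combinations compatible with the real component would have to materialise a joint projection of size up to $N^{c}$ (e.g.\ for a body such as $R(X,Z),S(Z,Y),T(X,Y)$ with $X,Y$ input), destroying the bounds; I avoid this by exploiting that the input tuple is fixed at enumeration time, so only a \emph{constant} number of witness tuples must be inserted before enumeration and removed afterwards, keeping the data linear. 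The price is a one-time $\bigO{f_u(N)}$ cost before the first tuple, so the careful part of the argument is showing that this overhead is absorbed into the stated enumeration delay (it is immediately so whenever the update time dominates, as in the regime used for the dichotomy). Finally, repeating relation symbols need separate care, since then one relation may occur in several components and cannot be simultaneously real in one instance and neutralised in another; for such cases I would instead query $Q$ once per component and intersect the (projected) answers, using that a self-join on a shared input variable collapses to the single relevant lookup.
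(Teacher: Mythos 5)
Your reduction has the same skeleton as the paper's: one black-box instance of the algorithm for $Q$ per connected component of $Q_\dagger$, with that component's relations carrying the real data and all other atoms neutralised, and the per-component answers recombined via the Cartesian-product factorisation and nested enumeration. The genuine gap is in \emph{when} you neutralise. Installing the witness tuples at enumeration time means every access request is preceded (and followed) by a constant number of single-tuple updates, each costing $\bigO{f_u(N)}$ --- and only in an \emph{amortised} sense, so a single request may in fact trigger an arbitrarily expensive step inside the black box. The lemma claims $\bigO{f_e(N)}$ delay for \emph{arbitrary} $f_p$, $f_e$, $f_u$, whereas your argument delivers it only in regimes where the delay bound already swallows the update bound, as you yourself concede. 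Since the statement is unconditional in the relationship between $f_e$ and $f_u$, this is not a presentational wrinkle but a missing step: as written you prove a strictly weaker lemma.

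The paper's proof closes exactly this hole by making the neutralisation static. Already at preprocessing time it builds one database $\calD_j$ per component, in which the relations of $P_j$ keep their real data (with the fresh variables renamed back to the originals) and every relation outside $P_j$ is collapsed to dummy-valued content; at enumeration time the input tuple handed to instance $j$ assigns the dummy to the input variables without a copy in $P_j$ and the real values to the rest. No updates are issued on an access request, so the delay of each instance is exactly that of the black box, and a base-relation update is simply forwarded to the instances whose real component owns that relation, in $\bigO{f_u(N)}$ amortised time. Your observation that the out-of-component witnesses must agree with the real input values on input columns shared with $P_j$ is legitimate --- it is the one point the paper's construction treats tersely --- but it too can be handled statically, e.g.\ by keeping for each out-of-component atom the projection of its relation onto its input columns padded with the dummy elsewhere: a linear-size, input-independent structure maintainable in $\bigO{f_u(N)}$ per update, whose only effect on instance $j$'s answer is an emptiness indicator that the Cartesian product absorbs. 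Finally, your worry about repeating relation symbols is moot for the paper's use of the lemma (the hardness side of the dichotomy assumes no repeated symbols), though the lemma as stated does not exclude them, so your intersection workaround would need to be made precise for full generality.
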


\begin{proof}
Consider a CQAP $Q(\calO|\calI)$,
its fracture $Q_{\dagger}(\calO|\calI_{\dagger})$, and  
a database $\calD$  for $Q_{\dagger}$ of size $N$.
We call a fresh variable $A$ in $Q_{\dagger}$ that replaces 
a variable $A'$ in $Q$ a {\em representative} of $A'$. 
Let $Q_1, \ldots, Q_n$ be the connected components of 
$Q_{\dagger}$ and  $C_1, \ldots, C_n$ sets of database relations such that 
each $C_i$ consists of the relations that are referenced in $Q_i$.
We create from $\calD$ the databases $\calD_1, \ldots , \calD_n$,
where each $\calD_i$ is constructed as follows.
The database $\calD_i$ contains each relation $R$ in $\calD$,
modified as follows: (1) If  $R \in C_i$ and $R$ has a variable $A$ in its schema
that is a representative of a variable $A'$, then the variable $A$ is replaced 
by $A'$; (2) the values in all relations not contained in
$C_i$ are replaced  
by a single dummy value $d_i$.
The overall size of the databases $\calD_1, \ldots , \calD_n$ is $\bigO{N}$.
Given an input tuple $\inst{t}$ over $\calI$, we denote by 
$(Q(\calO|\inst{t}),\calD_i)$ the result of $Q$ for input $\inst{t}$
over  $\calD_i$. The result consists of the tuples over the 
output variables in $C_i$ for the given input tuple $\inst{t}$, paired with the dummy value $d_i$ over the output variables not in $C_i$.
Intuitively, the result of $Q_{\dagger}$
on $\calD$ can be obtained from the Cartesian product 
 of the results of $Q$ on $\calD_1, \ldots, \calD_n$.
 To be more precise, consider a tuple $\inst{t}_{\dagger}$
 over $\calI_{\dagger}$. 
 We define for each $i \in [n]$, a tuple $\inst{t}_i$ over $\calI$ such that
$\inst{t}_{i}[A] = \inst{t}_{\dagger}[A']$ if  $A'$ is a representative of  $A$.
The result of $Q_{\dagger}(\calO|\inst{t}_\dagger)$ on $\calD$ 
is equal to the Cartesian product 
$\times_{i \in [n]}\pi_{\calO_i}(Q(\calO|\inst{i}),\calD_i)$,
where $\calO_i$ is the set of output variables of $Q$ contained 
in $C_i$. 
Now, assume that we want to enumerate the tuples in 
$(Q_{\dagger}(\calO|\inst{t}_\dagger), \calD)$. 
 We start the enumeration procedure for each
$Q(\calO|\inst{i}),\calD_i)$ with $i \in [n]$.
For each 
$\inst{t}_1'\in Q(\calO|\inst{t}_1),\calD_1),$ $\ldots,$
$\inst{t}_n'\in Q(\calO|\inst{t}_n),\calD_n)$, we return 
the tuple $\pi_{\calO_1} \inst{t}_1'$ $\circ \ldots \circ$
$\pi_{\calO_n} \inst{t}_n'$. Hence, 
the tuples in $(Q_{\dagger}(\calO|\inst{t}_\dagger), \calD)$ can be enumerated 
with $\bigO{f_e(N)}$ delay if  $Q$ admits 
$\bigO{f_e(N)}$ enumeration delay.
%
We execute the preprocessing procedure 
for $Q$ on each of the databases $\calD_1, \ldots, \calD_n$,
which takes $\bigO{f_p(N)}$ overall time. 
Consider an update $\{\inst{t} \mapsto m \}$ to a relation 
$R$ that is contained in the connected component $C_i$ with
$i \in [n]$. We apply the update $\{\inst{t}_{\calI} \mapsto m \}$
to relation $R$ in $\calD_i$, where 
$\inst{t}_{\calI}$ is the tuple over $\calI$ defined as: 
$$\inst{t}_{\calI}[A] = \left\{
\begin{array}{ll}
\inst{t}[A'] & \text{ if } A' \text{ is a representative of } A \\
\inst{t}[A] & \, \textrm{otherwise} \\
\end{array}\right. $$ 
The update takes $\bigO{f_u(N)}$ amortised update time.

Overall, we obtain an evaluation procedure 
for $Q_{\dagger}$ with 
$\bigO{f_p(N)}$ preprocessing time,
$\bigO{f_e(N)}$ enumeration delay, and $\bigO{f_u(N)}$ amortised update time.
\end{proof}    

The next proposition is essential for the complexity upper bound in 
Theorem~\ref{thm:dichotomy}.

\begin{prop}
\label{prop:cqap0_delta0}
Every query in CQAP$_0$ has dynamic width $0$ and static width $1$.
\end{prop}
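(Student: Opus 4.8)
The plan is to exhibit a single access-top variable order $\omega$ for the fracture $Q_{\dagger}$ witnessing $\dfw(\omega)=0$ and $\fw(\omega)=1$, and then to invoke the general lower bounds $\dfw(\omega')\ge 0$ and $\fw(\omega')\ge 1$ that hold for \emph{every} variable order $\omega'$, so that the lexicographic minimum in Definition~\ref{def:fac_width_CQAP} is forced to be exactly $(0,1)$. The bound $\fw(\omega')\ge 1$ holds because $\{X\}\cup\dep_{\omega'}(X)$ always contains $X$, and any fractional edge cover must place total weight at least $1$ on the atoms of $Q_X$ containing $X$; the bound $\dfw(\omega')\ge 0$ is immediate. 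Thus the entire argument reduces to building one good $\omega$ and computing its two widths.

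For the construction I would use that $Q_{\dagger}$ is hierarchical, hence admits a canonical variable order, i.e.\ one in which the variables of every leaf atom are exactly the inner nodes of its root-to-leaf path. Among such canonical orders I pick one that is access-top, which is where free-dominance and input-dominance enter. Whenever a variable $D$ strictly dominates a variable $X$ (so $\atoms(X)\subsetneq\atoms(D)$ and $D$ is forced to be an ancestor of $X$), free-dominance gives that $D$ is free whenever $X$ is, and input-dominance gives that $D$ is input whenever $X$ is; hence no bound variable is forced above a free one and no output variable above an input one. The only remaining freedom concerns variables sharing the same atom set, which necessarily lie on one chain; there I order them input-first, then the other free variables, then the bound ones. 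This produces an order that is simultaneously free-top and input-top, i.e.\ access-top. I expect the delicate point to be verifying that this intra-chain ordering is globally consistent with the forced ancestor relation, i.e.\ that free-dominance and input-dominance leave no conflicting constraint. This is the main obstacle, and it is exactly what the two dominance conditions are designed to rule out.

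For the widths of this $\omega$, the key observation is that for every variable $X$ and every atom $R(\calY)$ appearing in the subtree $\omega_X$, we have $\{X\}\cup\dep_\omega(X)\subseteq\calY$. Indeed, since $\omega$ is canonical and $R$ lies below $X$, the path from the root to $R$ passes through $X$, so $\calY$ contains $X$ together with all ancestors of $X$; as $\dep_\omega(X)\subseteq\anc_\omega(X)$, the inclusion follows. (Because the fracture is hierarchical and therefore acyclic, no indicator projections are added, so $\atoms(\omega_X)$ consists only of genuine atoms and this covers every case in the definition of $\dfw$.) Consequently $(\{X\}\cup\dep_\omega(X))\setminus\calY=\emptyset$ for every such atom, giving $\rho^*_{Q_X}(\emptyset)=0$ and hence $\dfw(\omega)=0$. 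Moreover, picking any one atom $R(\calY)$ of $\omega_X$ that contains $X$ (such an atom exists, since $X$ occurs in some atom, which then lies in $\omega_X$) yields a fractional edge cover of $\{X\}\cup\dep_\omega(X)$ of weight $1$, so $\fw(\omega)\le 1$. Combined with the lower bounds from the first paragraph, this establishes $(\dfw(Q),\fw(Q))=(0,1)$.
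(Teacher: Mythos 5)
Your proof is correct and its core is the same as the paper's: both arguments build a canonical variable order for the hierarchical fracture $Q_{\dagger}$, use free-dominance and input-dominance to make it access-top, and then observe that for every variable $X$ and every atom $R(\calY)$ below $X$ the set $\{X\}\cup\dep_\omega(X)$ is contained in $\calY$, so that $\rho^*_{Q_X}((\{X\}\cup\dep_\omega(X))\setminus\calY)=\rho^*_{Q_X}(\emptyset)=0$ and $\dfw(\omega)=0$. The one place you diverge is the static width: the paper deduces $\fw=1$ from $\dfw=0$ via Proposition~\ref{prop:width_delta_inequal} (the general fact that $\dfw=\fw$ or $\dfw=\fw-1$ for hierarchical fractures), whereas you compute it directly by noting that any single atom containing $X$ already covers $\{X\}\cup\dep_\omega(X)$, giving $\rho^*_{Q_X}(\{X\}\cup\dep_\omega(X))\le 1$, and pairing this with the trivial lower bound $\rho^*\ge 1$ for a nonempty variable set. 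Your route is more self-contained and avoids importing the external proposition; you are also somewhat more explicit than the paper on two points it leaves implicit, namely that no indicator projections arise for acyclic (hierarchical) fractures and that the witness $(0,1)$ is in fact the lexicographic minimum over all access-top VOs. The only cosmetic omission is the degenerate case of a query with no variables, which the paper also only handles in a footnote.
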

    
\begin{proof}     
Consider a query $Q$ in CQAP$_0$ and its 
fracture $Q_{\dagger}$. We first show that the dynamic width 
of $Q$ is $0$.  By definition, $Q_{\dagger}$
is hierarchical, free-dominant, and input-dominant. 
Hierarchical queries admit canonical VOs. 
In canonical VOs, it holds:
If a variable $A$ dominates a variable $B$,
then $A$ is on top of $B$.  
Hence, $Q_{\dagger}$ admits a canonical VO 
that is access-top. 
Consider a variable $X$ in $\omega$ and 
an atom $R(\calY)$ in the subtree 
$\omega_X$ rooted at $X$. By the definition 
of canonical VOs, it holds: the dependency set of $X$ consists of the ancestor variables 
of $X$; and $\calY$ contains $X$ and all ancestor variables of $X$.
Hence, we have 
$\rho_{Q_X}^*((\{X\} \cup \dep_{\omega}(X))\setminus\calY) = 
\rho_{Q_X}^*((\{X\} \cup \anc_{\omega}(X))\setminus\calY)=
\rho_{Q_X}^*(\emptyset)=0$. This implies that the 
dynamic width of $\omega$ is $0$. This means 
that the dynamic width of $Q_{\dagger}$, hence,
the dynamic width of $Q$ is $0$.

It follows from Proposition \ref{prop:width_delta_inequal}
that the static width of $Q$ 
is $1$\footnote{To simplify the  presentation, we assume that $Q$ contains at least one variable, so it has 
the static width at least $1$. Otherwise, it can be trivially evaluated with constant preprocessing time, update time, 
and enumeration delay.}. 
\end{proof}     

In the following, we first prove the complexity upper bound and then the complexity lower bound 
stated in Theorem~\ref{thm:dichotomy}.

\subsection{Complexity Upper Bound}	
We prove the first statement in Theorem~\ref{thm:dichotomy}.
Assume that $Q$ is in  $\text{CQAP}_0$.
 By Proposition~\ref{prop:cqap0_delta0},
$Q$'s dynamic width is $0$.
By the definition of CQAP$_0$, the fracture $Q_{\dagger}$ must be hierarchical.
From Proposition~\ref{prop:width_delta_inequal}, 
the static width of $Q_{\dagger}$, hence the static width of 
$Q$, is at most $1$.
It follows from Theorem~\ref{thm:general} that $Q$ 
can be evaluated with  
$\bigO{N}$ preprocessing time, 
$\bigO{1}$ update time, and
$\bigO{1}$ enumeration delay. 

\subsection{Complexity Lower Bound}
We prove the second statement in Theorem~\ref{thm:dichotomy}.
The proof is based on a  reduction of the 
Online Matrix-Vector Multiplication (\OMv)
problem (Definition~\ref{def:OMv}) to the 
evaluation of  CQPAs that are not in $\text{CQAP}_0$.

We start with the high-level proof idea. 
Consider the following simple CQAPs, which are 
not in $\text{CQAP}_0$. 
\begin{align*}
Q_1(\calO |\cdot)= &R(A), S(A,B), T(B)  \TAB  \calO \subseteq \{A,B\}  \\ 
Q_2(A |\cdot)=& R(A,B),S( B)  \\
Q_3( \cdot | A )= & R(A,B),S(B) \\
Q_4(B |A)=& R(A,B),S(B)  
\end{align*}

Each query is equal to its
fracture.
Query $Q_1$ is not hierarchical;
$Q_2$ is not free-dominant; 
$Q_3$ and $Q_4$ are not input-dominant.
It is known that queries that are not hierarchical or 
free-dominant do not admit 
constant update time and enumeration delay,
unless the \OMv conjecture 
fails \cite{BerkholzKS17}. 
We show that the $\OMv$ problem can also be reduced 
to the evaluation of each of the queries $Q_3$ and $Q_4$.
Our reduction implies that any algorithm 
that evaluates $Q_3$ or $Q_4$ 
with arbitrary preprocessing time, 
$\bigO{N^{\frac{1}{2}-\gamma}}$
amortised update time, and 
$\bigO{N^{\frac{1}{2}-\gamma}}$
enumeration delay, for any $\gamma >0$, can be used to solve the 
\OMv problem in subcubic time, which 
rejects the \OMv conjecture. 
We then show that, given any CQAP $Q$
that is not in 
$\text{CQAP}_0$ and does not have repeating relation symbols, we can   
reduce the evaluation of one of the queries $Q_1$, $Q_2$, $Q_3$ and $Q_4$
to the evaluation of $Q$.

In each of the following two reductions, 
our starting assumption is that there is 
an algorithm $\calA$ that evaluates the given query 
with arbitrary 
preprocessing  time, $\bigO{N^{\frac{1}{2} - \gamma}}$ amortised update time, 
and $\bigO{N^{\frac{1}{2} - \gamma}}$ enumeration delay
for some $\gamma >0$. We then show that $\calA$ can be used 
to design an algorithm $\calB$ 
that solves the \OMv problem
in subcubic time.

\paragraph*{Hardness for $Q_3$}
Given $n \geq 1$, let $\inst{M}$, $\inst{v}_1,$ $\ldots,$ $\inst{v}_n$
be an input to the \OMv problem, where $\inst{M}$ is an $n \times n$ Boolean
Matrix and $\inst{v}_1, \ldots, \inst{v}_n$ are Boolean 
column vectors  of size $n$. Algorithm $\calB$ uses relation  
$R$ to encode matrix $\inst{M}$ and relation $S$ to encode the 
incoming vectors $\inst{v}_1, \ldots, \inst{v}_n$. 
The database domain is $[n]$. 
Algorithm $\calB$ first executes the preprocessing stage
 on the empty database. Since the database is empty, 
 the preprocessing stage must end after constant time. 
 Then, it executes 
 at most $n^2$ updates to relation $R$ such that $R(i,j) = 1$
 if and only if $\inst{M}(i,j) = 1$. Afterwards, it performs a round
 of operations for each incoming vector $\inst{v}_r$
 with $r \in [n]$.
 In the first part of each round, it executes at most $n$ updates
 to relation $S$ such that $S(j) = 1$ if and only if
 $\inst{v}_r(j) = 1$. 
 Observe that $Q_3(\cdot| i)$ is true for some $i\in [n]$ 
 if and only if $(\inst{M} \inst{v}_r)(i) =1$.
Algorithm $\calB$ constructs 
the result vector $\inst{u}_r = \inst{M} \inst{v}_r$ as follows. 
It asks for each 
$i \in [n]$, whether  
 $Q_3(\cdot| i)$ is true, i.e.,
 $i$ is in the result of $Q_3$.
 If yes, the $i$-th entry of the result of $\inst{u}_r$
 is set to $1$, otherwise, it is set to $0$.

 \smallskip
 \textit{Time Analysis.}
The size of the database remains $\bigO{n^2}$ during the whole procedure. Algorithm $\calB$
needs at most $n^2$ updates to encode $\inst{M}$ by relation 
$R$. Each update can be processed in $\bigO{(n^2)^{\frac{1}{2}-\gamma}}$
amortised update time. 
Hence, the overall time to execute these updates is 
$\bigO{n^2 (n^2)^{\frac{1}{2}-\gamma}} = \bigO{n^{3 - 2\gamma}}$.
In each round $r$ with $r \in [n]$, algorithm $\calB$ executes 
$n$ updates to encode vector $\inst{v}_r$ into relation $S$ and
asks for the result of $Q_3(\cdot | i)$ for every $i \in [n]$.  
The $n$ updates and requests 
need  
$\bigO{n (n^2)^{\frac{1}{2}-\gamma}} = \bigO{n^{2 - 2\gamma}}$
time.
Hence, the overall time for a single round is $\bigO{n^{2 - 2\gamma}}$.
Consequently,
the time for $n$ rounds is 
$\bigO{nn^{2 - 2\gamma}}= \bigO{n^{3 - 2\gamma}}$.
This means that the overall time of the reduction is 
 $\bigO{n^{3 - 2\gamma}}$ in worst-case, which is subcubic.

\paragraph*{Hardness for $Q_4$}
The reduction differs slightly from the case for $Q_3$ in the way 
algorithm $\calB$ constructs 
the result vector  $\inst{u}_r = \inst{M} \inst{v}_r$ in each round $r$.
For each 
$i \in [n]$, it starts the enumeration process for 
 $Q_4(B|i)$. If one tuple is returned, it stops
 the enumeration process and sets the $i$-th  
 entry of $\inst{u}_r$ to be $1$. 
 If no tuple is returned, 
 the $i$-th  entry is set to $0$. 
 Thus, the time to decide the $i$-th entry 
 of the result of $\inst{u}_r$
 is the same as in case of $Q_3$.
Hence, the overall time of the reduction stays 
 subcubic.

\paragraph*{Hardness in the General Case}
Consider now an arbitrary CQAP $Q$ 
that is not in CQAP$_0$ and does not have 
repeating relation symbols. 
Since $Q$ is not in 
CQAP$_0$, this means
that its fracture $Q_{\dagger}$ is either not hierarchical, not free-dominant, or 
not input-dominant. 
If $Q_\dagger$ is not hierarchical or it is not free-dominant and all free variables are output, it 
follows from prior work that there is no algorithm that evaluates $Q_\dagger$ with 
 $\bigO{N^{\frac{1}{2} - \gamma}}$ enumeration delay, and
  $\bigO{N^{\frac{1}{2} - \gamma}}$ amortised update time for any $\gamma > 0$,
unless the \OMv conjecture fails \cite{BerkholzKS17}. 
By Lemma~\ref{lem:fracture_equivalence}, 
no such algorithm can exist for $Q$.
Hence, we assume that $Q_\dagger$ is hierarchical and consider two cases:  
\begin{enumerate}
\item $Q_\dagger$ is not free-dominant and all free variables are input 
\item  $Q_\dagger$ is free-dominant but not input-dominant
\end{enumerate}

 \smallskip
 \textit{Case (1).}
The query must contain 
an input  variable $A$ and a bound 
variable $B$
such that $\atoms(A) \subset \atoms(B)$.
This mean that there are
two atoms $R(\calX)$ and $S(\calY)$ 
with $\calY \cap \{A,B\} = \{B\}$
and $A, B \in \calX$.   
Assume that there is an algorithm $\calA$ that evaluates 
 $Q_{\dagger}$ with arbitrary preprocessing time, 
 $\bigO{N^{\frac{1}{2} - \gamma}}$ enumeration delay, and
  $\bigO{N^{\frac{1}{2} - \gamma}}$ amortised update time, for some $\gamma > 0$.
  We will design an algorithm $\calB$ that evaluates 
  $Q_3$ with the same complexities. 
  This rejects the \OMv conjecture.
  Hence, by Lemma~\ref{lem:fracture_equivalence},
  $Q$ cannot be evaluated with these complexities, unless 
  the \OMv conjecture fails. 
  
We define $\calR_{(A,B)}$ to be the set of atoms that 
contain both $A$ and $B$ in their schemas
and $\calS_{(\neg A,B)}$ to be the set of atoms 
that contain $B$ but not $A$.
Note that there cannot be any atom containing $A$
but not $B$, since this would imply that the query is not hierarchical, 
contradicting our assumption. 
We use  each atom $R'(\calX') \in \calR_{(A,B)}$
to encode atom $R(A,B)$ and
each atom $S'(\calY') \in \calS_{(\neg A,B)}$
to  encode atom $S(B)$ in $Q_3$.
Consider a database $\calD$ of size $N$ for $Q_3$
and a dummy value $d$ that is not included in 
the domain of $\calD$. 
We write $(\calS, A= a, B= b, d)$ to denote a tuple
over schema $\calS$ that 
 assigns the values $a$ and $b$ to the variables $A$ and respectively $B$
 and all other variables in $\calS$ to $d$.
Likewise, $(\calS, B= b, d)$ denotes a tuple
that assigns value $b$ to $B$ 
and all other variables in $\calS$ to $d$.
Algorithm $\calB$ first constructs from $\calD$ a database $\calD'$ for $Q_\dagger$
as follows. For each tuple $(a,b)$ in relation $R$ and each atom 
$R'(\calX')$ in $\calR_{A,B}$, it assigns the tuple $(\calX', A=a, B=b, d)$
to relation $R'$. Likewise,   
for each value $b$ in relation $S$ and each atom 
$S'(\calY')$ in $\calS_{(\neg A, B)}$, it assigns the tuple $(\calY', B=b, d)$
to relation $S'$. The size of $\calD'$ is linear in $N$.
Then, algorithm $\calB$ executes the preprocessing for $Q_{\dagger}$ on $\calD'$.
Each single-tuple update $\{(a,b) \mapsto m\}$ 
to relation $R$ is translated 
to a sequence of single-tuple updates 
$\{(\calX', A=a, B=b, d) \mapsto m\}$ to all relations 
 referenced by atoms in $\calR_{(A,B)}$.
Analogously, updates $\{b \mapsto m\}$ 
to $S$ are translated to updates 
$\{(\calS', B=b, d) \mapsto m\}$ to all relations 
$\calS'$ with $\calS'(\calY')\in \calS_{(\neg A,B)}$.  
Hence, the amortised update time is 
$\bigO{N^{0.5-\gamma}}$. 
Each input tuple  $(a)$ for $Q_3$ is translated into an input tuple  
$(\calI_{\dagger}, A=a, d)$ for $Q_\dagger$
where $\calI_{\dagger}$ is the set of input variables for $Q_{\dagger}$. 
Recall that all free variables of $Q_{\dagger}$ are input. 
The answer of $Q_3(\cdot | a)$ is true if and only if 
the answer of  $Q_\dagger(\cdot |(\calI_\dagger, A=a, d))$ is true. 
The answer time is $\bigO{N^{0.5-\gamma}}$. 
We conclude that $Q_3$ can be evaluated with 
$\bigO{N^{0.5-\gamma}}$ enumeration delay and 
 $\bigO{N^{0.5-\gamma}}$ amortised update time,
 a contradiction due to the  \OMv conjecture.  
  
 \smallskip
 \textit{Case (2).}
We now consider the case that the query $Q_{\dagger}$ 
is free-dominant but not input-dominant.
  In this case, we reduce the evaluation of 
  $Q_4$ to the evaluation of $Q_{\dagger}$.
  The reduction is analogous to Case (1).
  The way we encode the atoms $R(A,B)$
  and $S(B)$, do preprocessing, and translate the updates 
  is exactly the same as in Case (1).
  The only difference is the way we retrieve 
  the $B$-values in $Q_4(B|a)$ for an input value $a$.
   We translate $a$ into an input tuple to $Q_{\dagger}$ where all input variables 
   besides $A$ are assigned to $d$.
   Recall that $Q_{\dagger}$ might have several output variables besides 
   $B$. By construction, they can be assigned only to $d$. 
   Hence, all output tuples returned by $Q_\dagger$
   have distinct  $B$-values. These $B$-values constitute the result of 
   $Q_4(B|a)$. We conclude that $Q_4$ can be evaluated with 
$\bigO{N^{0.5-\gamma}}$ enumeration delay and 
 $\bigO{N^{0.5-\gamma}}$ amortised update time,
 which contradicts the  \OMv conjecture.  
 
 \medskip
 
 Overall, we obtain that CQAPs that are not in CQAP$_0$ and do not have 
repeating relation symbols cannot be evaluated 
with $\bigO{N^{0.5-\gamma}}$ enumeration delay and 
 $\bigO{N^{0.5-\gamma}}$ amortised update time for any $\gamma >0$, 
unless the \OMv conjecture fails. This concludes the proof of the lower bound statement in 
Theorem~\ref{thm:dichotomy}.

\section{Trade-Offs for CQAPs with Hierarchical Fractures}
\label{sec:trade-off}
For CQAPs with hierarchical fractures, we can parameterise the complexities in Theorem~\ref{thm:general} 
to obtain trade-offs between preprocessing time, update time, and enumeration delay. We first restate our main result
on such trade-offs from Section~\ref{sec:results}:

\medskip
\textbf{Theorem~\ref{thm:main_dynamic}}
\textit{Let any CQAP $Q$ with static width $\fw$ and dynamic width $\dfw$, a database of size $N$, and $\eps \in [0,1]$. If $Q$'s fracture is hierarchical, then $Q$ admits $\bigO{N^{1 + (\fw -1)\eps}}$ preprocessing time, $\bigO{N^{1-\eps}}$ enumeration delay, and $\bigO{N^{\dfw\eps}}$ amortised  update time for single-tuple updates.
}

\medskip

We achieve these trade-offs by following two core ideas from prior work~\cite{Trade_Offs_LMCS23}.
First, we partition the input relations into heavy and light parts based on the degrees of the values. This transforms a query over the input relations into a union of queries over heavy and light relation parts.
Second, we employ different evaluation strategies for different heavy-light combinations of parts of the input relations. This allows us to confine the worst-case behaviour during query evaluation, caused by high-degree values in the database. 

We construct a set of VOs for the hierarchical fracture of a given CQAP. Each VO represents a different evaluation strategy over heavy and light relation parts. 
For VOs over light relation parts, we follow the general approach from Section~\ref{sec:preprocessing} and construct view trees from access-top VOs.  
For VOs involving heavy relation parts, we construct view trees from VOs that are not access-top, thus yielding non-constant enumeration delay but better preprocessing and update times. This trade-off is controlled by the parameter $\epsilon$.

The enumeration faces a new challenge: the tuples encoded in the constructed view trees may overlap, yet we need to enumerate only distinct tuples, i.e., tuples that have not been reported before. To address this challenge, we adapt the union algorithm from prior work~\cite{Durand:CSL:11}, which is originally designed to enumerate distinct elements from a union of sets. We modify this algorithm to enumerate distinct tuples from multiple view trees.

Handling updates also faces a new challenge: 
although propagating updates in the constructed view trees follows the procedure from Section~\ref{sec:updates},
updates may change the degrees of values, causing previously light tuples to become heavy and vice versa. 
In such cases, we need to rebalance the data partitioning and possibly recompute some views. While such rebalancing steps may take longer than a single-tuple update, they happen periodically, and their amortised cost remains the same as that of a single-tuple update.

Sections~\ref{sec:partitioning}-\ref{sec:trade-off-updates} elaborate our technique and algorithmic ideas
that achieve the trade-offs in Theorem~\ref{thm:main_dynamic}. 
The full details of our approach are given in Appendices~C--E
of the technical report~\cite{access_pattern_arxiv}.
Section~\ref{sec:comparison} compares our maintenance strategy achieving these trade-offs with typical eager and lazy approaches. 

\subsection{Data Partitioning}
\label{sec:partitioning}
We partition relations based on the frequencies of their values. 
For a database $\calD$, relation $R \in \calD$ over schema $\mathcal{X}$, schema $\calS \subset\mathcal{X}$, and threshold $\theta$, the pair $(R^{\calS\veryshortarrow H}, R^{\calS\veryshortarrow L})$ is a {\em partition} of $R$ on $\calS$ with threshold $\theta$
	if it satisfies the  conditions:\\[6pt]
	\begin{tabular}{rl}
		{(union)}            & $R(\tup{x}) =R^{\calS\veryshortarrow H}(\tup{x}) + R^{\calS\veryshortarrow L}(\tup{x})$ for $\tup{x} \in \Dom(\calX)$ \\[4pt]
		{(domain partition)} &
		$\pi_{\calS} R^{\calS\veryshortarrow H} \cap \pi_{\calS} 
		R^{\calS\veryshortarrow L} = \emptyset$          \\[4pt]
		(heavy part)         & $\forall \tup{t} \in \pi_{\calS} R^{\calS\veryshortarrow H},$ $\exists K \in \calD$:
		$|\sigma_{\calS= \tup{t}} K| \geq \frac{1}{2}\theta$   \\[4pt]
		(light part)         & $\forall \tup{t} \in \pi_{\calS} R^{\calS\veryshortarrow L}$ and $\forall K \in \calD$:
		$|\sigma_{\calS=\tup{t}} K| < \frac{3}{2}\theta$
	\end{tabular}\\[6pt]
	We call $(R^{\calS\veryshortarrow H}, R^{\calS\veryshortarrow L})$ a {\em strict} partition of $R$ on $\calS$ with threshold
	$\theta$ if it satisfies the union and
	domain partition conditions and the strict versions
	of the heavy and light part conditions:
	\\[6pt]
	\begin{tabular}{rl}
		(strict heavy part)         & $\forall \tup{t} \in \pi_{\calS} R^{\calS\veryshortarrow H}$,  
		$\exists K \in \calD$:
		$|\sigma_{\calS= \tup{t}} K| \geq \theta$   \\[4pt]
		(strict light part)         & $\forall\tup{t} \in \pi_{\calS} R^{\calS\veryshortarrow L}$ and $\forall K \in \calD$:
		$|\sigma_{\calS=\tup{t}} K| < \theta$
	\end{tabular}\\[6pt]
The relation $R^{\calS\veryshortarrow H}$ is called {\em heavy}, and the relation $R^{\calS\veryshortarrow L}$ is called {\em light} on the partition key $\calS$, as they consist of all $\calS$-tuples 
in $R$ that are heavy and respectively light.
Due to the domain partition, the relations $R^{\calS\veryshortarrow H}$ and 
$R^{\calS\veryshortarrow L}$ are disjoint. 
For $|\calD|=N$ and a strict partition $(R^{\calS\veryshortarrow H}, R^{\calS\veryshortarrow L})$ of $R$ on $\mathcal{S}$ with threshold 
$\theta=N^\eps$ for $\eps\in[0,1]$, we have two bounds:
\[
\text{(1) } \forall\tup{t} \in \pi_{\calS} R^{\calS\veryshortarrow L}: |\sigma_{\calS=\tup{t}} R^{\calS \veryshortarrow L}| < \theta = N^\eps,
\quad\text{ and }\quad
\text{(2) } |\pi_{\calS} R^{\calS\veryshortarrow H}| \leq \frac{N}{\theta} = N^{1-\eps}.
\]
The first bound follows directly from the strict light part condition. 
The second bound follows from the strict heavy part condition, which says that for each tuple $\tup{t} \in |\pi_\calS R^{\calS\veryshortarrow H}|$, there exists a relation $K$ such that $|\sigma_{\calS=\tup{t}}K| \geq N^\eps$. Assume now that there exists more than $N^{1-\eps}$ such tuples. Then, the database contains more than $N^{1-\eps} N^\eps = N$ tuples, which contradicts our assumption that the database is of size $N$.

Disjoint relation parts can be further partitioned independently of each other on different partition keys.
We write $R^{\calS_1 \veryshortarrow s_1, \ldots, \calS_n \veryshortarrow s_n}$ to denote the relation part obtained after partitioning $R^{\calS_1 \veryshortarrow s_1, \ldots, \calS_{n-1} \veryshortarrow s_{n-1}}$ on $\calS_n$, where $s_i \in \{H,L\}$ for $i \in [n]$. 
The domain of $R^{\calS_1 \veryshortarrow s_1, \ldots, \calS_n \veryshortarrow s_n}$ is the intersection of the domains of $R^{\calS_i \veryshortarrow s_i}$, for $i \in [n]$.
We refer to $\calS_1 \veryshortarrow s_1, \ldots, \calS_n \veryshortarrow s_n$ as a heavy-light signature for $R$.
Consider for instance a relation $R$ with schema $(A,B,C)$.
One possible partition of $R$ consists of the relation parts 
$R^{A\veryshortarrow L}$, $R^{A\veryshortarrow H, AB \veryshortarrow L}$,
and $R^{A\veryshortarrow H, AB \veryshortarrow H}$.
The union of these relation parts constitutes the relation $R$.
In our approach described in Sections \ref{sec:trade-off-preprocessing}-\ref{sec:trade-off-updates}, the partition keys
$\calS_1, \ldots, \calS_n$ in a signature  
$\calS_1 \shortarrow s_1, \ldots, \calS_n \shortarrow s_n$ form a strict  inclusion chain, i.e., 
$\calS_1 \subset \ldots \subset \calS_n$. In general, partition keys can be disjoint.  
\subsection{Preprocessing}
\label{sec:trade-off-preprocessing}
The preprocessing has two steps. 
First, we construct a set of VOs corresponding to the different evaluation strategies over the heavy and light relation parts. \nop{Each such VO is constructed from the {\em canonical} VO of $Q$ by turning some of its subtrees into access-top VOs.} 
Second, we build a view tree from each such VO using the function $\vt$ from Figure~\ref{fig:general_view_tree_construction}.
We illustrate the idea in the following example.

\begin{figure}[t]
	\centering
	\footnotesize
	\begin{minipage}[b]{0.49\linewidth}
	  \centering
	  \begin{tikzpicture}[xscale=0.96, yscale=0.7]
		\node at (-0.9, -1.0) (A) {\small $V_{A_1}(A_1)$};
		\node at (-0.9, -2.0) (D) {\small $V_C(A_1,C)$} edge[-] (A);
		\node at (-0.9, -3.0) (C) {\small $V_D(A_1,C,D)$} edge[-] (D);
		\node at (-0.9, -4.0) (B) {\small $V_B(A_1,B,C,D)$} edge[-] (C);
		\node at (-2.5, -5.0) (R) {\small  $R^{A_1B \veryshortarrow L}(A_1,B,C)$} edge[-] (B);
		\node at (0.7, -5.0) (S) {\small  $S^{A_1B \veryshortarrow L}(A_1,B,D)$} edge[-] (B);
	  \end{tikzpicture}
	\end{minipage}
  \begin{minipage}[b]{0.49\linewidth}
	\centering
	  \begin{tikzpicture}[xscale=0.96, yscale=0.7]
		  \node at (-0.9, -1.0) (B') {\small $V_{A_1}(A_1)$};
		  \node at (-0.9, -2.0) (B) {\small $V_B(A_1,B)$} edge[-] (B');
		  \node at (-2.5, -3.0) (C') {\small $V'_C(A_1,B)$} edge[-] (B);
		  \node at (-2.5, -4.0) (C) {\small $V_C(A_1,B,C)$} edge[-] (C');
		  \node at (-2.5, -5.0) (R) {\small  $R^{A_1B \veryshortarrow H}(A_1,B,C)$} edge[-] (C);
		  \node at (0.7, -3.0) (D') {\small $V'_D(A_1,B)$} edge[-] (B);
		  \node at (0.7, -4.0) (D) {\small $V_D(A_1,B,D)$} edge[-] (D');
		  \node at (0.7, -5.0) (S) {\small  $S^{A_1B \veryshortarrow H}(A_1,B,D)$} edge[-] (D);
	  \end{tikzpicture}
  \end{minipage}
  \caption{
  	View trees constructed for $Q_1(D|A_1,C)=R(A_1,B,C), S(A_1,B,$ $D)$ from Example~\ref{ex:CQAP_1} 
	using the VOs:  $A_1-C-D-B-\{R^{A_1B \veryshortarrow L}(A_1,B,C), S^{A_1B \veryshortarrow L}(A_1,B,D)\}$ (left) and $A_1-B-\{ C - R^{A_1B \veryshortarrow H}(A_1,B,C), D-S^{A_1B \veryshortarrow H}(A_1,B,D)\}$ (right).}
  \label{fig:preprocessing_CQAP_1}
\end{figure} 

\nop{
}

\begin{exa}
\label{ex:CQAP_1}
We explain the construction of the view trees for the connected component from Figure~\ref{fig:general_hypergraphs} (middle) corresponding to the query $Q_1(D|A_1,C)=R(A_1,B,C),S(A_1,B,D)$. 
In the canonical VO of this query, shown in Figure~\ref{fig:general_CPAP_0} (left), the bound variable $B$ dominates the free variables $C$ and $D$. We create a strict partition of the relations $R$ and $S$ on $(A_1,B)$ with threshold $N^{\eps}$, where $N$ is the database size.

To evaluate the join over the light relation parts, we turn the subtree in the canonical VO rooted at $B$ into an access-top VO and construct a view tree following this new VO, see Figure~\ref{fig:preprocessing_CQAP_1} (left). 
We compute the view $V_B(A_1,B,C,D)$ in time $\bigO{N^{1+\eps}}$: For each $(a,b,c)$ in the light part $R^{A_1B\veryshortarrow L}(A_1,B,C)$ of $R$, we fetch the $D$-values in $S^{A_1B\veryshortarrow L}(A_1,B,D)$ that are paired with $(a,b)$. The iteration in $R^{A_1B\veryshortarrow L}(A_1,B,C)$ takes $\bigO{N}$ time and for each $(a,b)$, there are at most $N^\eps$ $D$-values in $S^{A_1B\veryshortarrow L}(A_1,B,D)$. 
The views $V_{D}$, $V_{C}$, and $V_{A}$ result from $V_B$
by marginalising out one variable at a time. Overall, this takes $\bigO{N^{1+\eps}}$ time.  
 
 To evaluate the join over the heavy parts of $R$ and $S$, we construct a view tree following the canonical VO (Figure~\ref{fig:preprocessing_CQAP_1} right). The VO and view tree are the same as in Figure~\ref{fig:general_hypergraphs}, except that the leaves are the heavy parts of $R$ and $S$. We can materialise this view tree in $\bigO{N}$ time, cf.\@ Example~\ref{ex:general_preprocessing_CQAP0}.

 Overall, we can compute the two view trees in $\bigO{N^{1+\eps}}$ time. 
\qed
\end{exa}

\begin{figure}[t]
	\centering
	\setlength{\tabcolsep}{3pt}
	\renewcommand{\arraystretch}{1.1}
	\setcounter{magicrownumbers}{0}
	\begin{tabular}[t]{@{}l@{}l@{}l@{}}
		\toprule
		\multicolumn{3}{l}{$\vos(\text{VO } \nu, \text{access pattern } (\calO | \calI))$ : set of VOs} \\
		\midrule
		\multicolumn{3}{l}{\MATCH $\nu$:} \\
		\midrule
		\phantom{ab} & $R^{\signs}(\calY)$\hspace*{2.5em} &
		\hspace{-0.6cm}
		\linenumber \RETURN $\{R^{\signs}(\calY)\}$\\[2pt]
		\cmidrule{2-3} \\[-6pt]
		             &
		\begin{minipage}[t]{2.75cm}
			\hspace*{-0.35cm}
			\begin{tikzpicture}[xscale=0.45, yscale=1]
				\node at (0,-2)  (n4) {$X$};
				\node at (-1.2,-3)  (n1) {$\nu_1$} edge[-] (n4);
				\node at (0,-3)  (n2) {$\ldots$};
				\node at (1.2,-3)  (n3) {$\nu_k$} edge[-] (n4);
			\end{tikzpicture}
		\end{minipage}
		             &
					 \hspace{-0.6cm}
		\begin{minipage}[t]{11.3cm}
			\vspace{-1.55cm}
			\linenumber \LET $key = \anc_{\omega}(X) \cup \{X\}$\\[0.5ex]
			\linenumber \LET $\calI_X = (\calI \cap \vars(\nu)) \cup \anc_{\omega}(X)$ \\[0.5ex]
			\linenumber \LET $\calO_X = \calO \cap \vars(\nu)$ \\[0.5ex]
			\linenumber \LET $Q_X(\calO_X | \calI_X) = \text{join of } \atoms(\nu)$ \\[0.5ex]
			\linenumber \IF $Q_X(\calO_X|\calI_X)$ is $\text{CQAP}_0$ \\[0.5ex]
			\linenumber \TAB \RETURN $\{\, \ftvo(\nu, (\calO_X|\calI_X)) \,\}$ \\[0.5ex]
			\linenumber \IF $X \in \calI$ \OR ($X \in \calO$ \AND $\calI \cap \vars(\nu) = \emptyset$) \\[0.5ex]
			\linenumber \TAB \RETURN $
				\Bigg\{
				\begin{array}{@{~~}c@{~~}}
					\tikz {
						\node at (0,-1)  (n4) {$X$};
						\node at (-0.6,-1.75)  (n1) {$\hat{\nu}_1$} edge[-] (n4);
						\node at (0,-1.75)  (n2) {$\ldots$};
						\node at (0.6,-1.75)  (n3) {$\hat{\nu}_k$} edge[-] (n4);
					}
				\end{array} \Bigg|\
				\hat{\nu}_i \in \vos(\nu_i, (\calO|\calI)), \, \forall i \in [k] \Bigg\}$ \\[0.5ex]
			\linenumber \LET $htrees = 
      \Bigg\{
				\begin{array}{@{~~}c@{~~}}
					\tikz {
						\node at (0,-1)  (n4) {$X$};
						\node at (-0.6,-1.75)  (n1) {$\hat{\nu}_1$} edge[-] (n4);
						\node at (0,-1.75)  (n2) {$\ldots$};
						\node at (0.6,-1.75)  (n3) {$\hat{\nu}_k$} edge[-] (n4);
					}
				\end{array} \Bigg|\ \hat{\nu}_i \in \vos(\nu_i^{key \shortarrow H}, (\calO|\calI)), \, \forall i \in [k] \Bigg\}$ \\[0.5ex]
			\linenumber \LET $ltree = \ftvo(\nu^{key \shortarrow L}, (\calO_X|\calI_X))$ \\[0.5ex]
			\linenumber \RETURN $htrees \cup \{\, ltree \,\}$\\[-1.5ex]
		\end{minipage} \\
		\bottomrule
	\end{tabular}\vspace{-0.1em}
	\caption{Construction of a set of VOs
		from a canonical VO $\omega$ of a hierarchical CQAP with access pattern $(\calO|\calI)$.
		Each constructed VO corresponds to an evaluation strategy of some part of the
		query result. The VO $\nu^{key \shortarrow s}$ for $s \in \{H,L\}$ has the structure of $\nu$ but the heavy-light signature of each atom is extended by $key \rightarrow s$. 
		The procedure $\ftvo$ is given in  Figure~\ref{fig:canonical-to-free-top}.
	}
	\label{fig:evaluation_strategies}
\end{figure}

We next describe  the construction of a set of VOs
from a canonical VO $\omega$ of a hierarchical CQAP $Q(\calO|\calI)$.
Without loss of generality, we assume that $\omega$ is a tree; in case $\omega$ is a forest, the reasoning below applies independently to each tree in the forest.
Figure~\ref{fig:evaluation_strategies} shows the construction procedure for a canonical VO $\omega$ and an access pattern $(\calO | \calI)$.
The construction proceeds recursively on the structure of $\omega$ and forms the query $Q_X(\calO_X|\calI_X)$ at each variable $X$ (Line 5).
The query $Q_X$ is the join of the atoms in $\omega_X$,
the set $\calO_X$ consists of the output variables in $\omega_X$, and
the set $\calI_X$ consists of the input variables in $\omega_X$ and all ancestor variables along the path from $X$ to the root of $\omega$.
The next step analyses the query $Q_X$.

If $Q_X$ is in $\text{CQAP}_0$, we turn $\omega_X$ into an access-top VO for $Q_X$ using 
the procedure $\ftvo$ in Figure~\ref{fig:canonical-to-free-top} (Lines 6-7).
Queries in $\text{CQAP}_0$ admit a canonical access-top VO. Hence, 
for such queries, this restructuring does not increase the static width
of $\omega_X$.

If $Q_X$ is not in $\text{CQAP}_0$, 
then $\omega_X$ contains a problematic variable, which is either a
bound variable that dominates a free variable or an output variable that dominates an input variable. 
If $X$ is {\em not} a problematic variable,
we recur on each subtree and combine the constructed VOs (Line 9). 
Otherwise, we form evaluation strategies that compute different parts of the result of $Q_X$ over its input relations partitioned on $\mathit{key}$, which is the set of variables on the path from $X$ to the root of the canonical VO for $Q$, including $X$.
We create two sets of VOs: $htrees$ and $ltree$.
For the former, 
for each subtree $\nu_i$ of $\omega_X$, we construct a VO $\nu_i^{key \shortarrow H}$ by extending the heavy-light signature of each atom in $\nu_i$ with $key \rightarrow H$,
and we recur on $\nu_i^{key \shortarrow H}$ (Line 10).
This ensures that the evaluation of $Q_X$ is over relation parts that are heavy on $key$.
The VO $ltree$ is obtained by extending the heavy-light signature of each atom in $\omega_X$ with $\{key \shortarrow L\}$ and turning $\omega_X^{key \shortarrow L}$ into an access-top VO (Line 11);
this restructuring of the VO may increase its static width.

We construct a view tree for each VO formed in the previous step. 
For each view tree, we create a strict partition of the input relations based on their heavy-light signature and compute the queries defining the views. We refer to this step as view tree materialisation.

We next discuss the complexity of view tree materialisation.
The view trees constructed for the evaluation of queries in $\text{CQAP}_0$ or over heavy relation parts follow canonical VOs, meaning that they can be materialised in linear time. 
The view trees constructed for the evaluation of queries over light relation parts follow access-top VOs. 
Using the degree constraints in the input relations, each such view tree can be materialised in $\bigO{N^{1+(\fw-1)\eps}}$ time, where $\fw$ is the static width of the query.
We give the intuition for this complexity.
Consider a view $V(\calS)$ in a view tree over light relation parts and the set 
$\calA$ of leaf atoms in the subtree rooted at $V(\calS)$.
Since for each hierarchical query, the integral edge cover number is the same as the fractional edge cover number 
(Lemma~\ref{lem:rho_rhostar}), $\fw$ must be a natural number.  
By definition of $\fw$, we can cover all variables in $\calS$ using at most $\fw$ atoms from $\calA$.
Let $\calA_1 \subseteq \calA$ be a set of at most $\fw$ atoms that cover the variables in $\calS$.
First, we compute the join of the atoms in $\calA_1$ as follows:
We choose one atom $R(\calX)$ in $\calA_1$, iterate over the tuples in $R$, and for each such tuple, we iterate over the matching tuples in the relations of the other at most $\fw-1$ atoms in $\calA_1$.
Since all leaf relations are light, there are at most $N^\eps$ matching tuples in each relation.
Thus, the join can be computed in $\bigO{N^{1+(\fw-1)\eps}}$ time. Let $T$
be the resulting relation and $\calY$ the set of all variables of the atoms in $\calA_1$.
We can rewrite the view $V(\calS)$ as:
$$V(\calS) = T(\calY), R_1(\calX_1), \ldots , R_n(\calX_n),$$ 
where $R_1(\calX_1), \ldots , R_n(\calX_n)$ are the atoms in $\calA \setminus \calA_1$. 
The above query is free-connex $\alpha$-acylic \cite{BaganDG07}, since its body is $\alpha$-acyclic and the free variables 
$\calS$ are included in the schema $\calY$ of one atom. 
Hence, using Yannakakis' algorithm, $V$ can be computed 
in time linear in the input plus the output size of the query~\cite{BeeriFMY83}. 
The input size is upper-bounded by the worst-case size of
$T(\calY)$, which is  $\bigO{N^{1+(\fw-1)\eps}}$. The output is a subset of       
$T$ projected onto $\calS$, hence, its size is also $\bigO{N^{1+(\fw-1)\eps}}$. 
Thus, $V(\calS)$ can be computed in $\bigO{N^{1+(\fw-1)\eps}}$ time.
Overall, the view tree materialisation takes $\bigO{N^{1+(\fw-1)\eps}}$ time, as stated in Theorem~\ref{thm:main_dynamic}.

\subsection{Enumeration}
\label{sec:trade-off-enumeration}
We next discuss how to enumerate output tuples from the view trees constructed for a CQAP with hierarchical fractures.
Our approach builds upon the enumeration procedure for hierarchical queries from prior work~\cite{Trade_Offs_LMCS23}.
For queries in $\text{CQAP}_0$, the preprocessing stage constructs view trees from access-top VOs. Such view trees admit constant enumeration delay, as discussed in Section~\ref{sec:enumeration}. 

For queries not in $\text{CQAP}_0$, the preprocessing stage constructs view trees from VOs that are not access-top. 
Enumerating distinct tuples from these view trees poses two challenges:
(1) for view trees built from VOs that are not access-top,
the enumeration approach from Section~\ref{sec:enumeration} would report  
the values of bound variables before the values of free variables or
the values of output variables before setting the values of input variables; and
(2) the tuples encoded in the constructed view trees may overlap, while we need to enumerate distinct tuples.
We rely on the union algorithm~\cite{Durand:CSL:11} to handle these challenges.

\begin{figure}[t]
\begin{center}
\renewcommand{\arraystretch}{1.2}
\setcounter{magicrownumbers}{0}
\begin{tabular}{l@{}}
	\toprule
	\textsc{Union}$(\text{view trees } T_1, \ldots, T_n):$ tuple \\
	\midrule
	\linenumber \IF ($n=1$) \RETURN $T_n.next(\,)$ \\
	\linenumber \IF ($t_{\hspace{0.06em}[n-1]} := $ \textsc{Union}($T_1,\ldots, T_{n-1}$)) $\neq$ \EOF \\
	\linenumber \TAB \IF ($T_n.\mathit{lookup}\hspace{0.06em}(t_{\hspace{0.06em}[n-1]}) = True$) \\
	\linenumber \TAB\TAB $t_n := T_n.next(\,)$ \\
	\linenumber \TAB\TAB \RETURN $t_n$ \\  
	\linenumber \TAB \RETURN $t_{\hspace{0.06em}[n-1]}$ \\
	\linenumber \IF $(t_n := T_n.next(\,)) \neq $ \EOF \\
	\linenumber \TAB \RETURN $t_n$ \\
	\linenumber \RETURN \EOF \\
	\bottomrule
\end{tabular}
\end{center}
\caption{Report the next tuple in a union of view trees.}
\label{fig:enumeration-union}
\end{figure}

The \textsc{Union} algorithm is given in Figure~\ref{fig:enumeration-union}. 
It takes as input $n$ view trees that represent possibly overlapping sets of tuples and returns a tuple that is distinct from all tuples returned before. 
Each view tree supports two operations: $\mathit{next}(\,)$ returns the next tuple in the view tree or $\tup{EOF}$ if the view tree is exhausted, and $\mathit{lookup}(t)$ checks whether the tuple $t$ is present in the view tree.

We first explain the algorithm on two view trees $T_1$ and $T_2$ that represent possibly overlapping sets of tuples.
Each call returns one tuple or $\tup{EOF}$. 
The algorithm returns the next tuple $t_1$ in $T_1$ only if it is  not present in $T_2$; otherwise, it returns the next tuple in $T_2$ (Lines~2-6). In case $T_1$ is exhausted, the algorithm returns the next tuple in $T_2$, or $\tup{EOF}$ in case $T_2$ is also exhausted.

In the case of more than two view trees ($n > 2$), we consider the union of the first $n-1$ view trees as one view tree and $T_n$ as another view tree. This reduces the general case to the previous case of two view trees.

The \textsc{Union} algorithm performs $\bigO{n}$ $\mathit{lookup}$ and $\mathit{next}$ operations over $n$ view trees before reporting a tuple. Thus, its runtime is $\bigO{n(delay + lookup)}$, where $delay$ is the time to retrieve the next tuple in a view tree and $lookup$ is the cost of a lookup into a view tree.

We use the \textsc{Union} algorithm to address the two aforementioned challenges in enumerating from the view trees constructed from VOs that are not access-top. 
For the first challenge, let $A$ be a variable that violates the free-dominance or input-dominance condition.
The constructed non-access-top view trees are over relation parts where $A$-values are heavy.
We instantiate a view tree for each $A$-value and use the union algorithm to report only the distinct tuples.
The number of instantiated view trees is upper-bounded by the number of heavy $A$-values, i.e., $n = \bigO{N^{1-\eps}}$.
Since $A$ is fixed in each instantiated view tree, $A$ is effectively an input variable and the view tree is as if constructed from an access-top VO,
and thus supports constant-delay enumeration using the enumeration approach from Section~\ref{sec:enumeration}.
The lookup operation can be performed using the enumeration procedure, where all free variables are considered as input variables and set to the values of the tuple to be looked up, which thus takes constant time.
Hence, the delay of the union algorithm is $\bigO{N^{1-\eps}}$.

For the second challenge, we use the union algorithm to report only distinct tuples from the set of view trees.
As explained in the first challenge, the view trees admit enumeration delay $\bigO{N^{1-\eps}}$ and thus $\bigO{N^{1-\eps}}$ lookup time.
The number of constructed view trees is constant.
Overall, the delay of the union algorithm is $\bigO{N^{1-\eps}}$.

\begin{exa}
We explain the enumeration procedure for the view trees from Figure~\ref{fig:preprocessing_CQAP_1}, constructed for the query $Q_1(D|A_1,C)=R(A_1,B,C), S(A_1,B,$ $D)$.
The view tree on the left, constructed over the light parts of $R$ and $S$, corresponds to an access-top VO. For a fixed $(A_1,C)$-value, enumerating the matching $D$-values from $V_D$ takes constant time per output value.
The view tree on the right, however, corresponds to a VO where $B$ violates the free-dominance and input-dominance conditions. 
This view tree comprises the heavy parts of $R$ and $S$ partitioned on $(A_1, B)$. The number of distinct $(A_1, B)$-values in each part is at most $N^{1-\epsilon}$, meaning that the size of the view $V_B(A_1, B)$ built on top of these parts is also at most $N^{1-\epsilon}$.
To resolve the issue with violating $B$, 
we find the $B$-values that are paired with the input $A_1$-value in $V_B$ and also paired with the input $(A_1,C)$-value in $R^{A_1B\veryshortarrow H}$, and instantiate for each such $B$-value a view tree.
The number of such view trees is at most $N^{1-\epsilon}$, and each view tree supports constant-time lookup and constant-delay enumeration of the $D$-values in $S^{A_1B\veryshortarrow H}$.
To report only distinct $D$-values, we employ the union algorithm over the iterators instantiated from the right tree and the iterator over the left tree. The cost of de-duplication using this algorithm is proportional to the number of instantiated view tree iterators; thus, the enumeration delay is $\bigO{N^{1-\epsilon}}$.
 \qed
\end{exa}

Appendix~D of the technical report provides a complete proof of the enumeration delay $\bigO{N^{1-\epsilon}}$ from Theorem~\ref{thm:main_dynamic}, along with our enumeration procedure, which uses the standard iterator interface with \textit{open} and \textit{next} methods~\cite{access_pattern_arxiv}.

\subsection{Updates}
\label{sec:trade-off-updates}
A single-tuple update to an input relation may cause changes in several view trees constructed for a given hierarchical CQAP. If the input relation is partitioned, we first identify which part of the relation is affected by the update. We then propagate the update in each view tree containing the affected relation part, as discussed in Section~\ref{sec:updates}.  

\begin{exa}
We consider the maintenance of the view trees from Figure~\ref{fig:preprocessing_CQAP_1} under a single-tuple update $\delta R(a,b,c)$ to $R$. The update affects the heavy part $R^{A_1B\veryshortarrow H}$ if $(a,b) \in \pi_{A_1,B}R^{A_1B\veryshortarrow H}$; otherwise, it affects the light part $R^{A_1B\veryshortarrow L}$.
For the former, we propagate the update from $R^{A_1B\veryshortarrow H}$ to the root. For each view on this path, we compute its delta query and update the view in constant time for fixed $(a,b,c)$. 
For the latter, we compute the delta $\delta V_B(a,b,c,D)$ $= \delta R^{A_1B\veryshortarrow L}(a,b,c), S^{A_1B\veryshortarrow L}(a,b,D)$ in $\bigO{N^\eps}$ time because there are at most $N^\eps$ $D$-values paired with $(a,b)$ in $S^{A_1B\veryshortarrow L}$.
We then update $V_D(a,c,D)$ with $\delta V_D(a,c,D) = \delta V_B(a,b,c,D)$ in $\bigO{N^\eps}$ time and update the views $V_C(A_1,C)$ and $V_{A_1}(A_1)$ in constant time. The case of single-tuple updates to $S$ is analogous.
Overall, maintaining the two view trees under a single-tuple update to any input relation takes $\bigO{N^\eps}$ time. 
\qed
\end{exa}

As the database evolves under updates, we periodically rebalance the relation partitions and views to account for a new database size and updated degrees of data values. The cost of rebalancing is amortised over a sequence of updates.
We give the intuition behind the amortised cost of rebalancing. The full proof is in Appendix~E of the technical 
report~\cite{access_pattern_arxiv}.

\paragraph{Major Rebalancing.}
We loosen the partition threshold to amortise the cost of rebalancing over multiple updates. Instead of the actual database size $N$, the threshold now depends on
a number $M$ for which the invariant $\floor{\frac{1}{4}M} \leq N \leq M$ always holds. If the database size falls below $\floor{\frac{1}{4}M}$ or reaches $M$, we perform major rebalancing, where we halve or respectively double $M$, followed by recreating a strict partition of the input relations with the new threshold $M^\eps$ and recomputing the views.

A major rebalancing requires $\bigO{N}$ time to repartition the relations and $\bigO{N^{1+(\fw-1)\eps}}$ time to recompute the view trees using the procedure from Section~\ref{sec:trade-off-preprocessing}.
This cost is amortised over $\Omega(M)$ updates. After a major rebalancing step, it holds that $N = \frac{1}{2}M$ (after doubling), or $N = \frac{1}{2}M - 1$ (after halving). To violate the size invariant $\floor{\frac{1}{4}M} \leq N \leq M$ and trigger another major rebalancing, the number of required updates is at least $\frac{1}{4}M$. 
The amortised time of major rebalancing is thus $\bigO{N^{(\fw-1)\eps}}$.
By Proposition~\ref{prop:width_delta_inequal}, we have $\dfw = \fw$ or $\dfw = \fw - 1$; hence, the amortised major rebalancing cost is $\bigO{N^{\dfw\eps}}$.

\paragraph*{Minor Rebalancing.}
After an update $\delta R = \{\tup{x} \rightarrow m\}$ to relation $R$, we check the degrees of the values in $\tup{x}$. 
Consider a partition key $key$ that is included in the schema of $\tup{x}$ and the projection $\tup{v}$ of $\tup{x}$ onto $key$.
If $\tup{v}$ is included in the light part of the partition of $R$ on $key$ but the degree of $\tup{v}$ is not below $\frac{3}{2}M^\eps$ in 
at least one input relations, all tuples containing  $\tup{v}$ are moved to the relation parts that are heavy on $\tup{v}$. 
Likewise, if $\tup{v}$ is in a relation part that is heavy on $key$ but the degree of $\tup{v}$ is below $\frac{1}{2}M^\eps$ in all input relations, all tuples containing  $\tup{v}$ are moved to the relation parts that are light on $\tup{v}$.

\nop{light part and heavy part conditions of each partition of $R$. Consider the light part $R^{\calS \veryshortarrow L}$ of $R$ partitioned on schema $\calS$. If the number of tuples in $R^{\calS \veryshortarrow L}$ that agree with $\tup{x}$ on $\calS$ exceeds $\frac{3}{2}M^\eps$, then we perform a minor rebalancing step that moves those tuples from $R^{\calS \veryshortarrow L}$ to $R^{\calS \veryshortarrow H}$.
Additionally, to maintain the light part condition, we also rebalance all other relations $K \in\calD \setminus \{R\}$ by moving the tuples that agree with $\tup{x}$ on $\calS$ from $K^{\calS\veryshortarrow L}$ to $K^{\calS\veryshortarrow H}$.
Similarly, we maintain the heavy part condition: If after the update every relation $K\in\calD$ has fewer than $\frac{1}{2}M^\eps$ tuples that agree with $\tup{x}$ on $\calS$, then we move those tuples from $K^{\calS \veryshortarrow H}$ to $K^{\calS \veryshortarrow L}$.
}

A minor rebalancing step requires $\bigO{N^{(\dfw + 1)\eps}}$ time:
It either moves $\bigO{\frac{3}{2}M^\eps}$ tuples that contain $\tup{v}$ to relations parts that are heavy on $\tup{v}$ (light to heavy) or $\bigO{\frac{1}{2}M^\eps}$ tuples that contain $\tup{v}$ to relation parts that are light on $\tup{v}$ (heavy to light).
Each move is by an insert and a delete operation, which takes $\bigO{N^{\dfw\eps}}$ time.
The total cost $\bigO{N^{(\dfw + 1)\eps}}$ of minor rebalancing is amortised over $\Omega(M^\eps)$ updates. This lower bound on the number of updates comes from the gap between the two thresholds in the heavy and light part conditions.
The amortised cost of minor rebalancing is $\bigO{N^{\dfw\eps}}$.

Overall, even though the cost of rebalancing steps take time more than $\bigO{N^{\dfw\eps}}$, they happen periodically, and their amortised cost remains the same as for a single-tuple update.

\nop{
}

\subsection{Comparison with Prior Approaches}
\label{sec:comparison}
We compare our adaptive maintenance strategy with the mainstream eager and lazy approaches in an IVM scenario where either all or a fraction of the output tuples are reported after a batch of updates.
We show in the following examples that our approach has at most the same overall time complexity as these mainstream approaches.

\begin{exa}\label{ex:compare-eager-lazy-running}

Let us consider the running example with the query from Example~\ref{ex:CQAP_1}:
$$Q_1(D\mid A_1, C) = R(A_1,B,C), S(A_1,B,D)$$
Assume the relations have size $O(N)$. The query result has size $\bigO{N^2}$ for all pairs of input $(A_1, C)$-values and $\bigO{N}$ for one such pair.

We can recover the complexities for typical eager and lazy maintenance approaches using our approach by setting $\epsilon=1$ and respectively $\epsilon=0$ (except for the complexity of the preprocessing in the lazy approach):

\vspace{9pt}
\begin{center}	
	\begin{tabular}{l| l l l }
	Approach  & Preprocessing & Update     & Delay \\\toprule 
		Eager & $\bigO{N^2}$  & $\bigO{N}$ & $\bigO{1}$\\
		Lazy  & $\bigO{1}$    & $\bigO{1}$ & $\bigO{N}$\\
		Ours  & $\bigO{N^{1+\epsilon}}$    & $\bigO{N^{\epsilon}}$ & $\bigO{N^{1-\epsilon}}$
	\end{tabular} 
\end{center}
\vspace{9pt}

\begin{figure}
\centering
\begin{minipage}{0.28\textwidth}
\begin{tikzpicture}[xscale=0.9, yscale=0.7]

\tikzset{crosses/.style={
	decoration={markings, mark=between positions 0 and 1 step 1.8mm with {
		\draw[thick] (-0.05, -0.05) -- (0.05, 0.05) (-0.05, 0.05) -- (0.05, -0.05);
	}},
	postaction={decorate}
}}

\draw[->] (0,0) -- (3.5,0) node[right] {$m$};
\draw[->] (0,0) -- (0,4.5) node[above] {$\log_N$cost};

\foreach \x in {0,1,2,3}
	\draw (\x,0.05) -- (\x,-0.05) node[below] {\x};

\draw (0.05,0.5) -- (-0.05,0.5) node[left] {\small $0.5$};
\draw (0.05,1) -- (-0.05,1) node[left] {\small $1$};
\draw (0.05,2) -- (-0.05,2) node[left] {\small $2$};
\draw (0.05,3) -- (-0.05,3) node[left] {\small $3$};
\draw (0.05,4) -- (-0.05,4) node[left] {\small $4$};


\draw[dashed, gray] (1,1) -- (1,0);
\foreach \x/\y in {1/2, 2/3, 3/4, 2/2} {
	\draw[dashed, gray] (\x,\y) -- (\x,0);
	\draw[dashed, gray] (\x,\y) -- (0,\y);
}
\draw[thick] (0,1) -- (3.5,4.5) ;
\draw[very thick, dashed, dash pattern=on 2mm off 1mm] (0,1) -- (1,1) -- (3.5,3.5) ;
\draw[draw=none, crosses] (0,0.5) -- (1,0.92) -- (3.5,3.42) ;
\end{tikzpicture}
\vspace*{-0.5cm}
{\small $$k=0$$}
\end{minipage}
\begin{minipage}{0.28\textwidth}
\begin{tikzpicture}[xscale=0.9, yscale=0.7]

\tikzset{crosses/.style={
	decoration={markings, mark=between positions 0 and 1 step 1.8mm with {
		\draw[thick] (-0.05, -0.05) -- (0.05, 0.05) (-0.05, 0.05) -- (0.05, -0.05);
	}},
	postaction={decorate}
}}

\draw[->] (0,0) -- (3.5,0) node[right] {$m$};
\draw[->] (0,0) -- (0,4.5) node[above] {$\log_N$cost};

\foreach \x in {0,1,2,3}
	\draw (\x,0.05) -- (\x,-0.05) node[below] {\x};

\draw (0.05,1) -- (-0.05,1) node[left] {\small $1$};
\draw (0.05,1.5) -- (-0.05,1.5) node[left] {\small $1.5$};
\draw (0.05,2) -- (-0.05,2) node[left] {\small $2$};
\draw (0.05,3) -- (-0.05,3) node[left] {\small $3$};
\draw (0.05,4) -- (-0.05,4) node[left] {\small $4$};


\draw[dashed, gray] (0,1.5) -- (1,1.5);
\draw[dashed, gray] (1,0) -- (1,2);
\foreach \x/\y in {2/3, 3/4} {
	\draw[dashed, gray] (\x,\y) -- (\x,0);
	\draw[dashed, gray] (\x,\y) -- (0,\y);
}
\draw[thick] (0,1) -- (3.5,4.52) ;
\draw[very thick, dashed, dash pattern=on 2mm off 1mm] (0,2) -- (2,2) -- (3.5,3.5) ;
\draw[draw=none, crosses] (0,1) -- (2,1.92) -- (3.5,3.42) ;
\end{tikzpicture}
\vspace*{-0.5cm}
{\small $$k=1$$}
\end{minipage}
\begin{minipage}{0.28\textwidth}
\begin{tikzpicture}[xscale=0.9, yscale=0.7]

\tikzset{crosses/.style={
	decoration={markings, mark=between positions 0 and 1 step 1.8mm with {
		\draw[thick] (-0.05, -0.05) -- (0.05, 0.05) (-0.05, 0.05) -- (0.05, -0.05);
	}},
	postaction={decorate}
}}

\draw[->] (0,0) -- (3.5,0) node[right] {$m$};
\draw[->] (0,0) -- (0,4.5) node[above] {$\log_N$cost};

\foreach \x in {0,1,2,3}
	\draw (\x,0.05) -- (\x,-0.05) node[below] {\x};

\draw (0.05,1) -- (-0.05,1) node[left] {\small $1$};
\draw (0.05,1.5) -- (-0.05,1.5) node[left] {\small $1.5$};
\draw (0.05,2) -- (-0.05,2) node[left] {\small $2$};
\draw (0.05,2.5) -- (-0.05,2.5) node[left] {\small $2.5$};
\draw (0.05,3) -- (-0.05,3) node[left] {\small $3$};
\draw (0.05,4) -- (-0.05,4) node[left] {\small $4$};

\draw[dashed, gray] (0,2.5) -- (2,2.5);
\draw[dashed, gray] (1,0) -- (1,2);
\draw[dashed, gray] (2,0) -- (2,3);
\foreach \x/\y in {3/4} {
	\draw[dashed, gray] (\x,\y) -- (\x,0);
	\draw[dashed, gray] (\x,\y) -- (0,\y);
}
\draw[thick] (0,2) -- (1,2) -- (3.5,4.5) ;
\draw[very thick, dashed, dash pattern=on 2mm off 1mm] (0,3) -- (3,3) -- (3.5,3.5) ;
\draw[draw=none, crosses] (0,1.5) -- (3,2.92) -- (3.5,3.42) ;
\end{tikzpicture}
\vspace*{-0.5cm}
{\small $$k=2$$}
\end{minipage}
\hfill
\begin{minipage}{0.12\textwidth}
\begin{tikzpicture}
\tikzset{crosses/.style={
	decoration={markings, mark=between positions 0 and 1 step 1.8mm with {
		\draw[thick] (-0.05, -0.05) -- (0.05, 0.05) (-0.05, 0.05) -- (0.05, -0.05);
	}},
	postaction={decorate}
}}

\draw[thick] (0,0) -- (0.75,0) node[right] {Eager};
\draw[very thick, dashed, dash pattern=on 2mm off 1mm] (0,-0.5) -- (0.75,-0.5) node[right] {Lazy};
\draw[draw=none, crosses] (0,-1) -- (0.75,-1) node[right] {Ours};
\end{tikzpicture}
\end{minipage}
\vspace{6pt}
\caption{
Plotting the exponents in the complexities of three maintenance approaches (ours, eager, and lazy) as piecewise linear functions in the parameters $m$ and $k$, for processing a batch of $\bigO{N^m}$ single-tuple updates followed by the enumeration of $\bigO{N^k}$ output tuples. Our approach is asymptotically faster or the same as the best of the eager and lazy approaches.
}
\label{fig:tradeoff-example}
\end{figure}

The eager approach precomputes the initial output in $\bigO{N^2}$ time. On a single-tuple update, it eagerly computes the delta query obtained by fixing the variables of one relation to constants; this delta query can be done in linear time. It can then enumerate the $D$-values for any input pair of $\{A_1,C\}$-values with constant delay.

The lazy approach has no precomputation and only updates each relation, without propagating the changes to the query output. For the first enumeration request for a pair $(a_1,c)$ of input values, it needs to calibrate the relations in the residual query $Q_1(D) = R(a_1, B,c), S(a_1,B,D)$. This takes linear time. After that, it can enumerate the $D$-values for that input pair with constant delay. For another pair of input values, it needs again to recalibrate the relations, which also takes linear time. Its delay is linear in worst-case.

Consider now an IVM scenario where, after every $\bigO{N^m}$ single-tuple updates, we request the enumeration of $\bigO{N^k}$ output tuples (in the lexicographic order $A_1,C,D$, for some pairs of input values), for $m \geq 0$ and $0\leq k\leq 2$.
After the initial preprocessing, our approach then takes $\bigO{N^{m+\epsilon}+N^{k+1-\epsilon}} = \bigO{N^{\max\{m+\epsilon,k+1-\epsilon\}}}$ overall time to accommodate the batch of updates followed by the enumeration requests. In contrast, the eager and lazy approaches need time $\bigO{N^{m+1}+N^{k}} = \bigO{N^{\max\{m+1,k\}}}$ and $\bigO{N^{m}+N^{k+1}}=\bigO{N^{\max\{m,k+1\}}}$ respectively. 
For any value for $m\geq 0$, the time complexity of our approach is at most that of the other approaches and it can be asymptotically better. Figure~\ref{fig:tradeoff-example} shows the complexity of the three approaches for different values of $m$ and $k$.

For $k=0$, our approach has the time complexity $\bigO{N^{\frac{m+1}{2}}}$ for $m \leq 1$ and $\epsilon=\frac{1-m}{2}$, and the complexity $\bigO{N^{m}}$ for $m>1$ and $\epsilon=0$.
In contrast, the eager and lazy approaches take time $\bigO{N^{m+1}}$ and $\bigO{N^{\max\{m,1\}}}$ respectively.

For $k=1$, our approach has the time complexity $\bigO{N^{1+\frac{m}{2}}}$ for $m \leq 2$ and $\eps = \frac{2-m}{2}$, and the complexity $\bigO{N^{m}}$ time for $m>2$ and $\eps = 0$.
In contrast, the eager and lazy approaches take time $\bigO{N^{m+1}}$ and $\bigO{N^{\max\{m,2\}}}$ respectively.

For $k=2$, our approach has the time complexity $\bigO{N^{\frac{m+3}{2}}}$ for $m \leq 3$ and $\eps = \frac{3 - m}{2}$, and the complexity $\bigO{N^{m}}$ for $m>3$ and $\eps = 0$. 
In contrast, the eager and lazy approaches take time $\bigO{N^{\max\{m+1, 2\}}}$ and $\bigO{N^{\max\{m,3\}}}$ respectively.

The 4-cycle query from Example~\ref{ex:illustrate_cqap_0}:
$$Q_2(A,C \mid B,D) = R(A,B), S(B,C), T(C,D), U(A,D)$$
exhibits the same trade-offs and complexities as the above acyclic query $Q_1$ for all three approaches: ours, eager, and lazy. Therefore, the analysis and conclusion are the same.
\qed
\end{exa}
\section{Semantics for Updates in Probabilistic Databases}
\label{sec:probabilistic-semantics}

In this section and the following section, we extend our dynamic evaluation approach to probabilistic databases. Here, we discuss possible semantics of updates in probabilistic databases: Given a single-tuple insertion or a deletion, how to update the probabilistic database to incorporate this update? In the next section, we show how to maintain the result of any query in CQAP$_0$ over probabilistic databases under updates.

We first recall the notion of tuple-independent probabilistic databases and the semantics of query evaluation over such databases. We then contrast several update semantics.

\subsection{Tractable Query Evaluation over Probabilistic Databases}
\label{sec:pdb-intro}

A probabilistic database is a relational database in which the tuples are pairwise independent probabilistic events~\cite{Suciu:PDB:11}. We interpret a tuple $t$ as being in the database with probability $p(t)$ and out of the database with probability $1-p(t)$. 
Since each tuple can be in or out of the database, a probabilistic database of $n$ such tuples represents $2^n$ possible worlds, one world for each relational database representing a subset of the set of tuples in the database. 
Let $S$ be the set of tuples in a probabilistic database $\mathcal{W}$ and $W\in\mathcal{W}$ be one of its possible worlds, i.e., $W\subseteq S$. The probability $P(W)$ of $W$ is the product of (1) the probability of each tuple in $W$, and (2) one minus the probability of each tuple in $S$ and not in $W$:
\[
    P(W) = \prod_{t\in W} p(t) \cdot \prod_{t\in S\setminus W} (1-p(t))
\]

Given a Boolean conjunctive query $Q$ and a probabilistic database $\mathcal{W}$, the semantics of $Q$ is to compute $Q$ in each possible world $W$ of $\mathcal{W}$ and sum up the  probabilities of those possible worlds where its answer is true:
\[
    P(Q) = \sum_{W\in\mathcal{W}: Q(W) = \text{true}} P(W) 
\]
For a non-Boolean conjunctive query $Q(\calF)$ with free variables $\calF$ and any tuple of values $\inst{f}\in\Dom(\calF)$ in the active domain of the tuple of free variables $\calF$, we define the residual Boolean query $Q_{\inst{f}}$ where we set the variables in $\calF$ to their respective values in $\inst{f}$. Then, the probability for $\inst{f}$ to be in the output of $Q$ is $P(Q_{\inst{f}})$.

For a CQAP $Q(\calO | \calI)$ and a given tuple $\inst{in}\in\Dom(\calI)$ of values for the input variables, the query $Q(\calO | \inst{in})$ is a (possibly non-Boolean) conjunctive query. Therefore, the probability for a tuple $\inst{out}\in\Dom(\calO)$ in the active domain of the output variables $\calO$ is given by $P(Q_{\inst{out}\circ\inst{in}})$, where $Q_{\inst{out}\circ\inst{in}}$ is the residual Boolean query obtained by setting the free output and input variables to their respective values in the tuple of values $\inst{out}\circ\inst{in}$.

\nop{
Similarly, given a CQAP $Q$ and a probabilistic database $\mathcal{W}$, the semantics of $Q$ is to compute $Q$ in each possible world $W$ of $\mathcal{W}$ and sum up the  probabilities of those possible worlds where its answer is true.
}

The query semantics does not lead to a practical query evaluation, as it requires to iterate over all possible worlds. Instead, state-of-the-art query evaluation techniques (1) exploit the query structure to compute directly on the probabilistic database, without the need to iterate over possible worlds, or (2) derive the so-called query lineage, which is a Boolean function tracing the possible derivations of the query answer from the input tuples, and then use knowledge compilation techniques to compile the lineage into a tractable form that allows efficient probability computation~\cite{Suciu:PDB:11}.

A remarkable result is the following computational dichotomy~\cite{Dalvi:VLDB:04}: Let $Q$ be a Boolean conjunctive query without repeating relation symbols and $\mathcal{W}$ any probabilistic database. If $Q$ is hierarchical, then its data complexity is polynomial time. If $Q$ is non-hierarchical, then its data complexity is hard for \#P. An immediate generalization holds for non-Boolean conjunctive queries, by checking whether their residual Boolean queries (obtained by fixing the free variables to constants) are hierarchical~\cite{OlteanuHK09}.

An implication of this dichotomy is that CQAPs with hierarchical fracture can be computed in polynomial time data complexity over probabilistic databases. A natural question is whether they can be also maintained with constant update time and constant enumeration delay. As shown in Section~\ref{sec:maintenance-CQAP0}, to achieve these maintenance desiderata, we need the three properties from Definition~\ref{def:CQAP_zero}: the query fracture is hierarchical, free-dominant, and input-dominant. Tractability in the static setting only requires the hierarchical property.

\subsection{Probabilistic Update Semantics} 
\label{sec:pdb-update-prob-semantics}

Prior work~\cite{DBLP:conf/pods/BerkholzM21} considers an update semantics for probabilistic databases that is deterministic in case of deletions and probabilistic in case of insertions. Given an insertion of a tuple $t$ with probability $p(t)$ in a probabilistic relation $R$, the tuple is inserted into $R$ as an independent event $t$ with probability $p(t)$. Given a deletion of a tuple $t$ from a probabilistic relation $R$, the tuple is removed from $R$ if it exists in $R$, regardless of its probability; if $t$ does not exist in $R$, no action is taken.

A natural interpretation of single-tuple updates, which agrees with the possible worlds semantics of probabilistic databases, is that of independent probabilistic events: Given an insertion (deletion) of a tuple $t$ with probability $p$, we insert $t$ in (delete $t$ from) the database with probability $p$ and ignore the update with probability $1-p$. 

\begin{exa}\label{ex:prob-set-semantics}
    Consider a probabilistic database consisting of a tuple $t$ with probability $1/2$. We consider two scenarios: We either insert or delete $t$ with probability $1/4$.
    
    {\em The insertion case.} We have four possible worlds, depending on whether each of the two events holds: (1) $t$ is in the database and we ignore the insert; this world consists of the tuple $t$ and has the probability $1/2\cdot (1-1/4) = 3/8$; (2) $t$ is in the database and the insertion is triggered; this world consists of $t$ and has probability $1/2\cdot 1/4 = 1/8$; (3) $t$ is not in the database and the insertion is ignored; this world is empty and has probability $(1-1/2)\cdot(1 - 1/4) = 3/8$; (4) $t$ is not in the database and the insertion is triggered; this world consists of $t$ and has probability $(1-1/2)\cdot 1/4 = 1/8$. As expected, the sum of the probabilities of all four worlds is 1. The third world is empty, all other worlds consist of $t$. The probability of $t$ is the sum of the probabilities of all worlds except the third world, or equivalently 1 minus the probability of the third world: $1 - 3/8 = 5/8$. 
    
    {\em The deletion case.} We again have four possible worlds, depending on whether each of the two events holds: (1) $t$ is in the database and we ignore the delete; this world consists of the tuple $t$ and has the probability $1/2\cdot (1-1/4) = 3/8$; (2)  $t$ is in the database and the deletion is triggered; this world is empty and has probability $1/2\cdot 1/4 = 1/8$; (3) $t$ is not in the database and the deletion is ignored; this world is empty and has probability $(1-1/2)\cdot(1 - 1/4) = 3/8$; (4) $t$ is not in the database and the deletion is triggered, albeit with no effect; this world has probability $(1-1/2)\cdot 1/4 = 1/8$. As expected, the sum of the probabilities of all four worlds is 1. Out of them, only the first world has $t$, so the probability that $t$ is in the database after the update is the probability of this world, which is $3/8$.
    \qed
\end{exa}

We can generalise Example~\ref{ex:prob-set-semantics}. Given a single-tuple update $t\mapsto p$, we update the probabilistic database as follows. If the update is an insertion and $t$ is already in the database with probability $p'$, then the updated database contains $t$ with probability $p + p' - p\cdot p' = 1 - (1 - p)(1-p')$; this corresponds to the probability of those worlds where at least one of the two holds: (i) the tuple is inserted and (ii) the tuple is in the database. If the database has no event $t\mapsto p'$ before the update, or equivalently $p'=0$, then after the insertion the database contains $t$ with probability $p$.  If the update is a deletion and $t$ is already in the database with probability $p'$, then the updated database contains $t$ with probability $p'\cdot (1-p)$; this corresponds to the probability of the world where $t$ is in the database and the deletion is not triggered. If the database has no event $t\mapsto p'$ before the update, or equivalently $p'=0$, then before and after the deletion the database does not contain $t$ (and the deletion has no effect). The above behaviour holds regardless of other possible tuples in the database, since they are independent of both $t$ and the update.

We call this semantics the {\em probabilistic set semantics}: It interprets each update as an independent probabilistic event and uses set semantics (no duplicates) within each world. The query maintenance mechanism put forward in Section~\ref{sec:probabilistic} can propagate updates from the input relations up the view trees constructed for any query in CQAP$_0$ using this probabilistic set semantics for updates in constant time, while allowing for constant-delay enumeration of the query result after each update. 

A shortcoming of the probabilistic set semantics, as already apparent in Example~\ref{ex:prob-set-semantics}, is that the order of updates matters: Given two updates, one deleting $t$ and one inserting $t$, then the two possible orders of updates yield different databases. Furthermore, the semantics ignores the multiplicity of a tuple in a possible world, so this semantics does not generalise the relational case discussed in the previous sections, where we maintain tuple multiplicities to ensure correct maintenance and accommodate out-of-order updates, to the probabilistic setting. In particular, this means that a possible world, where we trigger several insertions of the same tuple $t$ followed by one deletion of $t$, is empty. Also, if we were to first delete and then insert, then the deletion is lost and therefore has no effect.

The probabilistic set semantics can be generalised to avoid the two aforementioned pitfalls: Instead of maintaining the probability of a tuple being in (and missing from) the database, we maintain the discrete probability distribution over its possible multiplicities: $\{(i,p_i) \mid i\in\mathbb{Z}\}$, where $p_i$ is the probability that the tuple has multiplicity $i$, $p_i\neq 0$ for finitely many $i$ values, and  $\sum_i p_i = 1$.
Like in the relational case in the previous sections, the multiplicity is an integer and  captures the number of insertions and deletions of a tuple in the database; for derived tuples in views defined over the database, it captures the number of derivations from the input tuples. This generalisation is the {\em probabilistic bag semantics}.

\begin{exa}\label{ex:prob-bag-semantics}
    Consider now a probabilistic database that contains a tuple $t$ whose probability distribution over its multiplicities is: $\{ (2,p_2),(1,p_1),(0,p_0),(-1,p_{-1})\}$. That is, tuple $t$ has multiplicity $i$ with probability $p_i$, for $-1\leq i\leq 2$. We again consider two scenarios: We either insert or delete $t$ with probability $p$.
    
    {\em The insertion case. }
    The new probability distribution over the multiplicities of $t$ becomes: 
    $\{(3,p_2\cdot p), (2, p_2\cdot(1-p)+p_1\cdot p), (1, p_1\cdot(1-p)+p_0\cdot p), (0,p_0\cdot(1-p)+p_{-1}\cdot p), (-1, p_{-1}\cdot(1-p))\}$. 
    The tuple has multiplicity $i$ after the insertion if either (1) it had multiplicity $i$ before the insertion and the insertion is not triggered, or (2) it had multiplicity $i-1$ before the insertion and the insertion is triggered. In the first case, the probability is the product of the probability $p_i$ that the tuple is in the database with multiplicity $i$ and of the probability $1-p$ that the insertion is not triggered. The product here is correct since the two events are independent, and they must both occur. Similarly, in the second case, the two cases are mutually exclusive events, so their joint probability is the sum of their probabilities.  

    {\em The deletion case. }
    If we delete $t$ with probability $p$, then the new probability distribution over the multiplicities of $t$ becomes: $\{(2,p_2\cdot(1-p)), (1, p_2\cdot p + p_1\cdot(1-p)), (0,p_1\cdot p + p_0\cdot(1-p)), (-1, p_0\cdot p + p_{-1}\cdot (1-p)), (-2,p_{-1}\cdot p)\}$. The reasoning is similar to that of insertion.
    \qed
\end{exa}

We can generalise Example~\ref{ex:prob-bag-semantics} to formally define probability distributions over multiplicities and the operations on them, as detailed in Appendix~G of the technical report~\cite{access_pattern_arxiv}. 
A drawback of the probabilistic bag semantics is that the probability distribution associated with each tuple in an input relation can grow linearly with the number of updates; for tuples in views, their probability distributions can grow polynomially (in data complexity) with the number of updates. As a result, both the update and the enumeration steps are expensive. 

To alleviate the computational complexity brought by the probabilistic bag semantics for updates, we can maintain the expectation and variance of the probability distributions over tuple multiplicities, instead of storing and maintaining the full distributions. We then associate each tuple with a pair of the expected value and the variance of its multiplicity. We refer to this refinement as the {\em expectation-variance update semantics}.

Under the expectation-variance update semantics, 
an insertion of a tuple $t$ with probability $p$ is a random event, where the multiplicity of $t$ is a random variable $X$ with the probability distribution $\{ (0, 1-p), (1,p) \}$.
By definition, the expectation of $X$ is given by
$\text{E}[X] = 0\cdot(1-p) + 1\cdot p=p$ and 
the variance of $X$ is given by
$\text{Var}[X] = \text{E}[X^2] - \text{E}[X]^2 = 1^2 \cdot p - p^2 = p(1-p)$. 
Similarly, for a deletion of a tuple $t$ with probability $p$, the multiplicity of $t$ is a random variable $Y$ with the probability distribution $\{ (0, 1-p), (-1,p) \}$. Then, by definition, $\text{E}[Y]=-1 \cdot p = -p$ and $\text{Var}[Y] = \text{E}[Y^2] - \text{E}[Y]^2 = (-1)^2 \cdot p - (-p)^2 = p(1-p)$.

To compute the expectation and variance of the multiplicity of tuple in a view, we exploit properties of the sum and product of two independent random variables $X$ and $Y$: 
\begin{align*}
    \text{E}[X + Y] &= \text{E}[X] + \text{E}[Y] \\
    \text{Var}[X + Y] &= \text{Var}[X] + \text{Var}[Y] \\
    \text{E}[X Y] &= \text{E}[X] \, \text{E}[Y] \\
    \text{Var}[X Y] &= \text{Var}[X]\,\text{Var}[Y] + \text{Var}[X]\,\text{E}[Y]^2 + \text{Var}[Y]\,\text{E}[X]^2
\end{align*}

Inserting (deleting) a tuple with some probability increases (decreases) the expected value of the tuple's multiplicity. Note that the expected multiplicity can be negative.

\begin{exa}\label{ex:prob-expectation-semantics}
    Consider a probabilistic database under the expectation-variance update semantics, where each tuple is paired with the expected value and variance of its multiplicity. 
    Since updates are independent probabilistic events, the expected value and variance after an update can be computed using the above properties of the sum of two independent random variables.
    When a tuple is updated with probability $p$, its expected multiplicity increases by $p$ if the update is an insertion and decreases by $p$ if the update is a deletion, while the variance of its multiplicity increases by $p(1-p)$ in both cases.
    \qed
\end{exa}

We can define two binary operations, $\oplus$ and $\odot$, to compute the expected value and variance of the sum and respectively product of two independent random variables, given the (expected value, variance) pairs of the two variables.

\begin{defi}\label{def:expprob}
Define binary operations $\oplus: \mathbb{R}^2 \times \mathbb{R}^2 \to \mathbb{R}^2$ and $\odot: \mathbb{R}^2 \times \mathbb{R}^2 \to \mathbb{R}^2$ as:
\begin{align*}
    (a,b) \oplus (c,d) &= (a+c, b+d) \\
    (a,b) \odot (c,d) &= (ac, bd + a^2d + bc^2)    
\end{align*}
\end{defi}
Both operations execute in constant time. 

\begin{prop}\label{prop:expprobmonoid}
The structures $(\mathbb{R}^2,\oplus,(0,0))$ and $(\mathbb{R}^2,\odot,(1,0))$ are commutative monoids.
\end{prop}

The query maintenance mechanism in Section~\ref{sec:probabilistic} can propagate updates from the input relations to the result of any query in CQAP$_0$ using the $\oplus$ and $\odot$ operations from Definition~\ref{def:expprob} under the probabilistic expectation-variance semantics for updates.


\section{Dynamic Evaluation for CQAP$_0$ over Probabilistic Databases}
\label{sec:probabilistic}
\label{sec:maintenance-CQAP0}

    We here show that any query in CQAP$_0$ without repeating relation symbols can be maintained over a probabilistic database with constant update time and enumeration delay under both the set semantics and the expectation-variance semantics for updates in probabilistic databases. We conclude with a discussion on the maintenance under the probabilistic bag semantics.

Our main insight is that the exact same maintenance approach used for queries in CQAP$_0$ over relational databases is also applicable for such queries over probabilistic databases. There are two main reasons for this. (1) The probability of any Boolean hierarchical conjunctive query without repeating relation symbols can be computed using two operators, independent-project and independent-join~\cite{Suciu:PDB:11}. 
Independent-project computes the disjunction of independent events,
as discussed in Section~\ref{sec:pdb-update-prob-semantics} for the various update semantics considered.
Independent-join computes the conjunction of independent events, which is possible if we only join distinct probabilistic relations (no self-joins). 
(2) For queries in CQAP$_0$ over probabilistic databases, we can construct trees of views that are tuple-independent relations and constructed only using projections, which employ the independent-project operator, and one-to-one joins, which employ the independent-join operator.

\begin{exa}
    Recall the query $Q_1(B,C,D|A_1) = R(A_1,B,C), S(A_1,B,D)$ in CQAP$_0$, whose hypergraph is depicted in Figure~\ref{fig:general_hypergraphs} (middle), its canonical access-top variable order is in Figure~\ref{fig:general_CPAP_0} (left), and its view tree is in Figure~\ref{fig:general_CPAP_0} (middle). The probabilistic relations $R$ and $S$ consist of pairwise independent tuples. The view $V'_C(A_1,B)$ is created by projecting away $C$ from the relation $R$. The projection may create duplicates, i.e., tuples $(a_1,b)\mapsto p_i$ with the same pair of values $(a_1,b)$ for the variables $(A_1,B)$ in $V'_C$ and $i\in[n]$. Since these tuples are pairwise independent, we can replace them by one tuple $t\mapsto 1 - \prod_{i\in[n]}(1-p_i)$ under the probabilistic set semantics. The tuples in $V'_C$ remain pairwise independent, even after removing the duplicates as explained above. The same treatment applies to the view $V'_D(A_1,B)$, which is created by projecting away $D$ from the relation $S$. The view $V_B(A_1,B)$ is the intersection of the views $V'_C$ and $V'_D$. Each resulting tuple appears in both views. Its probability is the product of its probabilities in the two views.
    Since the tuples in $V_B$ result from distinct tuples in the child views, they are pairwise independent. Finally, the view $V_{A_1}(A_1)$ is created by projecting away $B$ from $V_B$. The duplicates are merged into a single tuple whose probability is the probability of the disjunction of its duplicates. Again, the distinct tuples in $V_B$ are pairwise independent, since they originate from disjoint sets of tuples in the child view $V_B$.
\qed
\end{exa}

The following statement captures the property that the view trees for queries in CQAP$_0$ have independent tuples.

\begin{prop}\label{prop:pdb-cqap0-independence}
    Given a query $Q$ in CQAP$_0$  without repeating relation symbols, a canonical access-top variable order $\omega$ for $Q$, and the view tree $\tau(\omega)$ for $Q$ over a probabilistic database $D$. Then the tuples in each view of $\tau(\omega)$ are pairwise independent.
\end{prop}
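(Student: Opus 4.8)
The plan is to prove the statement by structural induction on the view tree $\tau(\omega)$ built by the procedure $\vt$ of Figure~\ref{fig:general_view_tree_construction}, while maintaining a single combinatorial invariant that is strictly stronger than pairwise independence. For a view $V$ with schema $\calS_V$ and a tuple $t\in V$, let $\mathrm{lin}(t)$ denote the set of base tuples (in the relations at the leaves below $V$) that enter the computation of the probability of $t$. The invariant I would carry is: \emph{every} $u\in\mathrm{lin}(t)$ agrees with $t$ on $\calS_V$, that is $u[\calS_V]=t$. This immediately forces $\mathrm{lin}(t_1)\cap\mathrm{lin}(t_2)=\emptyset$ for distinct $t_1,t_2\in V$, since a common base tuple would have to restrict on $\calS_V$ to both $t_1$ and $t_2$. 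As the base tuples of a tuple-independent database are mutually independent events and the presence of each $t$ is a Boolean function of the events in $\mathrm{lin}(t)$, disjoint lineages yield independent events; pairwise independence of the tuples of $V$ is then the desired conclusion.

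Before the induction I would record the two structural facts that make CQAP$_0$ special. First, because $Q$ has no repeating relation symbols and $\omega$ is hierarchical, each relation occurs at exactly one leaf and hence in exactly one child subtree of any variable $X$; therefore the child views joined at $V_X$ draw their lineages from pairwise disjoint sets of base relations, which is what legitimises the independent-join (a plain product of probabilities) and the independent nature of the events it combines. Second, since $\omega$ is canonical, $\dep_\omega(X)=\anc_\omega(X)$, so the schema $\calS=\{X\}\cup\dep_\omega(X)$ of $V_X$ is contained in the schema of every child root and in the schema $\calY$ of every atom $R(\calY)$ below $X$; all child roots therefore share the full key $\calS$.

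The base case is immediate: a leaf is a base relation $R$, whose tuples are pairwise independent by the tuple-independence assumption, and $\mathrm{lin}(t)=\{t\}$ trivially satisfies the invariant. For the inductive step I would treat the two operations of $\vt$ separately. At an inner view $V_X(\calS)=\pi_{\calS}\bigl(\bowtie_i \mathrm{root}(T_i)\bigr)$, a tuple $t$ is assembled from matching child-root tuples $t_i$; by the induction hypothesis $u[\calS_{T_i}]=t_i$ for $u\in\mathrm{lin}(t_i)$, and since $\calS\subseteq\calS_{T_i}$ and all $t_i$ agree with $t$ on $\calS$, we obtain $u[\calS]=t$, so the invariant is preserved (the projection onto $\calS$ only widens the merge and cannot break agreement on $\calS$); the product computed here is a genuine probability of a conjunction precisely because the child lineages live over disjoint relations. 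At an auxiliary view $V'_X(\calS\setminus\{X\})$, each output tuple $t'$ merges, via the operator $\odot$, all $V_X$-tuples $t$ with $t[\calS\setminus\{X\}]=t'$; every $u\in\mathrm{lin}(t')$ inherits $u[\calS]=t$ and hence $u[\calS\setminus\{X\}]=t'$, preserving the invariant, while the disjointness of the lineages of the merged tuples $t$ (itself a consequence of the invariant at $V_X$) is exactly the independence needed for $\odot$ to compute a disjunction of independent events. This closes the induction and yields pairwise independence at every view.

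The main obstacle I anticipate is not any single calculation but the precise formulation and bookkeeping of the lineage invariant so that it survives the combined join-and-project performed at each $V_X$: one must argue agreement on the full key $\calS$ rather than merely on the overlap of two child schemas, which is where the canonical-VO identity $\dep_\omega(X)=\anc_\omega(X)$ (giving $\calS\subseteq\calS_{T_i}$) and the no-self-join routing of each relation into a single subtree are both indispensable. A secondary point to handle carefully is the distinction between the two independence claims, namely pairwise independence of the \emph{output} tuples of a view, which follows purely from disjoint lineages, and the \emph{internal} independence of the events combined by a product or by $\odot$ within one tuple's probability, which additionally needs the disjoint-relation and disjoint-lineage facts; only the former is asserted by the proposition, but the latter is what guarantees that the $P$-structure operations of Definition~\ref{def:p-structure} are applied to genuinely independent events.
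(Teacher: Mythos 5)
Your proof is correct and follows the same overall strategy as the paper's: structural induction on the view tree, with the base relations supplying independence and the view-forming operations (independent-project via $\odot$, independent-join as a product over tuples drawn from disjoint relations) preserving it. The substantive difference is the induction invariant. The paper carries only pairwise independence of the tuples in each view, whereas you carry the stronger lineage invariant $u[\calS_V]=t$ for all $u\in\mathrm{lin}(t)$, from which disjoint lineages --- and hence mutual, not merely pairwise, independence of any family of view tuples --- follow. This is a genuine refinement rather than a different route, and it buys something real: pairwise independence of a child view's tuples is not by itself enough to justify the independent-project step, since merging two groups of duplicates into two output tuples requires the two groups to be independent \emph{as groups}, i.e., to be Boolean functions of disjoint sets of base events. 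Your invariant supplies exactly that, and the canonical-VO fact $\calS\subseteq\calS_{T_i}$ that you isolate is precisely what lets the invariant propagate through the join-and-project at $V_X$ and through the marginalisation at $V'_X$. The paper's proof relies on the same disjointness (``disjoint sets of duplicates'', ``originate from disjoint sets of tuples'') but leaves it out of the stated induction hypothesis; your formulation closes that small gap.
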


\begin{proof}
    We show this using induction on the structure of the view tree built for $Q$.

    {\bf Base case:} By definition, the tuples in the input relations are pairwise independent. Updates to the input relations also preserve the independence of the tuples under both probabilistic set and bag semantics, as discussed in Section~\ref{sec:pdb-update-prob-semantics}.
    
    {\bf Inductive step:} We assume the tuples in the child views are pairwise independent and show this to be the case also for parent views. 
    
    For a query in CQAP$_0$, the view tree construction in Figure\@~\ref{fig:general_view_tree_construction} has three cases of inner views: (1) It either constructs a parent view that is a copy of the child view (this is not needed but keeps the algorithm and its analysis simpler); (2) Alternatively, it creates a parent view that is a projection of the child view; (3) The last case is a parent view that is the intersection of several child views. We analyse each of these cases next. 
    
    (1) This holds trivially by the induction hypothesis.

    (2) A projection may create duplicates, which can be merged into one common tuple whose probability is that of the disjunction of the duplicates. Since the tuples in the view can be partitioned into disjoint sets of duplicates, such that within each set and across sets the tuples are independent, the output tuples for the different sets are pairwise independent.

    (3) An intersection of several child views is a 1-1 join. Since the input relations are distinct (no self-join), the tuples across all the child views are pairwise independent. The intersection then yields the subset of these tuples that appear in all child views, so these tuples remain pairwise independent. The probability of each such tuple is the product of the probabilities of the tuple in each of the child views.
\end{proof}

Proposition~\ref{prop:pdb-cqap0-independence} essentially states that the view trees we create in case of queries in CQAP$_0$ correspond to so-called safe plans used for efficient probability computation~\cite{Dalvi:VLDB:04,Suciu:PDB:11}. The safe plans are however not enough for efficient maintenance, i.e., for constant update time and constant enumeration delay. Indeed, CQAP$_0$ further constraints the queries to be free-dominant and input-dominant.

\begin{thm}\label{th:pdb-maintenance}
    Given a  query $Q$ in CQAP$_0$ without repeating relation symbols and a probabilistic database $D$ of size $N$, then $Q$ can be maintained with $\bigO{N}$ preprocessing time, $\bigO{1}$ time for single-tuple updates, and $\bigO{1}$ enumeration delay, using the probabilistic set semantics 
    or the probabilistic expectation-variance semantics for updates. 
\end{thm}

\begin{proof}
We first prove the theorem under the probabilistic set semantics and then extend the proof to the probabilistic expectation-variance semantics.

Consider a  query $Q(\calO|\calI)$ in CQAP$_0$ without repeating relation symbols, its fracture $Q_{\dagger}(\calO|\calI_{\dagger})$, and a database $D$  of size $N$. Theorem~\ref{thm:dichotomy} states that $Q$ can be maintained with $\bigO{N}$ preprocessing time, 
$\bigO{1}$ update time, and  $\bigO{1}$ enumeration delay, in case the database $D$ consists of relations that map tuples to 
 multiplicities, as defined in our data model in Section \ref{sec:preliminaries}.
These complexities are achieved by our approach described in 
Sections~\ref{sec:preprocessing}-\ref{sec:updates}.   
In case $D$ is a probabilistic database, we can  achieve the same complexities using the same maintenance approach with two
twists, which we  explain next.

We distinguish the following cases when handling a single-tuple update $t \mapsto p$:
\begin{enumerate}
\item {\em Updating a base relation}. Assume that the probability of tuple $t$ being in the database is $p^{old}$, then we compute the probability $p^{new}$ of tuple $t$ being in the database after the update as discussed in Section~\ref{sec:pdb-update-prob-semantics}: 
\begin{align*}
p^{new} = 
\begin{cases}
    1 - (1 - p^{old}) \cdot (1 - p), & \text{ if the update is an insertion} \\
    p^{old} \cdot (1 -p), & \text { if the update is a deletion.}
\end{cases}
\end{align*}
Computing $p^{new}$ takes constant time, assuming that the basic arithmetic operations can be performed in constant time. We then propagate the information that the probability of tuple $t$ has changed from $p^{old}$ to $p^{new}$ further up in each affected view tree.

\item {\em Updating a view $V$}. 
Since the query $Q$ is in CQAP$_0$, each view represents the intersection of its child views, possibly followed by an aggregation that projects away variables. 
A single-tuple update coming from one child yields a change containing at most one tuple whose probability is the product of the probabilities of the joined tuples. 

While in the relational case, the aggregation amounts to summing up the multiplicities of duplicates, in the probabilistic case the aggregation amounts to computing the probability of the disjunction of independent events corresponding to $k$ duplicate tuples, as per Proposition~\ref{prop:pdb-cqap0-independence}. 

Consider $k$ such independent events with probabilities $p_1, \ldots p_k$. Their joint probability is $p=1-\prod_{i\in[k]}(1-p_i)$. 
Our goal is to maintain $p$ whenever any $p_i$ changes. If any of the duplicates is certain (i.e., $p_i=1$), then the joint probability becomes 1, effectively disregarding all other probabilities. This does not encode how many of these duplicates are certain. To ensure we maintain the correct joint probability $p$ under changes to any of the probabilities $p_1,\ldots,p_k$, we associate the tuple $t$ in the view $V$ resulting from the aggregation of the $k$ duplicates with a pair $(q, m)$, where $q$ is the product term $\prod_{i\in[k]: p_i < 1} (1-p_i)$, which involves only the probabilities of the uncertain duplicates, and $m$ is the number of the certain duplicates. Then, the joint probability $p$ is $1$ when $m > 0$, i.e., when there is at least one certain duplicate, and $1-q$ when $m=0$, i.e., when there is no certain duplicate and then $p = 1 - q = 1 - \prod_{i\in[k]} (1-p_i)$.

Assume now that the probability $p_i$ changes from $p^{old}$ to $p^{new}$. We compute for the tuple $t$ the new pair $(q^{new}, m^{new})$ from the current pair $(q^{old}, m^{old})$ as follows:
$$
(q^{new}, m^{new}) = 
\begin{cases}
    (q^{old}, m^{old}) & \text{if } p^{old} = 1 \land p^{new} = 1 \\[3pt]
    (q^{old}\cdot(1-p^{new}), m^{old} - 1) & \text{if } p^{old} = 1 \land p^{new} < 1 \\[3pt]
    (\frac{q^{old}}{1-p^{old}}, m^{old} + 1) & \text{if } p^{old} < 1 \land p^{new} = 1 \\[3pt]
    (\frac{q^{old}}{1-p^{old}}\cdot(1-p^{new}), m^{old}) & \text{if } p^{old} < 1 \land p^{new} < 1
\end{cases}
$$
In each case computing $(q^{new}, m^{new})$ takes constant time. From $(q^{old}, m^{old})$ and $(q^{new}, m^{new})$, computing the aggregated probability before and after the update, needed for subsequent propagation in the view tree, also takes constant time. 
Overall, the overhead added to the maintenance cost by the probability computation is constant. 
The correctness of the probabilities assigned to tuples in views follows from Proposition~\ref{prop:pdb-cqap0-independence}.
\end{enumerate}

Our enumeration procedure from Section~\ref{sec:enumeration} reports for any input tuple,
 all tuples in the query result with constant delay. 
In case of probabilistic databases, we  need to report for each output tuple also its probability. 
We explain in the following how to compute these probabilities. 
Assume that our view trees constructed in the preprocessing stage follow an access-top VO $\omega$ for 
 $Q_{\dagger}(\calO|\calI_{\dagger})$ that consists of the trees $\omega_1, \ldots , \omega_n$. 
 Let $T_1 = \tau(\omega_1), \ldots, T_n = \tau(\omega_n)$ be the view trees constructed using the 
 procedure $\tau$ in Figure~\ref{fig:general_view_tree_construction}. 
 For each $j \in [n]$, let $Q_j(\calO_j \mid \calI_j)$ with 
 $\calO_j = \calO \cap \vars(\omega_j)$ and $\calI_j = \calI_{\dagger} \cap \vars(\omega_j)$ be the CQAP that joins 
 the atoms appearing at the leaves of $T_j$. In Section~\ref{sec:enumeration}, we explain how for any $j \in [n]$
 and $\inst{i}_j$ over $\calI_j$, the tuples in $Q_j(\calO_j | \inst{i}_j)$ can be enumerated with constant 
 delay using the view tree $T_j$. For each such tuple $\inst{t}_j \in Q_j(\calO_j | \inst{i}_j)$, we can traverse $T_j$ to compute its probability as follows.     
We first check whether the schema of $\inst{t}_j$ is equal to the schema of the root view $V$. If yes,
 the probability of $\inst{t}_j$ is $V(\inst{t}_j)$. Otherwise, let $\hat{T}_1, \ldots , \hat{T}_k$ be the child trees
 of the root view $V$, and let $\hat{\inst{t}}_{j}^1, \ldots, \hat{\inst{t}}_{j}^k$ be the restrictions of $\inst{t}_j$ onto the variables of $\hat{T}_1, \ldots , 
 \hat{T}_k$, respectively. 
    We recursively compute the probabilities $p_1, \ldots, p_k$ of the tuples 
    $\hat{\inst{t}}^1_j, \ldots , \hat{\inst{t}}^k_j$, and set the probability 
    of $\inst{t}_j$ to $\Pi_{i \in [k]} p_i$.

Consider now a tuple $\inst{i}$ over $\calI$. The set of tuples in $Q(\calO | \inst{i})$ is equal to the Cartesian product  
$\times_{j \in [n]} Q_j(\calO_j | \inst{i}_j)$,
where  $\inst{i}_j[X'] = \inst{i}[X]$ if $X = X'$ or $X$ is replaced by $X'$ when constructing the fracture of $Q$.  Section~\ref{sec:enumeration} explains how to enumerate the tuples in this Cartesian product with constant delay, given that the tuples in each $Q_j(\calO_j | \inst{i}_j)$ can be enumerated with constant delay. 
It remains to explain how to obtain the probability of a tuple $\inst{t}$ that is the concatenation of tuples 
$\inst{t}_1 \in Q_1(\calO_j | \inst{i}_j), \ldots ,\inst{t}_n \in Q_n(\calO_n | \inst{i}_n)$.
Since the query is without self-joins, the constructed view trees are over disjoint sets of relations, so the tuples in the views of one view tree are independent of the tuples in the views of another view tree. Thus, the probability of $\inst{t}$ is the product of the probabilities of $\inst{t}_1, \ldots , \inst{t}_n$.

The maintenance approach for queries in CQAP$_0$ extends to updates under the probabilistic expectation-variance semantics.
Each tuple $t$ in the database is associated with a pair representing the expectation and variance of the tuple's multiplicity.
An update of a tuple $t$ with probability $p$ changes the expected multiplicity by $p$ for an insertion and by $-p$ for a deletion, and increases the variance by $p(1-p)$.
Updating a base relation involves summing up the expectation-variance pairs for $t$ using the $\oplus$ operation from Definition~\ref{def:expprob}.
The correctness of this approach follows from the linearity of expectation and the linearity of variance given that all updates are independent probabilistic events, see Section~\ref{sec:probabilistic-semantics}.

We then propagate the new expectation-variance pair further up in each affected view tree. A single-tuple update from a child results in at most one affected tuple, whose expectation-variance pair is computed using the sum $\oplus$ and product $\otimes$ operations from Definition~\ref{def:expprob} in case of projection and respectively join. These operations take constant time as they only combine two expectation-variance pairs. Thus, updates are propagated in each affected view tree in constant time.
The enumeration procedure under the expectation-variance semantics follows a similar approach to that used in set semantics.
\nop{
    We then propagate the new expectation-variance pair further up in each affected view tree. Following similar reasoning as above, a single-tuple update from a child results in at most one affected tuple, whose expectation-variance pair is computed as the product of the expectation-variance pairs of the joined tuples using the $\odot$ operation from Definition~\ref{def:expprob}. This product computation takes constant time. Thus, updates are propagated in each affected view tree in constant time.
The enumeration procedure under the expectation-variance semantics follows a similar approach to that used in set semantics.
}
\end{proof}

The same maintenance approach can be applied to queries in CQAP$_0$ over probabilistic databases under the probabilistic bag semantics for updates. Each tuple is mapped to a probability distribution from a set $S$ of probability distributions, and the constructed views are evaluated using two binary operation, addition and multiplication, defined over $S$; Appendix~G of the technical report provides further details of these operations~\cite{access_pattern_arxiv}. 

Theorem~\ref{th:pdb-maintenance} can be extended to the probabilistic bag semantics, albeit at a higher cost for maintenance and enumeration. When a tuple is updated, the support of its associated probability distribution grows, making the distribution's size dependent on the database size. Consequently, both addition and multiplication operations over distributions require non-constant time, leading to non-constant time for update propagation and enumeration.


\nop{
}



\section{Related Work}
\label{sec:relatedwork}

Our work is the first to investigate the dynamic evaluation for queries with access patterns.

\paragraph{Free Access Patterns} Our notion of queries with free access patterns corresponds to parameterized queries~\cite{AbiteboulHV95}. These queries have selection conditions that set variables to parameter values to be supplied at query time.
Prior work closest in spirit to ours investigates the space-delay trade-off for the static evaluation of full conjunctive queries with free access patterns~\cite{Deep:2018}. It constructs a succinct representation of the query output, from which the tuples that conform with value bindings of the input variables can be enumerated. \nop{It relies on a hypertree decomposition of the query where the input variables form a connected subtree.} It does not support queries with projection nor dynamic evaluation. Follow-up work considers the static evaluation for Boolean conjunctive queries with access patterns\nop{, where every free variable is fixed to a constant at query time}~\cite{Deep:arxiv:21}. 
Further works on queries with access patterns~\cite{Florescu:SIGMOD:99, Yerneni:ICDT:99, Deutsch:TCS:2007, Benedikt:VLDB:15, Benedikt:PODS:14} consider the setting where {\em input} relations have input and output variables and there is no restriction on whether they are bound or free; also, a variable may be input in a relation and output in another. This poses the challenge of whether the query can be answered under specific access restrictions~\cite{Nash:FOAccess:04, Nash:UCQAccess:04, Li:ICDT:2001}.

\paragraph{Dynamic evaluation}
Our work generalises the dichotomy for $q$-hierarchical queries under updates~\cite{BerkholzKS17} and the complexity trade-offs for  queries under updates~\cite{Kara:ICDT:19,Kara:TODS:2020,KaraNOZ2020}. The IVM approaches Dynamic Yannakakis~\cite{Idris:dynamic:SIGMOD:2017} and F-IVM~\cite{Nikolic:SIGMOD:18}, which is implemented on top of DBToaster~\cite{DBT:VLDBJ:2014}, achieve (i) linear-time  preprocessing, linear-time single-tuple updates, and constant enumeration delay for free-connex acyclic queries; and (ii) linear-time preprocessing, constant-time single-tuple updates, and constant enumeration delay for $q$-hierarchical queries. Theorem~\ref{thm:general} recovers these results by noting that the static and dynamic widths are: $1$ and respectively in $\{0,1\}$ for free-connex acyclic queries and $1$ and respectively $0$ for $q$-hierarchical queries.
We refer the reader to a comprehensive comparison~\cite{Trade_Offs_LMCS23} of dynamic query evaluation techniques and how they are recovered by the trade-off~~\cite{KaraNOZ2020}  extended in our work. 

Our CQAP$_0$ dichotomy strictly generalises the one for $q$-hierarchical queries~\cite{BerkholzKS17}: The set of $q$-hierarchical queries is a strict subset of CQAP$_0$, while there are hard patterns of non-CQAP$_0$ beyond those for non-$q$-hierarchical queries. 

There are key technical differences between the prior framework for dynamic evaluation trade-off~\cite{KaraNOZ2020} and ours: different data partitioning; new modular construction of view trees; access-top variable orders; new iterators for view trees modelled on any variable order. We create a set of variable orders that represent heavy/light evaluation strategies and then map them to view trees. One advantage is a simpler complexity analysis for the views, since the variables orders and their view trees share the same width measures.

\paragraph{Cutset optimisations} Cutset conditioning~\cite{DBLP:books/daglib/0066829} and cutset sampling~\cite{DBLP:journals/jair/BidyukD07} are used for efficient exact and approximate inference in Bayesian networks. The idea is to \emph{choose} a cutset, which is a subset of variables, such that conditioning on the variables in the cutset, i.e., instantiating them with possible values, yields a network with a small treewidth that allows exact inference. The set of input variables of a CQAP can be seen as a \emph{given} cutset, while fixing the input variables to given values is conditioning. Query fracturing, as introduced in our work, is a query rewriting technique that does not have a counterpart in cutset optimisations in AI.

\section{Conclusion and Outlook}

This paper introduces a fully dynamic evaluation approach for conjunctive queries with free access patterns. It gives a syntactic characterisation of those queries that admit constant-time update and delay and further investigates the trade-off between preprocessing time, update time, and enumeration delay for such queries.

The work presented in this article can be extended naturally in a number of ways.

\paragraph{Adaptive maintenance.} The computational complexity of static query evaluation can be asymptotically improved by combining several execution strategies (query plans, hypertree decompositions, variable orders, or view trees) for the same query, where each strategy is adapted to a different part of the data. Such adaptive strategies can also benefit dynamic query evaluation. Our optimality results for CQAP$_0$ and CQAP$_1$ rely in fact on maintaining several view trees for one query, each view tree adapted to a different (light or heavy) part of the data. Extending our optimality results to queries beyond CQAP$_0$ and CQAP$_1$ is likely to require data partitioning and adaptive maintenance. Yet the view trees used by our approach to maintain one query can be derived from {\em one} variable order. Using several variable orders may lead to improved complexity for queries beyond CQAP$_0$ and CQAP$_1$. There are several technical challenges to overcome when translating existing adaptive approaches for static query evaluation, e.g.,~\cite{Khamis:PODS:17,ZhaoDK23,Xiao:OutputSensitiveYannakakis:2024}, to dynamic query evaluation, including: Can the cost of regular rebalancing of heavy-light partitions be amortised over the sequence of updates so that it does not increase the single-tuple update time? Can the multiplicities of each tuple in the views and input relations be maintained as efficiently as the maintenance of the tuple itself? To appreciate the difficulty of addressing the latter question, note the blow-up in the number of variable orders needed by the PANDA adaptive strategy for static query evaluation~\cite{Khamis:PODS:17}: This is exponential in the query size and poly-logarithmic in the data size, so the poly-logarithmic factor in the data size carries over to the enumeration delay. A possible solution is to consider a restriction of PANDA~\cite{FAQAI:TODS:2020}, which ensures that different variable orders yield disjoint sets of query output tuples, albeit with a higher complexity than PANDA.

\paragraph{Beyond hierarchical queries.} An open research question is the generalisation of our maintenance trade-off for {\em all} CQAPs as well as of the optimality for {\em all} CQAPs. 
\nop{beyond the CQAP$_0$ and CQAP$_1$ classes. }
The recent trade-off between preprocessing time and enumeration delay for $\alpha$-acyclic conjunctive queries in the static setting~\cite{Tradeoff:CSL:2023} can be extended to also consider the update time and also to apply to arbitrary conjunctive queries and CQAPs.

\paragraph{Support for aggregates.} Our approach requires the maintenance of the multiplicities (the number of derivations or counts) of tuples in each view and input relation. Section~\ref{sec:probabilistic} also shows how to maintain tuple probabilities in case of queries in the CQAP$_0$ class. More generally, our approach can support any aggregate expressible using the sum and product operations of a ring, as detailed in the F-IVM system~\cite{FIVM:VLDBJ:2024}.

\paragraph{Beyond probabilistic databases.} Our maintenance approach introduced in Section~\ref{sec:probabilistic} can be extended beyond probabilistic databases. A special case of probabilistic databases is when all probabilities are $\frac{1}{2}$, so the probability distributions are uniform. This corresponds to the {\em model counting} problem~\cite{DBLP:series/faia/GomesSS21}: Given a Boolean query, in case of arbitrary probability distributions we compute the probability of the query to be true, whereas in case of uniform probability distributions we compute the fraction of those possible worlds where the query is true.
    An immediate corollary of the dichotomy for conjunctive queries without repeating relation symbols in probabilistic databases~\cite{Dalvi:VLDB:04,Suciu:PDB:11} is that model counting can be computed efficiently for hierarchical queries. Our work complements this result in the static setting with a similar result in the dynamic setting: A corollary of Theorem~\ref{th:pdb-maintenance} is that model counting for CQAP$_0$ can be maintained with linear preprocessing time, constant update time, and constant enumeration delay. This also immediately implies that further tasks, which can be expressed using model counting, can immediately benefit from the efficient maintenance approach put forward in our work. Prime examples are the computation of the Shapley and Banzhaf values of database tuples, whose computation in relational databases is polynomial-time equivalent to model counting~\cite{KOS:PODS:2024} and is in particular tractable for hierarchical queries~\cite{LivshitsBKS21,DeutchFKM22,DeutchFKM22,DBLP:journals/corr/abs-2308-05588}.

\section*{Acknowledgment}
\noindent 
The authors would like to thank the anonymous reviewers for their valuable suggestions, which have significantly improved this article.
The authors would also like to acknowledge the UZH Global Strategy and Partnerships Funding Scheme and EPSRC grant EP/T022124/1.
This project has received funding from the European Union’s Horizon 2020 research and innovation programme under grant agreement No 682588.

\bibliographystyle{alphaurl}
\bibliography{bibliography}

\end{document}